\def \ring {{\cal R}}
\def \ord {\, \mbox{ord}}
\def \Ord {\, \mbox{Ord}}
\def \E{\rm E}
\newcommand{\vt}[2]{\rm _{#1}V_{#2}}
\newcommand{\vpt}[2]{\rm _{#1}V'_{#2}}
\newcommand{\tot}[2]{\rm _{#1}T_{#2}}
\newcommand{\bt}[2]{\rm _{#1}B_{#2}}
\def\raph{{\, \xmapsto{\ \phi\ }\,}}
\def\C{{\mathbb C}}
\def\Z{{\mathbb Z}}
\def\N{{\mathbb N}}
\def\A{{\mathfrak{A}}}
\def\hu{{\hat{u}}}
\def\cL{{\cal L}}
\def\cM{{\cal M}}
\def\cR{{\cal R}}
\def\cC{{\cal C}}
\def\cT{{\cal T}}
\def\D{{\cal D}}
\def \i{{\rm i}}
\def \h{\psi}
\def\s3{\sqrt{3}}
\newtheorem{Def}{Definition}
\newtheorem{The}{Theorem}
\newtheorem{Pro}{Proposition}
\newtheorem{Ex}{Example}
\title{Integrability of Nonabelian Differential-Difference Equations: the Symmetry Approach }
\author{Vladimir Novikov$^{a}$, Jing Ping Wang$^{b}$\\
{\small$^{a}$ School of Mathematics, Loughborough University, Loughborough, UK,}\\
{\small$^b$ School of Mathematics, Statistics and Actuarial Science, University of Kent, Canterbury, UK.}}
\date{}
\begin{document}

\maketitle

\begin{abstract}
We propose a novel approach to tackle integrability problem for evolutionary differential-difference equations (D$\Delta$Es) on free associative algebras, also referred to as nonabelian D$\Delta$Es. This approach enables us to derive necessary integrability conditions, determine the integrability of a given equation, and make progress in the classification of integrable nonabelian D$\Delta$Es. This work involves establishing symbolic representations for the nonabelian difference algebra, difference operators, and formal series, as well as introducing a novel quasi-local extension for the algebra of formal series within the context of symbolic representations. Applying this formalism, we solve the classification problem of integrable skew-symmetric quasi-linear nonabelian equations of orders $(-1,1)$, $(-2,2)$, and $(-3,3)$, consequently revealing some new equations in the process.
\end{abstract}

\section{Introduction}
In the recent paper \cite{MNWDiff}, Mikhailov, Novikov and Wang presented the perturbative symmetry approach for evolutionary differential-difference equations (D$\Delta$Es), which is based on establishing a symbolic representation for the difference polynomial ring and introducing its quasi-local extension. This approach enabled the derivation of necessary integrability conditions for  evolutionary D$\Delta$Es of arbitrary order and consequently allowed to solve some classification problems of integrable equations for any given order.
In this work, the existence of higher symmetries of a D$\Delta$Es, or more precisely, an infinite hierarchy of higher symmetries, is taken as the criterion for its integrability.

This paper can be considered as an extension of this research, transitioning from the commutative (abelian) case to the noncommutative (nonabelian) case.
Our focus lies in exploring the integrability conditions for evolutionary D$\Delta$Es when field variables take values in an associative, but noncommutative algebra.
Typical examples of associative algebras include matrix or operator algebras. Matrix formulations of integrable equations trace back to the very beginning of integrable systems theory. The first matrix integrable equation, known as the matrix KdV equation, was introduced by P. Lax in \cite{Lax}. One striking feature observed early on in integrable matrix equations is that their algebraic structures  such as symmetries, conservation laws, and Lax representations are independent of individual matrix entries and  the sizes of matrices. Instead, they are expressed  as polynomials or formal series in field variables. Consequently, field variables can be treated as elements of an associative, noncommutative algebra. When considering nonabelian D$\Delta$Es, we exclusively employ the intrinsic algebraic operations of multiplication, addition and
scalar multiplication, irrespective of any specific concrete realisation.

The systematic classification of integrable partial differential equations (PDEs) with field variables in an associative algebra was initiated by
Olver and Sokolov \cite{OS, OS2}. They provided a comprehensive list of integrable scalar evolution PDEs of this type, characterised by higher order symmetries. The completeness of this classification was demonstrated in \cite{OW}. In 2000, Mikhailov and Sokolov \cite{MS} successfully developed the theory of integrable ordinary differential equations on associative algebras. They solved a number of classification problems using the existence of hierarchies of first integrals and/or symmetries as a criterion for integrability.
In a recent contribution \cite{AS}, Adler and Sokolov extended classifications to nonabelian generalisations of integrable systems of nonlinear Schr{\"o}dinger and Boussinesq types. Their approach relied on both the existence of higher conservation laws and higher symmetries, which are determined by their corresponding abelian integrable systems.
Additionally, the classification of non-abelian Painlev{\'e} type systems was addressed in \cite{BS23}, introducing non-abelian constants into the equations.

Nonabelian integrable D$\Delta$Es have historically emerged as direct matrix generalisations and discretisations of matrix integrable PDEs. The matrix Volterra chain was initially introduced in \cite{MW}. Further examples of  integrable D$\Delta$Es on associative algebras, including the nonabelian Bogoyavlensky equation, were obtained in \cite{BG}. The matrix Toda lattice initially originated from a discrete version of the principal chiral field model \cite{bmrl80, mik81} and has recently appeared in the study of Matrix-valued Hermite polynomials \cite{b}. The algebraic structures underlying these equations, such as recursion operators and Hamiltonian structures, have been extensively investigated in \cite{CW,CasatiWang}.

The symmetry approach based on a concept of formal recursion operator has been formulated and developed in works of Shabat and co-authors
(see for example review papers \cite{SokShab,MikShabYam, MikShabSok}). It has become one of powerful tools for addressing the integrability problem.
Formal recursion operator carries information
about integrability, remaining robust even in the presence of gaps within the infinite hierarchy of symmetries or conservation laws.
There have been numerous exhaustive classification results of abelian integrable PDEs including nonlocal and/or non-evolutionary equations such as the Benjamin–Ono equation and the Camassa–Holm equation. The symmetry approach has recently been extended to D$\Delta$Es in \cite{MNWDiff} following the introduction of symbolic representation of the difference polynomial ring and its proper ring extension. However, the exploration of this approach in the realm of nonabelian integrable systems remains uncharted territory.

In this paper, we develop the perturbative symmetry approach, built upon the recent work in \cite{MNWDiff}, to investigate the integrability problem concerning evolutionary D$\Delta$Es
\begin{equation}
\label{begeq}
u_t=F(u_p,\ldots,u_q), \quad p\leq q\in \mathbb{Z}
\end{equation}
of order $(p, q)$, where $u=u(n,t)$ is a function of a discrete variable $n\in\Z$ and a continuous variable $t\in\C$, taking values in an associative algebra. We adopt standard notation:
$$
u_k=S^ku(n,t)=u(n+k,t),\quad u_t=\partial_t u,
$$
and $S$ is the shift operator. One main challenge from the abelian to nonabelian case is to  distinguish between the noncommutative multiplication on the
left and on the right.
This requires building a symbolic representation, taking into account of the noncommutative multiplication,  for nonabelian difference polynomial ring, a ring of difference operators, quasi-local extensions of rings and formal pseudo-difference series with quasi-local coefficients.

Similar to the abelian scenario in \cite{MNWDiff},  the integrability conditions for \eqref{begeq} are given in terms of the coefficients of a canonical formal recursion operator. We prove that if an equation possesses an infinite dimensional algebra of symmetries, irrespective of their orders, then it possesses a unique canonical formal recursion operator. The coefficients of the canonical formal recursion operator can be explicitly determined  in terms of the symbolic representation of the equation. Moreover, if the equation is integrable, these coefficients must be quasi-local. We apply the developed formalism to classify integrable quasi-linear skew-symmetric equations (\ref{begeq}) of orders $(-1,1), (-2,2)$ and $(-3,3)$.

We start the paper with the fundamental algebraic framework necessary for investigating nonabelian evolutionary D$\Delta$Es. In Section \ref{prem} we introduce essential definitions and concepts of nonabelian difference algebra. This includes the grading of the algebra, derivations, evolutionary derivations, as well as algebras of local difference operators and formal series.
We then define algebras of symmetries and approximate symmetries as well as the notion of integrability and approximate integrability in Section \ref{symap}.

The symbolic representation serves as the primary computational tool. In Section \ref{symbrep} we define the symbolic representation of the nonabelian difference algebra, difference operators and formal series. We formulate criteria of approximate integrability in the symbolic representation. For a given evolutionary equation we provide a recursive formula for the coefficients of its symmetries (Theorem \ref{thesym}). We show that these coefficients are uniquely determined by the linear part of the symmetry. This result can be used to determine the existence of
fixed-order symmetries as well as to derive the integrability conditions if the orders of symmetries are known.

The main result on universal integrability conditions for evolutionary D$\Delta$Es on free associative algebras is Theorem \ref{themain} presented in Section \ref{intconds}.
The universal integrability conditions are the conditions on the coefficients of the canonical formal recursion operator: the coefficients must be quasi-local.
 The proof is based on computation of fractional powers of formal series with local coefficients. It is shown that coefficients of fractional powers of local formal series are quasi-local, i.e. belong to the appropriate algebraic extension of the algebra of formal series.
Furthermore, for a given equation these coefficients are recursively determined in terms of the symbolic representation of the equation (Theorem \ref{fropthe}).

In Section \ref{classres} we apply the developed theory to classify integrable quasi-linear skew-symmetric equations (\ref{begeq}) of orders $(-1,1), (-2,2)$ and $(-3,3)$. We list only equations which do not possess symmetries of lower orders. We prove integrability of each of these equations either by providing their Lax representations or by giving an explicit Miura type transformation to a known integrable equation. We present below two examples of new equations, to the best of our knowledge, resulting from the classification:
\begin{equation}
\label{BGnewint}
u_t=uu_1\cdots u_n(u+\alpha)-(u+\alpha)u_{-n}\cdots u_1u,\quad n\in\N
\end{equation}
where $\alpha$ is constant, and
\begin{equation}
\label{eq30m}
u_t=(u  u_{-1} + 1)  (u  u_{1} + 1)  u_{2} -
 u_{-2}  (u_{-1}  u + 1)  (u_{1}  u + 1).
\end{equation}
Equation (\ref{BGnewint}) possesses the Lax representation $L_t=[A,L]$,
$$
L=(u+\alpha)S^{-n}+\lambda uS,\quad A=uu_1\cdots u_n (1+\lambda S^{n+1}),
$$
where $\alpha$ is an essential constant and cannot be removed via invertible transformations.
The reduction $\alpha=0$ of this equation appears in \cite{BG}. 

Equation (\ref{eq30m}) possesses the Lax representation
$$
L=P^{-1}Q,\quad P=(1-S^{-2})u_1^{-1},\quad Q=\left((uu_1+1)(u_2u_1+1)S^2-1\right)u_{-1}^{-1}
$$
$$
A=(u_1u+1)(u_1u_2+1)S^2-u_{-1}\left((uu_1+1)(u_2u_1+1)-1\right)u_1^{-1}-S^{-2}.
$$
The abelian version of this equation was found and studied in \cite{GMY14}.

We conclude the paper by provide a brief summary of our results and concluding remarks on the implications and potential directions for future studies.

\section{Derivations, difference operators and formal series}\label{prem}
In this section, we introduce basic definitions and notations of objects and constructions for nonabelian difference algebras required for the aims of this paper. The detailed abelian analogue can be found, for example, in \cite{MNWDiff}. Further details regarding Hamiltonian structures of nonabelian D$\Delta$Es, as well as their connection to the problem of quantization of integrable systems, can be found in \cite{CW, CMW}.
\subsection{Difference algebra}\label{sec21}
Let $\A=\langle u_n,\,n\in\Z\rangle$ be a free associative unital algebra over $\C$ generated by an infinite set of non-commuting variables $u_n,\,n\in\Z$, and a unit element $e$, defined as
$$
e\cdot f(u_p,\ldots,u_q)=f(u_p,\ldots,u_q)\cdot e=f(u_p,\ldots,u_q),\quad f(u_p,\ldots,u_q)\in\A.
$$
Elements of algebra are polynomials (over $\C$) in variables $u_n,\,n\in\Z$. We use the notation $\alpha\cdot e\equiv\alpha,\,\,\alpha\in\C,$
and often write $u$ instead of $u_0$ without causing a confusion.

We define on $\A$ an automorphism $S:\,\A\to\A$ by the action
$$
S(f(u_p,\ldots,u_q))=f(u_{p+1},\ldots,u_{q+1}),\quad S(e)=e,\quad f(u_p,\ldots,u_q)\in\A,
$$
and call the automorphism $S$ the {\it shift operator}. The algebra $\A$ together with the shift operator $S$  is a difference algebra.

Let $T$ and $I$ denote the transposition and involution automorphisms respectively, defined as follows:
\begin{equation}\label{trans}
T(u_n)=u_n,\quad T(ab)=T(b)T(a),\quad T(e)=e,\quad a,b,\in\A
\end{equation}
and
\begin{equation*}
I(u_n)=u_{-n},\quad I(ab)=I(a)I(b),\quad I(e)=e,\quad a,b,\in\A.
\end{equation*}

Thus their composition $\cT=TI$ yields:
\begin{equation}
\label{taut}
\cT(u_n)=u_{-n},\quad \cT(ab)=\cT(b)\cT(a),\quad \cT(e)=e,\quad a,b,\in\A.
\end{equation}
The automorphism $\cT$ induces a $\Z_2$ grading on $\A$:
$$
\A=\A^0\oplus \A^1,\quad \A^0\cdot \A^0=\A^0,\quad \A^0\cdot \A^1=\A^1\cdot \A^0=\A^1,\quad \A^1\cdot \A^1=\A^0,
$$
where
$$
\A^i=\{a\in\A\,|\, \cT(a)=(-1)^i a\},\quad i=0,1.
$$
We refer to elements of $\A^0$ as {\it symmetric} and elements of $\A^1$ as {\it skew-symmetric}.



\subsection{Derivations}
A derivation $\D$ of the algebra $\A$ is a $\C$-linear map satisfying the Leibnitz's rule
$$
\D(\alpha f+\beta g)=\alpha\D(f)+\beta\D(g),\quad \D(fg)=\D(f)g+f\D(g),\quad f,g\in\A,\quad\alpha,\beta\in\C.
$$
For any derivation $\D$ of $\A$ we have $\D(e)=0$ for the derivation of the unit in $\A$.
Each derivation can be uniquely defined by its action on the generators $u_n$ of $\A$.

For any two derivations $\D,\D'$ their commutator $[\D,\D']=\D\circ\D'-\D'\circ\D$  is also a derivation on $\A$. Any $\C$-linear combination of derivations is a derivation, and any triple of derivations $\D,\D',\D''$ satisfies the Jacobi identity
$$
[\D,[\D',\D'']]+[\D'',[\D,\D']]+[\D',[\D'',\D]]=0,
$$
and thus the set of derivations of $\A$ is a Lie algebra.

Define derivations $X_k,\,k\in\Z$ as
$$
X_k(u_j)=\delta_{kj}u_k,
$$
and let
$$
X=\sum_{k\in\Z}X_{k}
$$
The difference algebra $\A$ has a natural grading
\begin{equation}
\label{agr}
\A=\bigoplus_{p\ge 0}\A_p,\quad \A_p=\{f\in\A\,|\, X(f)=pf\}.
\end{equation}
Every element $f\in \A$ can be uniquely represented as a sum of homogeneous components $f=\sum_{p\ge 0}f_p,\,\,f_p\in\A_p$ (some components may be equal to zero).
The subspace $\A_0$ consists of elements of the form $\alpha\cdot e \equiv\alpha,\,\,\alpha\in\C$, while the subspaces $\A_1,\A_2,\ldots$ consist of linear, quadratic etc polynomials in generators $u_k$.
The projection $\pi_k:\A\to\A_k$ selecting $k$-th homogeneous component is defined as
\begin{equation}
\label{pik}
\pi_k(f)=\pi_k(\sum_{p\ge 0}f_p)=f_k.
\end{equation}

An important class of derivations of a difference algebra $\A$ are {\it evolutionary} derivations:
\begin{Def}
A derivation $\D$ is called {\it evolutionary} if it commutes with the shift operator $S$. 
\end{Def}
An evolutionary derivation is completely defined by its action on the generator $u$, that is
$$
\D(u)=a,\quad \D(u_k)=S^k(a),\quad a\in\A.
$$
The element $a\in\A$ is called the characteristic of an evolutionary derivation. We denote by $\D_a$ an evolutionary derivation with the characteristic $a$.

A commutator of two evolutionary derivations $\D_a$ and $\D_b$ is also an evolutionary derivation $\D_c=[\D_a,\D_b]$ with the characteristic $c=\D_a(b)-\D_b(a)$. Evolutionary derivations form a Lie subalgebra of the Lie algebra of derivations on $\A$.

The notion of an evolutionary derivation allows to introduce a Lie algebra structure on $\A$: for every $a, b\in\A$, we define a Lie bracket as
$$
[a,b]=\D_a(b)-\D_b(a).
$$
It can be rewritten in terms of the Fr\'echet derivatives.
\begin{Def} A Fr\'echet derivative of an element $f\in\A$ is defined as
\begin{equation}\label{frech}
f_*(a)=\D_a(f),\quad\forall a\in\A.
\end{equation}
\end{Def}
Thus, the Lie bracket becomes
\begin{equation}\label{liebracket}
[a,b]=b_*(a)-a_*(b),\quad a,b\in\A.
\end{equation}
It is easy to see that the Lie algebra $\A$ is graded with respect to the natural grading (\ref{agr}):
$$
[\A_n,\A_m]\subset\A_{n+m-1}.
$$

\subsection{Difference operators and formal difference series}\label{Sec23}




In this section, we extend the notion of difference operators/formal series (for the abelian case, refer to \cite{MNWDiff}) to the non-commutative case and introduce the algebra of formal series $\A((S))$.

Let $\cL_a$ and $\cR_a$ be the left and right multiplication operators on the algebra $\A$, defined as:
$$
\cL_a(f)=af\quad \mbox{and} \quad \cR_a(f)=fa,
$$
where $a, f\in\A.$ These operators satisfy the following relations:
$$
\cL_a\cR_b=\cR_b \cL_a,\quad \cL_a\cL_b=\cL_{ab}\quad \mbox{and} \quad\cR_a\cR_b=\cR_{ba}
$$
for all $a, b \in \A$. We denote by $\cM$ the algebra of the left and right multiplication operators.
This algebra inherits the natural grading (\ref{agr}) on $\A$:
\begin{equation}
\label{mgr}
\cM=\bigoplus_{p,q\ge 0}\cM_{p,q},\quad \cM_{p,q}\cM_{p',q'}\subset\cM_{p+p',q+q'},
\end{equation}
where each element in the $\C$-linear subspace $\cM_{p,q}$ is a finite linear combination of terms of the form
$\cL_{a}\cR_{b}, \, a\in\A_p, b\in\A_q$. We represent it as
\begin{eqnarray}\label{mpq}
\cM_{p,q}=\Big\{\sum_{\gamma=1}^k\cL_{a^{(\gamma)}}\cR_{b^{(\gamma)}}\,|\,a^{(\gamma)}\in\A_p, b^{(\gamma)}\in\A_q, k\in \N\Big\}.
\end{eqnarray}
\begin{Def}
A {\it local difference operator} of order $\ord\, B:=(p,q)$ is of the form
$$
B=\sum_{i=p}^qb^{(i)}S^i,\quad b^{(i)}\in\cM,\quad b^{(p)}\neq 0, \ b^{(q)}\neq 0, \quad p,q\in\Z, \quad p\leq q
$$
and its total order is  $\Ord\, B:=p-q$.
A {\it local formal difference series} $B$ is defined as
\begin{equation}\label{series}
B=\sum_{i\le p} b^{(i)}S^i, \quad b^{(i)}= \sum_{j,k\ge 0} b_{jk}^{(i)} \in \cM,
\quad b_{jk}^{(i)}\in\cM_{j,k} \quad b^{(p)}\neq 0,  \quad p\in\Z.
\end{equation}
\end{Def}

The action of a local difference operator on $\A$ is well-defined if we set
$$
\cL_f\cR_gS^n(h)=fS^n(h)g,\quad f,g,h\in\A, \ n\in \Z.
$$
The action of a formal series on $\A$ is generally not well-defined. 

The Fr\'echet derivative of any element in $\A$, as defined by \eqref{frech}, serves as an example of a local difference operator. For a non-constant element $f\in\A$, we define its order and total order as those for $f_*$.
\begin{Ex} Consider the nonabelian Volterra equation
\begin{equation}\label{volt}
u_t=uu_1-u_{-1}u=f.
\end{equation}
Note that
$$
f_*=\cL_u S+\cR_{u_1}-\cL_{u_{-1}}-\cR_u S^{-1} .
$$
Thus the nonabelian Volterra equation has the order $(-1,1)$ and the total order $2$.
\end{Ex}
Local formal series on $\A$ form a unital associative non-commutative algebra $\A((S))$ with the standard addition
and the multiplication rule (defined on monomials) given by
$$
\cL_f\cR_gS^n\cdot\cL_{f'}\cR_{g'}S^m=\cL_{fS^n(f')}\cR_{S^n(g')g}S^{n+m},\quad f, g, f', g'\in\A,\quad \ n, m\in \Z.
$$
The action of an evolutionary derivation $\D_a$ on an element $\cL_f\cR_gS^i$ in $\A((S))$ is defined as
\begin{equation}\label{derser}
\D_a(\cL_f\cR_gS^i)=\cL_{\D_a(f)}\cR_gS^i+\cL_f\cR_{\D_a(g)}S^i=\cL_{f_*(a)}\cR_gS^i+\cL_f\cR_{g_*(a)}S^i .
\end{equation}
Thus it is well-defined on any formal series in $\A((S))$.

The algebra $\A((S))$ inherits the natural grading \eqref{mgr} on $\cM$:
\begin{equation}
\label{asgr}
\A((S))=\bigoplus_{p,q\ge 0}\A_{p,q}((S)),\quad \A_{p,q}((S))\cdot\A_{p',q'}((S))\subset\A_{p+p',q+q'}((S)),
\end{equation}
where each $\C$-linear subspace $\A_{p,q}((S))$ consists of formal series defined by (\ref{series}) when $b^{(i)}\in\cM_{p,q}$.

\section{Symmetries and approximate symmetries}\label{symap}
We begin with defining an evolutionary D$\Delta$Es on free associative algebras.
Assume that the generators $u_k$ of the free associative algebra $\A$ depend on $t\in\C$. We identify every evolutionary derivation $\D_f$ of $\A$ with a D$\Delta$E on $\A$, namely,
\begin{equation}
\label{dd}
u_{t}=\D_f(u)=f, \quad f\in\A.
\end{equation}
In particular, we have $u_{n,t}=S^n(f),\,\,n\in\Z$. The order of a D$\Delta$E is the order of $f$, that is, the order of its Fr\'echet derivative $f_*$.
Evolution with respect to $t$ of any element $a\in\A$ is therefore given by $a_t=\D_f(a)=a_*(f)$.

We call an evolutionary equation \eqref{dd} {\it skew-symmetric} if $f$ is skew-symmetric, i.e., $f\in \A^1$
satisfying $\cT(f)=-f$ under the action of automorphism $\cT$ defined by (\ref{taut}).
\subsection{Symmetries of evolutionary D$\Delta$Es}
\begin{Def} \label{defsym1}
We say that $g\in\A$ is a symmetry of a D$\Delta$E (\ref{dd}) if the Lie bracket \eqref{liebracket} between $f$ and $g$ vanishes, that is, $[g,f]=0$.
\end{Def}
For every element $g\in\A$, we associate an evolutionary derivation $\D_g$. The condition that $g$ is a symmetry of (\ref{dd}) is equivalent to the evolutionary derivation $\D_g$ commuting with the evolutionary derivation $\D_f$, i.e. $[\D_f,\D_g]=0$. If we assume that the generators $u_k$ also depend on $\tau\in\C$, the conventional way of representing a symmetry of equation (\ref{dd}) is to associate the evolutionary derivation $\D_g$ with a D$\Delta$E:
$$
u_{\tau}=\D_g(u)=g.
$$
A linear combination $\alpha g_1+\beta g_2$, where $\alpha,\beta\in\C$, of two symmetries $g_1$ and $g_2$, as well as their Lie bracket $[g_1,g_2]$ are also symmetries of the equation (\ref{dd}). Thus the set of all symmetries of the equation (\ref{dd}), denoted by
$$
\cC_{f}=\{g\in\A\,|\,[g,f]=0\},
$$
forms a Lie subalgebra of $\A$. This subalgebra is the centraliser of $f\in\A$.

The characteristic property of integrable systems is the existence of an infinite-dimensional Lie algebra of symmetries. Therefore, we adopt the following definition of integrability:
\begin{Def}\label{defsym} A nonabelian D$\Delta$E (\ref{dd}) is called integrable if its Lie algebra of symmetries $\cC_{f}$ is infinite dimensional and contains symmetries of arbitrarily high total order, in the sense that for every $N\in\N$, the equation possesses a symmetry $g$ of total order $\Ord\,\, g\geq N$.
\end{Def}

\begin{Ex}\label{EG2} For the nonabelian Volterra equation \eqref{volt},
one can directly verify that
$$
u_{\tau}=uu_1u_2-u_{-2}u_{-1}u+u u_1^2 +u^2u_1-u_{-1}^2 u-u_{-1}u^2
$$
is a symmetry of order $(-2,2)$ and the total order $4$. It is known that the nonabelian Volterra equation possesses infinitely many symmetries of order $(-n,n),\,n\in\N$. These can be explicitly found either from the Lax representation \cite{BG} or by means of the recursion operator \cite{CasatiWang}.
\end{Ex}

\subsection{Approximate symmetries}
Based on the gradation on $\A$, we shall introduce the notion of approximate symmetries.

Consider a nonabelian D$\Delta$E \eqref{dd}
and represent $f$ as a sum of homogeneous components
$$
f=f_0+\cdots+f_N,\quad f_{k}\in\A_k,\quad k=0,\ldots,N.
$$
According to Definition \ref{defsym1} the element $g=g_0+\cdots+g_M,\,\,g_i\in\A_i,\,i=0,\ldots,M$ is a symmetry if $[f,g]=0$. Due to the natural gradation on the Lie algebra $\A$, the vanishing of the Lie bracket is equivalent to
\begin{equation}
\label{appsym}
\sum_{k=0}^p[f_k,g_{p-k}]=0,\quad p=0,1,\ldots,N+M,
\end{equation}
where we assume $f_k=0,\,k>N$ and $g_{k'}=0,\,k'>M$. We call $g$ {\it an approximate symmetry of degree $s$} or {\it $s$-approximate symmetry} if the first $s+2$ relations of \eqref{appsym} with $0\le p\le s+1$ are satisfied. In other words, $g$ is an approximate symmetry of degree $s$ if
$$
\pi_k([f,g])=0,\quad k=0,\ldots, s.
$$

Similar to symmetries, all $s$-approximate symmetries form a Lie subalgebra of $\A$:
$$
\cC_{f}^s=\{g\in\A\,|\,\pi_k([g,f])=0,\,\,k=0,1,\ldots,s\}.
$$

Note that there is no obstruction to replace the free associative algebra $\A$ by an algebra of formal series defined as
$$
\overline{\A}=\{\sum_{k=0}^{\infty}f_k\,|\,f_k\in\A_k\}.
$$
One can then consider equations and their symmetries defined by formal series in $\overline{\A}$. The definition of integrability, identical to the abelian case \cite{MNWDiff}, for formal series is as follows:

\begin{Def} A nonabelian D$\Delta$E (\ref{dd}) for $f\in\overline{\A}$ is called $s$-approximate integrable if its Lie algebra $\cC_{f}^s$ is infinite dimensional and contains $s$-approximate symmetries of arbitrarily high total order. The equation is called formally integrable if it is $s$-approximate integrable for arbitrarily large $s$.
\end{Def}
Any integrable equation is formally integrable, while the converse is not necessarily true as there is no assumption regarding the convergence of formal series and their approximate symmetries. Conditions of $s$-approximate integrability can be explicitly written and verified for a given equation. Moreover, these are necessary integrability conditions suitable for classification of integrable equations. The natural approach for formulating these conditions is through symbolic representation, a concept that will be elaborated upon in the subsequent sections.

In what follows we will consider nonabelian evolutionary D$\Delta$Es without a constant term.
A constant term in the equation can often be removed by an invertible change of variables. Moreover, if an equation possesses $N$ symmetries with constant terms, then taking their linear combination one can always construct $N-1$ symmetries without a constant term.
We denote the difference algebra without constant terms by $\A'=\bigoplus_{k\ge 1}\A_k$, and similarly the difference algebra of formal series without constant terms by $\overline{\A}'=\{\sum_{k=1}^{\infty}f_k\,|\,f_k\in\A_k\}$. Hereafter, we restrict ourselves to nonabelian D$\Delta$Es of the form
\begin{equation}\label{ddn}
 u_t=f, \quad f\in\A' \ \mbox{or} \ f\in\overline{\A}'.
\end{equation}

\section{Symbolic representation}\label{symbrep}

Symbolic representation was first applied in the area of integrable systems by Gel'fand and Dickey \cite{mr58:22746} and later further developed in the framework of the symmetry approach in works of Beukers, Sanders, Wang, Mikhailov, Novikov and van der Kamp \cite{mr99i:35005}-\cite{vdk2}. In \cite{MNWDiff}
the method of symbolic representation was extended to the abelian difference algebra as well as difference operators and formal series. Now we extend the approach to the nonabelian case.

\subsection{Symbolic representation of difference algebra $\A$}

To define the symbolic representation $\hat{\A}=\bigoplus_{p\ge 0}\hat{\A}_p$ of the naturally graded difference algebra $\A=\bigoplus_{p\ge 0}\A_p$, we first define an isomorphism of $\C$-linear spaces  $\phi: \A_p\to\hat{\A}_p$, and then extend it to difference algebra and Lie algebra  isomorphism by defining the multiplication, the action of the shift operator, derivations and Lie bracket on $\hat{\A}$. The isomorphism of linear spaces is uniquely determined by its action on monomials.
\begin{Def}
\label{symbdef}
The symbolic representation of difference monomial terms is defined as
$$
\phi:\,e\mapsto1,\quad\quad \phi:\,u_{i_1}u_{i_2}\cdots u_{i_n}\in\A_n\mapsto\hu^n\xi_1^{i_1}\xi_2^{i_2}\cdots\xi_n^{i_n}\in\hat{\A}_n .
$$
\end{Def}
Notations $\xi_1,\xi_2,\ldots$ are reserved for the variables in the symbolic representation, while $\hu^n$ indicates the appropriate linear space $\hat{\A}_n$ to which the symbol belongs to. We note here that unlike the situation in the abelian case \cite{MNWDiff} the nonabelian symbolic representation does not require symmetrisation operation.

\begin{Ex}\label{eg2} Here we give a few concrete examples:
\begin{eqnarray*}
&&u_k\raph\hu\xi_1^k,\quad u^m\raph\hu^m,\quad u_1u_{-1}\raph\hu^2\xi_1\xi_2^{-1},\\
&&\alpha uu_1u_2+\beta u_{-2}u_{-1}u\raph \hu^3(\alpha\xi_2\xi_3^2+\beta\xi_1^{-2}\xi_2^{-1}),\quad \alpha,\beta\in\C.
\end{eqnarray*}
\end{Ex}
Under the action of the isomorphism $\phi$ a homogeneous polynomial $f\in\A_n$ is in one-to-one correspondence with a symbol $\hat{f}=\hu^n a(\xi_1,\ldots,\xi_n)$, where the coefficient function $a(\xi_1,\ldots,\xi_n)$ is a Laurent polynomial in variables $\xi_1,\ldots,\xi_n$.

The action of automorphisms $S, T, I$ and $\cT$, presented in Section \ref{sec21}, on $f\in\A_n$ is defined by
\begin{eqnarray*}
&&S(f)\raph \hu^n a(\xi_1,\ldots,\xi_n)\xi_1\cdots\xi_n;\quad T(f)\raph \hu^n a(\xi_n,\ldots,\xi_1);\\
&&I(f)\raph \hu^n a(\xi_1^{-1},\ldots,\xi_n^{-1});\qquad\qquad \cT(f)\raph \hu^n a(\xi_n^{-1},\ldots,\xi_1^{-1}).
\end{eqnarray*}
The projector $\pi_k$ selects the $k$-th homogeneous component of $f\in\A$. Its symbolic representation $\hat{\pi}_k$ is induced by the condition $\phi\pi_k=\hat{\pi}_k\phi$. If
$$
f=\sum_{l\ge 0}f_l\in\overline{\A},\quad\pi_k(f)=f_k\in\A_k,\quad\phi(f_k)=\hu^k a_k(\xi_1,\ldots,\xi_k),
$$
then $\hat{f}:=\phi(f)=\sum_{l\ge 0}\hu^l a_l(\xi_1,\ldots,\xi_l)$ and
$$
\hat{\pi}_k(\hat{f})=\hat{\pi}_k\left(\sum_{l\ge 0}\hu^l a_l(\xi_1,\ldots,\xi_l)\right)=\hu^k a_k(\xi_1,\ldots,\xi_k).
$$

We now define the multiplication in the symbolic representation.
\begin{Def} Let $f\in\A_n,\,g\in\A_m$ and $\phi(f)=\hu^na(\xi_1,\ldots,\xi_n)\,\,\phi(g)=\hu^mb(\xi_1,\ldots,\xi_m)$. Then $\phi(fg)=\phi(f)*\phi(g)$, where
\begin{equation}
\label{multrule}
\phi(f)*\phi(g)=\hu^{n+m}a(\xi_1,\ldots,\xi_n)b(\xi_{n+1},\ldots,\xi_{n+m}).
\end{equation}
\end{Def}
Note that the symbolic representation of difference monomials (Definition \ref{symbdef}) can be deduced from the representation of linear monomials $\phi(u_k)=\hu\xi_1^k$ and the multiplication rule (\ref{multrule}).
The commutative algebra $\hat{\A}=\oplus\hat{\A}_k$ equipped with the $*$ multiplication is isomorphic to the difference algebra $\A$.

\subsection{Symbolic representation of difference operators and formal series}\label{diffopsymb}
We extend the symbolic representation to difference operators and formal series, thereby obtaining the symbolic representation $\hat{\A}((\eta))$ for the algebra of formal series $\A((S))$.

We assign the symbol $\eta$ to the symbolic representation of the shift operator $S$ with the action rule given by
$$
\eta(\hu^na(\xi_1,\ldots,\xi_n))=\hu^na(\xi_1,\ldots,\xi_n)\xi_1\cdots\xi_n
$$
and the composition rule, corresponding to $S\circ f=S(f)S$, defined as
$$
\eta\circ\hu^na(\xi_1,\ldots,\xi_n)=\hu^na(\xi_1,\ldots,\xi_n)\xi_1\cdots\xi_n\eta.
$$

Consider $A=\cL_f\cR_gS^i\in\A_{n,m}((S))$, where the grading of $\A((S))$ is defined by \eqref{asgr} and
$$\phi(f)=\hat{f}=\hu^n a(\xi_1,\ldots,\xi_n),\quad  \phi(g)=\hat{g}=\hu^m b(\xi_1,\ldots,\xi_m).$$ Then its symbolic representation $\hat{A}$ is
$$
 A=\cL_f\cR_gS^i\,\,\raph\,\, \hat{A}=\hu^n_l\hu^m_r a(\xi_1,\ldots,\xi_n) b(\zeta_1,\ldots,\zeta_m)\eta^i.
$$
Here we reserve the notation $\hu^n_l a(\xi_1,\ldots,\xi_n)$ for the symbol of the left multiplication operator $\cL_f$, while $\hu^m_r b(\zeta_1,\ldots,\zeta_m)$ denotes the symbol of the right multiplication operator $\cR_g$. Referring to \eqref{mpq}, when dealing with an operator $b^{(i)} S^i$, where $b^{(i)}\in \cM_{p,q}$, we adopt a shortened notation for its symbolic representation:
$$
\sum_{\gamma=1}^k \hu^n_l\hu^m_r a^{(\gamma)}(\xi_1,\ldots,\xi_n) b^{(\gamma)}(\zeta_1,\ldots,\zeta_m)\eta^i=:\hu^n_l\hu^m_r a(\xi_1,\ldots,\xi_n,\zeta_1,\ldots,\zeta_m)\eta^i.
$$
The symbolic representation $\hat{\A}_{p,q}((\eta))$ of a subspace $\A_{p,q}((S))$ is the sum of the terms in the form
$
\hu^p_l\hu_r^q a^{(s)}(\xi_1,\ldots,\xi_p,\zeta_1,\ldots,\zeta_q)\eta^s,
$
where $s\leq N_{pq}$ and $ N_{pq}\in \Z$. We simply denote such element in $\hat{\A}_{p,q}((\eta))$ by
$\hu_l^p\hu_r^{q}a(\xi_1,\ldots,\xi_p,\eta,\zeta_1,\ldots,\zeta_q)$.
Thus, for a generic formal series $B\in \A((S))$, following from \eqref{asgr}, its symbolic representation $\hat{B}$ can be written as
\begin{equation}
\label{fs}
\hat{B}=\sum_{p\ge 0}\sum_{q\ge 0}\hu_l^p\hu_r^{q}a_{pq}(\xi_1,\ldots,\xi_p,\eta,\zeta_1,\ldots,\zeta_q),
\end{equation}
where the summation indices $p,q$ represent the summation over elements belonging to different subspaces $\hat{\A}_{p,q}((\eta))$, and
\begin{eqnarray*}
&&\hu_l^p\hu_r^{q}a_{pq}(\xi_1,\ldots,\xi_p,\eta,\zeta_1,\ldots,\zeta_q)=\hu_l^p\hu_r^{q}\!\sum_{s\le N_{pq}}a_{pq}^{(s)}(\xi_1,\ldots,\xi_p,\zeta_1,\ldots,\zeta_q)\eta^s\!\in \hat{\A}_{p,q}((\eta)), N_{pq}\in\Z,
\end{eqnarray*}
where $a_{pq}^{(s)}$ are Laurent polynomials in  $\C[\xi_1^{\pm 1},\ldots,\xi_p^{\pm 1},\zeta_1^{\pm 1},\ldots,\zeta_q^{\pm 1}]$.

Naturally, the projector $\hat{\pi}_{p,q}:\hat{\A}((\eta))\to\hat{\A}_{p,q}((\eta))$  on formal series in the symbolic representation becomes
$$
\hat{\pi}_{p,q}(\hat{B})=\hu_l^p\hu_r^qa_{pq}(\xi_1,\ldots,\xi_p,\eta,\zeta_1,\ldots,\zeta_q).
$$

We now define the operations in the symbolic representation $\hat{\A}((\eta))$.
To the sum of formal series in $\A((S))$ corresponds the sum of their symbolic representations in $\hat{\A}((\eta))$.
The composition rule $\cL_f\cR_gS^i\circ\cL_{f'}\cR_{g'}S^{i'}=\cL_{fS^{i}(f')}\cR_{S^{i}(g')g}S^{i+i'}$ in the symbolic representation takes the form
\begin{eqnarray}
\label{comp}
&&\hu^n_l\hu^m_ra(\xi_1,\ldots,\xi_n,\eta,\zeta_1,\ldots,\zeta_m)\circ \hu^{n'}_l\hu^{m'}_ra'(\xi_1,\ldots,\xi_{n'},\eta,\zeta_1,\ldots,\zeta_{m'})\\ \nonumber
&&=\hu_l^{n+n'}\hu_r^{m+m'}a(\xi_1,\ldots,\xi_n,\eta\prod_{j=n+1}^{n+n'}\!\!\xi_j\prod_{k=1}^{m'}\!\zeta_k,\zeta_{m'+1},\ldots,\zeta_{m+m'})a'(\xi_{n+1},\ldots,\xi_{n+n'},\eta,\zeta_1,\ldots,\zeta_{m'}).
\end{eqnarray}
By linearity we extend this rule to the composition rule of formal series in $\hat{\A}((\eta))$.

Next we state the symbolic representations for the Fr\'echet derivative and the action of difference operators on ${\A}$.
Due to linearity of these operations, we only represent them for an element in subspace $\A_p, 0\leq p\in \mathbb{Z}$.

Let $f\in\A_p$ with a symbol  $\hat{f}=\hu^p b(\xi_1,\ldots,\xi_p)$. The symbolic representation of its Fr\'echet derivative is given by
\begin{equation}\label{symfre}
 \hat{f}_*=\sum_{i=1}^p\hu_l^{i-1}\hu_r^{p-i} b(\xi_1,\ldots,\xi_{i-1},\eta,\zeta_1,\ldots,\zeta_{p-i}),
\end{equation}
The action rule of a difference operator on $f\in\A_p$ is derived from the action
\begin{eqnarray}\label{opalg}
&&\quad \hu_l^n\hu_r^ma(\xi_1,\ldots,\xi_n,\eta,\zeta_1,\ldots,\zeta_m)\left(\hu^pb(\xi_1,\ldots,\xi_p)\right)\nonumber\\
&&=\hu^{n+m+p}a(\xi_1,\ldots,\xi_n,\prod_{i=n+1}^{n+p}\xi_i,\xi_{n+p+1},\ldots,\xi_{n+m+p}) b(\xi_{n+1},\ldots,\xi_{n+p}).
\end{eqnarray}
\begin{Ex}
Let $f=uu_1u_2-u_{-2}u_{-1}u$. We know that $\hat{f}=\hu^3 (\xi_2 \xi_3^2-\xi_1^{-2} \xi_2^{-1}) $ from Example \ref{eg2}. It follows from \eqref{symfre} that
$$
\hat{f}_*=\phi(f_*)=\hu^2_r\zeta_1\zeta_2^2+\hu_l\hu_r\zeta_1^2\eta+\hu_l^2\xi_2\eta^2-\hu_r^2\zeta_1^{-1}\eta^{-2}-\hu_l\hu_r\xi_1^{-2}\eta^{-1}-\hu_l^2\xi_1^{-2}\xi_2^{-1},
$$
which is the symbol of
$$f_*=\cR_{u_1u_2}+\cL_u\cR_{u_2}S+\cL_{uu_1}S^2-\cR_{u_{-1}u}S^{-2}-\cL_{u_{-2}}\cR_{u}S^{-1}-\cL_{u_{-2}u_{-1}}.$$
\end{Ex}
By combining the symbolic representation of Fr\'echet derivative \eqref{symfre} with \eqref{opalg}, we obtain the symbolic representation of the Lie bracket \eqref{liebracket}.
\begin{Pro}
Suppose $f\in \A_n$ with the symbol $\hat{f}=\hu^n a(\xi_1,\ldots,\xi_n)$ and $g\in\A_m$ with the symbol $\hat{g}=\hu^m b(\xi_1,\ldots,\xi_m)$. Then
\begin{eqnarray}
\nonumber &&\phi([f,g])=\sum_{i=1}^m\hu^{m+n-1}b(\xi_1,\ldots,\xi_{i-1},\prod_{j=i}^{i+n-1}\xi_j,\xi_{i+n},\ldots,\xi_{m+n-1})a(\xi_i,\ldots,\xi_{i+n-1})\\ \label{brsymb}
&&\qquad -\sum_{i=1}^n\hu^{m+n-1}a(\xi_1,\ldots,\xi_{i-1},\prod_{j=i}^{i+m-1}\xi_j,\xi_{i+m},\ldots,\xi_{m+n-1})b(\xi_i,\ldots,\xi_{i+m-1}).
\end{eqnarray}
\end{Pro}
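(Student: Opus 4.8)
The plan is to reduce the statement to the two formulas already established in the excerpt, namely the symbolic Fr\'echet derivative \eqref{symfre} and the action rule \eqref{opalg}, and then to carry out the substitution bookkeeping. The starting point is the coordinate-free definition of the Lie bracket \eqref{liebracket}, which reads $[f,g]=g_*(f)-f_*(g)$ once one recalls $f_*(a)=\D_a(f)$. Hence $\phi([f,g])=\phi(g_*(f))-\phi(f_*(g))$, and it suffices to compute each of the two terms separately. Moreover, because the whole construction is symmetric under $f\leftrightarrow g$ together with $n\leftrightarrow m$, the second term is obtained from the first by interchanging the roles of $a$ and $b$, so in effect only one of the two computations has to be done honestly.

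First I would write down the symbol of $g_*$. Since $g\in\A_m$ has symbol $\hu^m b(\xi_1,\ldots,\xi_m)$, formula \eqref{symfre} gives $\hat g_*=\sum_{i=1}^m \hu_l^{i-1}\hu_r^{m-i}\, b(\xi_1,\ldots,\xi_{i-1},\eta,\zeta_1,\ldots,\zeta_{m-i})$, a sum of $m$ difference operators whose $i$-th summand has left arity $i-1$ and right arity $m-i$. Next I would feed $\hat f=\hu^n a(\xi_1,\ldots,\xi_n)$ into each summand using the action rule \eqref{opalg} under the substitution $(n,m,p)\mapsto(i-1,\,m-i,\,n)$. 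That rule instructs us to distribute the $n$ fresh variables of $f$ across the three slots of the operator symbol, to replace the operator's $\eta$ by the product $\prod_{j=i}^{i+n-1}\xi_j$ of the variables occupying the middle slot, and to substitute $\xi_i,\ldots,\xi_{i+n-1}$ into $a$. Collecting the left arguments $\xi_1,\ldots,\xi_{i-1}$ and the right arguments $\xi_{i+n},\ldots,\xi_{m+n-1}$ then reproduces exactly the first sum in the claimed identity \eqref{brsymb}.

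The computation of $\phi(f_*(g))$ is identical after exchanging $(a,n)\leftrightarrow(b,m)$, and produces the second sum in \eqref{brsymb}; subtracting the two gives the assertion. The only genuine difficulty is the index bookkeeping: one must check that every resulting monomial has total degree $m+n-1$, consistent with the grading relation $[\A_n,\A_m]\subset\A_{n+m-1}$ recorded after \eqref{liebracket}; that the contiguous blocks $\xi_1,\ldots,\xi_{i-1}$, $\xi_i,\ldots,\xi_{i+n-1}$, $\xi_{i+n},\ldots,\xi_{m+n-1}$ partition $\{1,\ldots,m+n-1\}$ with no overlap; and that the middle-slot product is precisely $\prod_{j=i}^{i+n-1}\xi_j$. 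Some care is also needed because the action rule \eqref{opalg} is stated for a single monomial operator, so one invokes $\C$-linearity to apply it termwise across the sum defining $\hat g_*$.
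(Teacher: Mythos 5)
Your proposal is correct and follows exactly the paper's own route: apply the symbolic Fr\'echet derivative \eqref{symfre} termwise, act on $\hat f$ via the action rule \eqref{opalg} to obtain the first sum, and then invoke the Lie bracket definition \eqref{liebracket} with the second sum coming from the symmetric computation of $\phi(f_*(g))$. Your index bookkeeping (degree $m+n-1$, the partition into the three contiguous blocks, and the middle-slot product $\prod_{j=i}^{i+n-1}\xi_j$) matches the paper's displayed computation of $\phi(g_*(f))$ precisely.
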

\begin{proof} Directly applying \eqref{symfre} and \eqref{opalg}, we obtain
\begin{eqnarray}
&&\phi(g_*(f))=\sum_{i=1}^m\hu_l^{i-1}\hu_r^{m-i} b(\xi_1,\ldots,\xi_{i-1},\eta,\zeta_1,\ldots,\zeta_{m-i})\big(\hu^n a(\xi_1,\ldots,\xi_n)\big)\nonumber\\
&&=\sum_{i=1}^m\hu^{m+n-1}b(\xi_1,\ldots,\xi_{i-1},\prod_{j=i}^{i+n-1}\xi_j,\xi_{i+n},\ldots,\xi_{m+n-1})a(\xi_{i},\ldots,\xi_{i+n-1}).\label{freact}
\end{eqnarray}
Then formula \eqref{brsymb} follows from the definition of Lie bracket \eqref{liebracket}.
\end{proof}
In particular, if $f\in\A_1$ with the symbol $\hat{f}=\hu\omega(\xi_1)$, then
\begin{equation}\label{brom}
\phi([f,g])=\hu^n b(\xi_1,\ldots,\xi_m)\Big(\omega(\xi_1)+\cdots+\omega(\xi_m)-\omega(\xi_1\cdots\xi_m)\Big).
\end{equation}
We conclude this section with the symbolic representation for
the action of an evolutionary derivation $\D_f$ on an element $B\in\A((S))$ defined by \eqref{derser}.

Let $\hat{f}=\hu^p a(\xi_1,\ldots,\xi_p)$ and $B\in\A((S))$ with $\phi(B)=\hu^n_l\hu_r^mb(\xi_1,\ldots,\xi_n,\eta,\zeta_1,\ldots,\zeta_1,\ldots,\zeta_m)$.
Using \eqref{freact}, we obtain
\begin{eqnarray}
&&\phi(\D_f(B))\!=\!\!\sum_{i=1}^n\hu_l^{n+p-1}\hu_r^mb(\xi_1,\ldots,\xi_{i-1},\!\prod_{j=i}^{i+p-1}\!\!\xi_j,\xi_{i+p},\ldots,\xi_{n+p-1},\eta,\zeta_1,\ldots,\zeta_m)a(\xi_i,\ldots,\xi_{i+p-1})\nonumber\\
&&\quad +\sum_{i=1}^n\hu_l^n\hu_r^{m+p-1}b(\xi_1,\ldots,\xi_n,\eta,\zeta_1,\ldots,\zeta_{i-1},\!\prod_{j=i}^{i+p-1}\!\!\zeta_j,\zeta_{i+p},\ldots,\zeta_{m+p-1})a(\zeta_i,\ldots,\zeta_{i+p-1}).\label{symderser}
\end{eqnarray}

\subsection{Symmetries in the symbolic representation}\label{symsymb}

We proceed to describe the symmetry algebra of a given equation in symbolic representation. We remind the reader that we consider nonabelian difference equations \eqref{ddn} without a constant term.  The findings presented in this section bear similarities to those in the abelian case \cite{MNWDiff}, albeit accounting for the distinction in symbolic representation between the abelian and nonabelian cases.

\begin{Pro}
\label{prolin}
Let $g\in\A'$ be a symmetry of equation \eqref{ddn} with a non-zero linear term,i.e. $f_1=\pi_1(f)\ne 0$. Then the symmetry $g$ also possesses a non-zero linear term $g_1=\pi_1(g)\ne 0$.
\end{Pro}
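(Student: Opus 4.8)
The plan is to argue by contradiction, examining the lowest-degree homogeneous component of the symmetry condition. Assume $g\neq 0$ and let $m\ge 1$ be the smallest index for which $g_m=\pi_m(g)\neq 0$, so $g=g_m+g_{m+1}+\cdots$ with $g_k\in\A_k$; since $g\in\A'$ we automatically have $m\ge 1$, and I want to rule out $m\ge 2$, which is exactly the assumption $g_1=0$. Writing $f=f_1+f_2+\cdots$ with $f_1\neq 0$ by hypothesis, I expand the symmetry condition $[f,g]=0$ into its homogeneous components as in \eqref{appsym} and use the grading property $[\A_i,\A_j]\subset\A_{i+j-1}$.

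The key observation is that the lowest-degree contribution to $[f,g]$ is isolated. A term $[f_i,g_j]$ lies in $\A_{i+j-1}$, and since $i\ge 1$ and $j\ge m$, one has $i+j-1\ge m$ with equality only for the single pair $(i,j)=(1,m)$. Hence projecting $[f,g]=0$ onto $\A_m$ yields $[f_1,g_m]=0$. It then remains to show that this relation, together with $f_1\neq 0$ and $m\ge 2$, forces $g_m=0$, contradicting the minimality of $m$.

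To this end I pass to the symbolic representation. Write $\hat f_1=\hu\,\omega(\xi_1)$ with $\omega\neq 0$ (as $f_1\neq 0$) and $\hat g_m=\hu^m b(\xi_1,\ldots,\xi_m)$ with $b\neq 0$. Applying the specialised bracket formula \eqref{brom} and using that $\phi$ is an isomorphism on $\A_m$, the identity $[f_1,g_m]=0$ becomes
$$
b(\xi_1,\ldots,\xi_m)\Big(\omega(\xi_1)+\cdots+\omega(\xi_m)-\omega(\xi_1\cdots\xi_m)\Big)=0
$$
as an identity in the Laurent polynomial ring $\C[\xi_1^{\pm1},\ldots,\xi_m^{\pm1}]$. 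Since this ring is an integral domain and $b\neq 0$, the second factor must vanish identically:
$$
\omega(\xi_1)+\cdots+\omega(\xi_m)=\omega(\xi_1\cdots\xi_m).
$$

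The main (and only genuinely nontrivial) step is to deduce $\omega=0$ from this additivity-type functional equation when $m\ge 2$. Writing $\omega(\xi)=\sum_k c_k\xi^k$, the right-hand side produces, for each $k\neq 0$, a monomial $c_k\,\xi_1^k\cdots\xi_m^k$ involving $m\ge 2$ distinct variables with nonzero exponents; no such monomial can occur on the left-hand side, whose monomials each involve only a single variable, so $c_k=0$ for all $k\neq 0$. Matching constant terms then gives $m\,c_0=c_0$, whence $c_0=0$ because $m\ge 2$. Thus $\omega=0$, contradicting $f_1\neq 0$, and therefore $m=1$, i.e. $g_1\neq 0$. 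I note that the hypothesis $m\ge 2$ is essential here: for $m=1$ the bracketed factor is $\omega(\xi_1)-\omega(\xi_1)\equiv 0$, reflecting the fact that linear terms always commute, which is precisely why a nontrivial $g_1$ is permitted.
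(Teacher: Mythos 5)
Your proof is correct and follows essentially the same route as the paper's: isolate the lowest homogeneous component of $[f,g]=0$ via the grading to get $[f_1,g_m]=0$, then apply \eqref{brom} and the integral-domain property of the Laurent polynomial ring. The only difference is one of detail — the paper concludes $b=0$ directly, implicitly using that $\omega(\xi_1)+\cdots+\omega(\xi_m)-\omega(\xi_1\cdots\xi_m)\not\equiv 0$ for $m\ge 2$ when $\omega\neq 0$, which is precisely the monomial-counting argument you spell out explicitly.
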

\begin{proof} Assume that the symmetry $g=g_k+g_{k+1}+\cdots,\,g_k\in\A_k$  and $g_k\neq 0$ for some $k>1$, which implies $\pi_k([f,\ g])=[f_1, \ g_k]=0$. Let the symbolic representation of $f_1$ be $\hu\omega(\xi_1)$ and the symbolic representation of  $g_k$ be $\hu^k b(\xi_1,\ldots,\xi_k)$. Then it follows from (\ref{brom}) that $ b(\xi_1,\ldots,\xi_k)=0$. Thus $g_k=0$ contradicting with the assumption.
\end{proof}

\begin{Pro} Let $u_t=f \in\A_1$ be a linear equation. Then the algebra of its symmetries coincides with $\A_1$, that is, $\cC_f=\A_1$.
\end{Pro}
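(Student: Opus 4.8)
The plan is to prove the two inclusions $\A_1\subseteq\cC_f$ and $\cC_f\subseteq\A_1$ separately, in both cases reducing the bracket condition to the scalar identity furnished by (\ref{brom}). Write the symbol of the linear right-hand side as $\hat f=\hu\,\omega(\xi_1)$, where $\omega$ is a nonzero Laurent polynomial (the equation being genuinely linear, so $f\neq0$), and work, as throughout, in $\A'$, so that any $g\in\cC_f$ is written $g=\sum_{m\ge1}g_m$ with $g_m\in\A_m$.

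First I would dispose of the inclusion $\A_1\subseteq\cC_f$. For $g\in\A_1$ with symbol $\hu\,\tilde\omega(\xi_1)$, applying (\ref{brom}) with $m=1$ leaves the scalar factor $\omega(\xi_1)-\omega(\xi_1)=0$, so $\phi([f,g])=0$; since $\phi$ is an isomorphism this gives $[f,g]=0$, i.e. every linear element is a symmetry.

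For the reverse inclusion I would use the grading. Because $f\in\A_1$, the graded bracket property $[\A_1,\A_m]\subseteq\A_m$ shows that $[f,g]=\sum_{m}[f,g_m]$ decomposes across the distinct homogeneous subspaces $\A_m$, so $[f,g]=0$ is equivalent to $[f,g_m]=0$ for every $m$. Writing $\hat g_m=\hu^m b(\xi_1,\ldots,\xi_m)$ and applying (\ref{brom}), this becomes the identity
$$b(\xi_1,\ldots,\xi_m)\Big(\sum_{i=1}^m\omega(\xi_i)-\omega(\xi_1\cdots\xi_m)\Big)=0$$
in the Laurent polynomial ring $\C[\xi_1^{\pm1},\ldots,\xi_m^{\pm1}]$, which is an integral domain.

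The hard part will be showing that the scalar factor $G_m:=\sum_{i=1}^m\omega(\xi_i)-\omega(\xi_1\cdots\xi_m)$ is not identically zero when $m\ge2$; once this is established, the integral-domain property forces $b=0$, hence $g_m=0$ for all $m\ge2$, leaving $g=g_1\in\A_1$ and completing the proof. I would argue $G_m\not\equiv0$ by examining monomial supports: expanding $\omega(\xi)=\sum_k\omega_k\xi^k$ yields $G_m=\sum_k\omega_k\big(\sum_{i=1}^m\xi_i^k-\xi_1^k\cdots\xi_m^k\big)$. For $k\neq0$ each single-variable monomial $\xi_i^k$ involves exactly one of the $\xi_j$, whereas the product monomial $\xi_1^k\cdots\xi_m^k$ involves all $m\ge2$ of them, so these monomials are pairwise distinct and cannot cancel, while the $k=0$ term contributes the constant $(m-1)\omega_0$. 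Thus $G_m\equiv0$ would force $\omega_k=0$ for every $k$ (using $m-1\neq0$ for the constant term), contradicting $f\neq0$. This monomial-support bookkeeping is the only genuinely nontrivial step; the remainder is the grading decomposition and the integral-domain cancellation.
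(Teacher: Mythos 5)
Your proof is correct and follows essentially the same route as the paper: decompose $g$ into homogeneous components via the grading, and apply (\ref{brom}) to force each $b_m=0$ for $m\ge 2$. You additionally spell out two details the paper leaves implicit — the trivial inclusion $\A_1\subseteq\cC_f$ and the monomial-support argument showing $G_m\not\equiv 0$ (so that cancellation in the integral domain $\C[\xi_1^{\pm1},\ldots,\xi_m^{\pm1}]$ really does give $b_m=0$) — which is a welcome tightening, not a different method.
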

\begin{proof} Let $g=\sum_{k\ge 1}g_k,\,g_k\in\A_k$ be a symmetry of the equation $u_t=f\in\A_1$. Then we must have $\pi_k([f,g])=0,\,k=2,3,\ldots$. Let $\hu^k b_k(\xi_1,\ldots,\xi_k)$ be the symbol of $g_k$. Then from (\ref{brom}) it follows that $b_k(\xi_1,\ldots,\xi_k)=0$, which implies $g_k=0$ for all $k\geq 2$.
\end{proof}

\begin{Pro} The Lie algebra of symmetries of equation $u_t=f\in\A'$ with a non-zero linear term $f_1=\pi_1(f)\ne 0$ is commutative.
\end{Pro}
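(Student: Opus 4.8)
The plan is to reduce the whole statement to two facts already available: the natural grading $[\A_n,\A_m]\subset\A_{n+m-1}$ together with the injectivity of the ``linearization'' map $g\mapsto\pi_1(g)$ supplied by Proposition \ref{prolin}. Let $g,h\in\cC_f$ be any two symmetries of $u_t=f$. Since $\cC_f$ is a Lie subalgebra of $\A$, the bracket $[g,h]$ is again a symmetry, so it suffices to prove $[g,h]=0$; we may assume $g,h\neq 0$, as the bracket is trivially zero otherwise.

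First I would record the graded structure of the bracket. By Proposition \ref{prolin} every nonzero symmetry lies in $\A'$ and has a nonzero linear part, so I write $g=\sum_{i\ge 1}g_i$ and $h=\sum_{j\ge 1}h_j$ with $g_i,h_j\in\A_i,\A_j$ and $g_1,h_1\neq 0$. The grading law $[\A_i,\A_j]\subset\A_{i+j-1}$ shows that the minimal total degree $i+j-1$ equals $1$ only for the pair $(i,j)=(1,1)$, so the degree-one component of the bracket is exactly $\pi_1([g,h])=[g_1,h_1]$, with all other contributions sitting in degrees $\ge 2$.

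Next I would show that this lowest component vanishes, i.e.\ that $\A_1$ is abelian under the Lie bracket $[\cdot,\cdot]$. This is immediate from the symbolic bracket formula (\ref{brom}): applying it with $f\to g_1$ having symbol $\hu\omega(\xi_1)$ and $g\to h_1\in\A_1$ having symbol $\hu\,b(\xi_1)$ (so $m=1$) gives $\phi([g_1,h_1])=\hu\,b(\xi_1)\big(\omega(\xi_1)-\omega(\xi_1)\big)=0$. Hence $[g_1,h_1]=0$ and therefore $\pi_1([g,h])=0$.

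Finally I would invoke Proposition \ref{prolin} in its contrapositive form: since $f_1=\pi_1(f)\neq 0$, a symmetry in $\A'$ with vanishing linear part must itself be zero. As $[g,h]\in\cC_f\subset\A'$ has $\pi_1([g,h])=0$, we conclude $[g,h]=0$, so $\cC_f$ is commutative. The step I expect to carry the real content is the second one, namely the careful bookkeeping of the grading ensuring $\pi_1([g,h])=[g_1,h_1]$ precisely; once this is pinned down, the remaining ingredients reduce to the one-line symbolic computation $[\A_1,\A_1]=0$ and a direct appeal to the already-established Proposition \ref{prolin}.
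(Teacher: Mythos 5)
Your proposal is correct and is essentially the paper's own proof: the paper likewise notes (via Proposition \ref{prolin}) that every nonzero symmetry has a nonzero linear term, observes that $[g,h]$ is a symmetry without a linear term, and concludes $[g,h]=0$. The only difference is that you spell out what the paper leaves implicit, namely that the grading $[\A_i,\A_j]\subset\A_{i+j-1}$ forces $\pi_1([g,h])=[g_1,h_1]$ and that this vanishes by the symbolic formula (\ref{brom}) with $m=1$.
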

\begin{proof} By Proposition \ref{prolin} a symmetry of  equation $u_t=f\in\A'$ with a non-zero linear term $f_1=\pi_1(f)\ne 0$ must contain a linear term. Let $g, h$ be two symmetries. Then $[g, h]$ is a symmetry without a linear term implying that $[g, h]=0$.
\end{proof}

For a given D$\Delta$E,  using symbolic representation, we are able to explicitly present its symmetries as follows:
\begin{The}\label{thesym} Consider a nonabelian D$\Delta$E
\begin{equation}
\label{eqsym}
u_t=f,\quad f\in\A',\quad \hat{f}=\hu\omega(\xi_1)+\sum_{k\ge 2}\hu^ka_k(\xi_1,\ldots,\xi_k),\quad\omega(\xi_1)\ne 0.
\end{equation}
Let $g\in\A'$ be a symmetry. Then the coefficients $A_k$ of its symbolic representation
\begin{equation}
\label{symsym}
\hat{g}=\hu\Omega(\xi_1)+\sum_{k\ge 2}\hu^kA_k(\xi_1,\ldots,\xi_k)
\end{equation}
can be determined recursively:
\begin{eqnarray}
\label{symcoef2}
&&A_2(\xi_1,\xi_2)=\frac{G^{\Omega}(\xi_1,\xi_2)}{G^{\omega}(\xi_1,\xi_2)}a_2(\xi_1,\xi_2),\\
\label{symcoefk}
&&A_k(\xi_1,\ldots,\xi_k)=\frac{1}{G^{\omega}(\xi_1,\ldots,\xi_k)} \left(
G^{\Omega}(\xi_1,\ldots,\xi_k) a_k(\xi_1,\ldots,\xi_k) \right.\\ \nonumber &&\qquad +\sum_{j=2}^{k-1}\sum_{i=1}^jA_j(\xi_1,\ldots,\xi_{i-1},\prod_{s=i}^{i+k-j}\xi_s,\xi_{i+k+1-j},\ldots,\xi_k)a_{k+1-j}(\xi_i,\ldots,\xi_{i+k-j})\\ \nonumber
&&\qquad \left.-\sum_{j=2}^{k-1}\sum_{i=1}^ja_j(\xi_1,\ldots,\xi_{i-1},\prod_{s=i}^{i+k-j}\xi_s,\xi_{i+k+1-j},\ldots,\xi_k)A_{k+1-j}(\xi_i,\ldots,\xi_{i+k-j}) \right),
\end{eqnarray}
where $k\geq 3$ and
\begin{equation}
\label{Gw}
G^\varkappa (\xi_1,...,\xi_m)=\varkappa(\prod_{i=1}^{m}\xi_i)-\sum_{i=1}^{m}\varkappa(\xi_i), \quad \varkappa=\omega,\ \Omega.
\end{equation}
\end{The}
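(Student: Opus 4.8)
The plan is to substitute the symbolic representations of $f$ and $g$ into the symmetry condition $[f,g]=0$ and extract the homogeneous component of degree $k$ for each $k$, using the grading property $[\A_n,\A_m]\subset\A_{n+m-1}$. First I would write $\hat f=\hu\omega(\xi_1)+\sum_{j\ge 2}\hu^j a_j$ and $\hat g=\hu\Omega(\xi_1)+\sum_{j\ge 2}\hu^j A_j$, and note that the component of $[f,g]$ landing in $\hat{\A}_k$ is, by \eqref{brsymb}, a sum of cross terms $[f_i,g_{k+1-i}]$ with $i+（k+1-i)-1=k$. Collecting all these and setting the sum to zero should give, after clearing the common factor $\hu^k$, a linear relation of the form
\begin{equation*}
G^{\omega}(\xi_1,\ldots,\xi_k)\,A_k(\xi_1,\ldots,\xi_k)=\Big(\text{terms involving only }A_2,\ldots,A_{k-1}\text{ and }a_2,\ldots,a_k\Big).
\end{equation*}

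The key computation is to identify exactly which pieces contribute the leading factor $G^{\omega}$ and $G^{\Omega}$. The two ``extreme'' pairings are $[f_1,g_k]$ and $[g_1,f_k]$, i.e. the brackets of the linear parts against the degree-$k$ parts. By the special formula \eqref{brom}, $[f_1,g_k]$ contributes $\hu^k A_k(\xi_1,\ldots,\xi_k)\big(\omega(\xi_1)+\cdots+\omega(\xi_k)-\omega(\xi_1\cdots\xi_k)\big)=-\hu^k G^{\omega}(\xi_1,\ldots,\xi_k)A_k$, which isolates the $G^{\omega}A_k$ term on the left-hand side, while $[g_1,f_k]$ contributes $-\hu^k G^{\Omega}(\xi_1,\ldots,\xi_k)a_k$, the leading driving term on the right. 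All remaining pairings $[f_j,g_{k+1-j}]$ with $2\le j\le k-1$ involve only lower-degree coefficients and, via the general bracket formula \eqref{brsymb}, produce precisely the double sums appearing in \eqref{symcoefk}, with the argument-merging $\prod_{s=i}^{i+k-j}\xi_s$ coming directly from the substitution $\xi_i\mapsto\prod\xi_s$ dictated by \eqref{brsymb}. Dividing by $G^{\omega}(\xi_1,\ldots,\xi_k)$ yields the stated recursion; the base case $k=2$ reduces to \eqref{symcoef2} since no intermediate pairings exist.

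I would prove this by induction on $k$. The recursive structure is immediate once the degree-$k$ component of $[f,g]=0$ is written out, because every term other than $[f_1,g_k]$ and $[g_1,f_k]$ involves $A_j$ or $a_j$ with index strictly between $2$ and $k-1$, so the relation genuinely determines $A_k$ from previously computed data. The main obstacle, and the step requiring the most care, is the bookkeeping of indices in \eqref{brsymb}: one must verify that restricting the double sum $\sum_i\sum_j$ in the general bracket formula to the intermediate range $2\le j\le k-1$, and matching the index shifts $i+k-j$, $i+k+1-j$ against the arguments in \eqref{symcoefk}, reproduces the stated expression exactly, including the distinction between the two sums (the one with $A_j(\ldots)a_{k+1-j}$ coming from $g_*(f)$-type terms and the one with $a_j(\ldots)A_{k+1-j}$ from $f_*(g)$-type terms). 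The essential analytic point is simply that $G^{\omega}(\xi_1,\ldots,\xi_k)\not\equiv 0$ as a rational/Laurent expression, so division is legitimate in the field of fractions; this is guaranteed by $\omega\ne 0$ and is implicit in the statement, so no convergence or invertibility subtlety arises beyond formal manipulation.
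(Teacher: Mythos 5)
Your proposal is correct and is essentially the paper's own argument: project $[f,g]=0$ onto graded components via $\hat{\pi}_k$, use \eqref{brom} for the pairings against the linear parts (producing the $G^{\omega}A_k$ and $G^{\Omega}a_k$ terms) and \eqref{brsymb} for the intermediate pairings (producing the two double sums), then solve the resulting triangular linear system for $A_k$ --- the paper states the $k\ge 3$ case in a single line, and you supply the same index bookkeeping in more detail, plus the (correct) observation that $\omega\ne 0$ guarantees $G^{\omega}\not\equiv 0$ so the division is legitimate. One cosmetic remark: within the expansion of $[f,g]$ the extreme pairings are $[f_1,g_k]$ and $[f_k,g_1]$ (the latter contributing $+\hu^k G^{\Omega}a_k$), so labelling the second one $[g_1,f_k]$ with value $-\hu^k G^{\Omega}a_k$ silently switches to the expansion of $[g,f]$; since both brackets vanish, this orientation slip does not affect the recursion you obtain.
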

\begin{proof}
 Since $g$ is a symmetry of equation \eqref{eqsym}, we have $[f, g]=0$, which is equivalent to $\pi_{k}([f,g])=0$, i.e.,
\begin{equation}\label{grasym}
\hat{\pi}_k(\phi([f,g]))=0,\,\,k=1,2,\ldots
\end{equation}
in symbolic representation. When $k=1$, it is zero for any choices of $\omega(\xi_1)$ and $\Omega(\xi_1)$. When $k=2$, vanishing of $\hat{\pi}_2\big(\phi([f,g])\big)
=[\omega(\xi_1), A_2(\xi_1, \xi_2)]+[a_2(\xi_1,\xi_2), \Omega(\xi_1)]$ is equivalent to
$$
\hu^2\bigg(A_2(\xi_1,\xi_2)\big(\omega(\xi_1)+\omega(\xi_2)- \omega(\xi_1\xi_2)\big)-a_2(\xi_1,\xi_2)\big(\Omega(\xi_1)+\Omega(\xi_2))-\Omega(\xi_1\xi_2)\big)\bigg)=0
$$
using \eqref{brom}. Taking into account the notation defined by \eqref{Gw}, we obtain the expression of $A_2(\xi_1,\xi_2)$ as in (\ref{symcoef2}).
Vanishing of $\hat{\pi}_k(\phi([f,g]))$
for $k\geq 3$ is equivalent to the expression of $A_k(\xi_1,\ldots,\xi_k)$ as in (\ref{symcoefk}).
\end{proof}
From Theorem \ref{thesym} it follows that a symmetry $g$ is uniquely determined by its linear part $\Omega(\xi_1)$. For a given $\Omega(\xi_1)$, all subsequent coefficients $A_k(\xi_1,\ldots,\xi_k)$ can be found recursively by formulae (\ref{symcoef2}) and (\ref{symcoefk}) in terms of the symbolic representation of the equation. However, it does not imply that every evolutionary equation possesses symmetries. To represent an element of the difference algebra $\A$ coefficients $A_k(\xi_1,\ldots,\xi_k)$ must be Laurent polynomials. For a generic $\Omega(\xi_1)$, coefficients $A_k(\xi_1,\ldots,\xi_k)$ given by (\ref{symcoef2})-(\ref{symcoefk}) are rational functions as these expressions contain denominators $G^{\omega}$ (unless $\omega(\xi_1)=const$ -- see Example \ref{Lin} below). In order to define symbols of elements of the difference algebra $\A$, these denominators must cancel with appropriate factors in the numerators.

We call a linear term $\hu\Omega(\xi_1),\,\Omega(\xi_1)\in\C[\xi_1,\xi_1^{-1}]$ {\it admissible} for the equation (\ref{eqsym}) if it is a linear term of a symmetry of the equation, i.e., all the coefficients $A_k$ given by (\ref{symcoef2})-(\ref{symcoefk}) are Laurent polynomials. As a $\C$-linear combination of symmetries is again a symmetry, the set of all admissible linear terms is a vector space over $\C$. We denote this space by $V_f$. For an integrable equation, the algebra of its symmetries and the vector space $V_f$ are infinite dimensional.

We say that $g$ is an approximate symmetry of degree $s$ if coefficients $A_k(\xi_1,\ldots,\xi_k)$, $k=2,\ldots,s$ are Laurent polynomials. When $f\in\overline{\A}$ is a formal series,  $g\in\overline{\A}$ with the symbolic representation (\ref{symsym}) is a symmetry if its coefficients $A_k$ are Laurent polynomials for arbitrary large $k$.
\begin{Ex} Consider the nonabelian Volterra equation \eqref{volt}.
Note that the right hand side of the equation does not contain a linear term.  However, we introduce a shift transformation
\begin{equation}\label{shift}
u_i\to u_i+1,\,\,i\in\Z,
\end{equation}
which leads to
$$
u_t=f:=u_1-u_{-1}+uu_1-u_{-1}u,
$$
whose symbolic representation is
$$
\phi(f)=\hu\omega(\xi_1)+\hu^2a_2(\xi_1,\xi_2),\qquad
\omega(\xi_1)=\xi_1-\xi_1^{-1}, \quad a_2(\xi_1,\xi_2)=\xi_2-\xi_1^{-1}.
$$
It is known (see e.g. \cite{CasatiWang}) that the equation possesses an infinite hierarchy of symmetries with linear terms $\hu\Omega(\xi_1)=\hu (\xi_1^k-\xi_1^{-k}),\,\,k=2,3,\ldots$, i.e. $\hu (\xi_1^k-\xi_1^{-k})\in V_f,\,\,k=2,3,\ldots$.

In the case $k=2$, from the Theorem \ref{thesym} it follows that
$$
A_2(\xi_1,\xi_2)=\frac{(1+\xi_1)(1+\xi_2)(\xi_1^2\xi_2^2-1)}{\xi_1^2\xi_2},\quad A_3(\xi_1,\xi_2,\xi_3)=\frac{(\xi_1\xi_2\xi_3-1)(\xi_1\xi_2\xi_3+\xi_1\xi_2+\xi_1+1)}{\xi_1^2\xi_2},
$$
and $A_k(\xi_1,\ldots,\xi_k)=0,\,\,k>3$. This corresponds to
\begin{eqnarray*}
g&=&u_2-u_{-2}+(u+u_1)u_2-u_{-2}(u+u_{-1})+(u+u_1)u_1-u_{-1}(u+u_{-1})\\
&&+uu_1u_2-u_{-2}u_{-1}u+u(u+u_1)u_1-u_{-1}(u+u_{-1})u.
\end{eqnarray*}
By the inverse transformation of \eqref{shift}, i.e., $u_i\to u_i-1$, we obtain the symmetry of (\ref{volt}):
$$
u_{\tau}=uu_1u_2-u_{-2}u_{-1}u+u(u+u_1)u_1-u_{-1}(u+u_{-1})u-4(uu_1-u_{-1}u).
$$
As the equation itself is a trivial symmetry, we can remove the last term (the linear combination of symmetries is again a symmetry) and  obtain the same symmetry as in the Example \ref{EG2}.
\end{Ex}

\begin{Ex} Consider the nonabelian additive Bogoyavlensky equation
$$
u_t=u\left(\sum_{i=1}^nu_i\right)-\left(\sum_{i=1}^nu_{-i}\right)u,\quad n\in\N,\,n>1.
$$
We introduce a linear term by the transformation \eqref{shift} and consider instead
\begin{equation}
\label{bgexshift}
u_t=\sum_{i=1}^nu_i-\sum_{i=1}^nu_{-i}+u\left(\sum_{i=1}^nu_i\right)-\left(\sum_{i=1}^nu_{-i}\right)u.
\end{equation}
The expression for symbolic representation of the linear term $\hu\omega(\xi_1)$ is 
$$
\omega(\xi_1)=P(\xi_1)-P(\xi_1^{-1}),\quad P(\xi_1)=\xi_1^n+\xi_1^{n-1}+\cdots+1=\frac{\xi_1^{n+1}-1}{\xi_1-1}.
$$
It is known that equation (\ref{bgexshift}) possesses symmetries with linear terms (in the symbolic representation) given by
$\hu\Omega(\xi_1)=\hu\left(\big(P(\xi_1)\big)^k-\big(P(\xi_1^{-1})\big)^{k}\right),\,\,k=2,3,\ldots$ \cite{BG}. Theorem \ref{thesym} can be used to find the exact symbolic representations of symmetries for any fixed $k=2,3,\ldots$.
\end{Ex}

\begin{Ex}\label{Lin} Consider the following nonabelian D$\Delta$E:
$$
u_t=\alpha u+\sum_{k\ge 2}f_k,\quad f_k\in\A_k,\quad \alpha\in\C^*,
$$
whose linear term in the symbolic representation is $\alpha\hu$, i.e., $\omega(\xi_1)=\alpha=const$. This leads to the denominators $G^{\omega}(\xi_1,\ldots,\xi_k)=-(k-1)\alpha=const$ in (\ref{symcoef2}) and (\ref{symcoefk}). Therefore, this equation possesses symmetries with any choice of a linear term $\hu\Omega(\xi_1)$.
In fact, this equation is linearisable by an explicit changes of variables.

Let $\hat{f_k}=\hu^ka_k(\xi_1,\ldots,\xi_k)$, $k=2,3,\ldots$. Then $w=u+\sum_{k\ge 2}W_k,\,W_k\in\A_k$ satisfies a linear equation
$$
w_t=\alpha w,
$$
where the symbolic representations of $W_k$, i.e., $\phi(W_k)=\hu^k b_k(\xi_1,\ldots,\xi_k)$ are determined by

$$
b_2(\xi_1,\xi_2)=-\alpha^{-1}\hu^2a_2(\xi_1,\xi_2),
$$
and
\begin{eqnarray*}
&& b_p(\xi_1,\ldots,\xi_p)=\frac{1}{\alpha(1-p)} a_p(\xi_1,\ldots,\xi_p)\\
&&\quad +\frac{1}{\alpha(1-p)}  \sum_{k+s-1=p}\sum_{i=1}^kb_k(\xi_1,\ldots,\xi_{i-1},\prod_{j=i}^{i+s-1}\xi_j,\xi_{i+s},\ldots,\xi_{k+s-1})a_s(\xi_i,\ldots,\xi_{i+s-1}),\,\, p>2 .
\end{eqnarray*}
\end{Ex}

Theorem \ref{thesym} serves as an integrability test for a given equation or a class of equations if we assume the existence of a symmetry with a specified linear term $\hu\Omega(\xi_1)$. In such cases, the integrability conditions are that $A_k(\xi_1,\ldots,\xi_k),\,k=2,3,\ldots$ are Laurent polynomials. However, if the assumption of $\Omega(\xi_1)$ is incorrect, the test is inconclusive. It is a challenging problem to identify the vector space of admissible linear terms. In the differential case this problem had been solved for scalar polynomial evolutionary equations \cite{mr99g:35058}, two-component systems of evolutionary equations \cite{bswsys, swsys,vdk3, gz},  and odd order non-evolutionary equations \cite{nw07}. This remains an open problem to describe the sets of admissible linear terms for the scalar abelian and nonabelian D$\Delta$Es. In the subsequent section, we will formulate the necessary integrability conditions in the universal form, independent on the structure of the vector space of admissible linear terms.

\section{The symmetry approach and integrability conditions}\label{intconds}
The goal of the symmetry approach is to find necessary conditions for the existence of an infinite-dimensional algebra of symmetries for a given equation. This formalism was originally proposed and subsequently developed by Shabat and his team for the case of abelian PDEs (see e.g., \cite{SokShab,MikShabYam, MikShabSok}). The integrability conditions were formulated in terms of a formal recursion operator. Namely, the existence of an infinite-dimensional algebra of symmetries for an equation implies the existence of a formal pseudo-differential series of an arbitrary order, particularly of order 1, with coefficients belonging to an appropriate differential field/ring or its extension. This approach allows us to obtain necessary integrability conditions explicitly, which can be used not only as an integrability test for a given equation but also to solve classification problems by producing complete lists of integrable PDEs.
Later, the formalism was extended to integro-differential equations \cite{mn1, mn2}, some non-evolutionary PDEs \cite{mnw07}, and partial-difference and D$\Delta$Es \cite{Yam, adler14, MWX, MNWDiff}.

The universality of these integrability conditions means that the existence of the formal recursion operator follows from the existence of an infinite-dimensional algebra of symmetries and does not depend on the structure of the symmetry algebra. Deriving these conditions requires the existence of fractional powers of formal series. It is known that in both abelian and nonabelian cases of formal difference series, their fractional powers with coefficients in $\A$ or $\cM$ correspondingly may not exist. To address this issue in the abelian case, the quasi-local extension of the difference algebra was introduced \cite{MNWDiff}. In this section, we first provide this extension in the nonabelian case and subsequently formulate the integrability conditions in universal form, i.e., independent of the structure of the symmetry algebra.

\subsection{Quasi-local extension of the algebra of formal series $\hat{\A}((\eta))$}\label{ext}

In Section \ref{diffopsymb}, we have introduced the symbolic representation $\hat{\A}((\eta))$ for the algebra of formal series $\A((S))$. A generic element $\hat{B}$ in $\hat{\A}((\eta))$ is given by \eqref{fs}.
In what follows it is convenient to consider more general formal series.

Let $\ring$ be the set of formal series
whose coefficients are {\it rational functions} in their variables, that is,
$$
\ring=\left\{\sum_{p,q\ge 0}\hu_l^p\hu_r^qa_{pq}(\xi_1,\ldots,\xi_p,\eta,\zeta_1,\ldots,\zeta_q)\,\big|\,a_{pq}\in\C(\xi_1,\ldots,\xi_p,\zeta_1,\ldots,\zeta_q,\eta)\right\}.
$$
This set is  a non-commutative algebra over $\C$ if we equip it with the natural addition, multiplication by constants and the composition rule (\ref{comp}). Clearly, we have $\hat{\A}((\eta))\subset \ring$, while the converse is not true, i.e., a generic element of $\ring$ is not a symbolic representation of a formal series in $\A((S))$. The algebra $\ring$ admits the same natural grading $\ring=\bigoplus_{p,q\ge 0}\ring_{p,q}$, where $\ring_{p,q}$ consists of elements of the form $\hu_l^p\hu_r^qa_{pq}(\xi_1,\ldots,\xi_p,\eta,\zeta_1,\ldots,\zeta_q)$.

\begin{Def} An expression $\hu_l^n\hu_r^ma(\xi_1,\ldots,\xi_n,\eta,\zeta_1,\ldots,\zeta_m)\in\ring$ is called $k$-local if its first $k$ coefficients
of its power expansion at $\eta\to\infty$
$$a(\xi_1,\ldots,\xi_n,\eta,\zeta_1,\ldots,\zeta_m)=\sum_{s\le p_{nm}}a^{(s)}(\xi_1,\ldots,\xi_n,\zeta_1,\ldots,\zeta_m)\eta^s, \quad p_{nm}\in\Z$$
are Laurent polynomials in all variables $\xi_1, \ldots, \xi_n, \zeta_1, \ldots, \zeta_m$.

A formal series $\hat{A}=\sum_{p,q\ge 0}\hu_l^p\hu_r^qa_{pq}(\xi_1,\ldots,\xi_p,\eta,\zeta_1,\ldots,\zeta_q)\in\ring$ is called $k$-local
if all its terms are $k$-local. We call the formal series $\hat{A}$ local if it is $k$-local for all $k\in \N$.
\end{Def}
Clearly, every formal series in $\hat{\A}((\eta))$ is local.

We now address the problem of extracting roots of a formal series.
Let $\hat{A}\in\ring$ be a formal series of the form 
\begin{equation}
\label{anroot}
\hat{A}=\eta^N+\sum_{p+q\ge 1}\hu_l^p\hu_r^qa_{pq}(\xi_1,\ldots,\xi_p,\eta,\zeta_1,\ldots,\zeta_q),\quad N\in\N.
\end{equation}
We formally seek a formal series ($N$-th root)
\begin{equation}
\label{broot}
\hat{B}=\eta+\sum_{p+q\ge 1}\hu_l^p\hu_r^qb_{pq}(\xi_1,\ldots,\xi_p,\eta,\zeta_1,\ldots,\zeta_q)\in \ring
\end{equation}
such that $\hat{B}^N=\hat{A}$. Using the composition rule (\ref{comp}) to compute $\hat{B}^N$ and equating the projections
$\hat{\pi}_{p,q}(\hat{B}^N-\hat{A})=0,\,p+q>0$, we find
\begin{eqnarray}
\label{root10}
p=1,q=0&:&\eta^{N-1}\Theta_N(\xi_1)b_{10}(\xi_1,\eta)-a_{10}(\xi_1,\eta)=0,\\
\label{root01}p=0,q=1&:& \eta^{N-1}\Theta_N(\zeta_1)b_{01}(\eta,\zeta_1)-a_{01}(\eta,\zeta_1)=0,\\
\label{rootpq}
p+q>1&:&\eta^{N-1}\Theta_N(\xi_1\cdots\xi_p\zeta_1\cdots\zeta_q)b_{pq}(\xi_1,\ldots,\xi_p,\eta,\zeta_1,\ldots,\zeta_q)+f_{pq}\\ \nonumber &&\quad\quad-a_{pq}(\xi_1,\ldots,\xi_p,\eta,\zeta_1,\ldots,\zeta_q)=0,
\end{eqnarray}
where the function $\Theta_N$ is given by
\begin{equation}
\label{thetan}
\Theta_N(\xi)=1+\xi+\cdots+\xi^{N-1}=\frac{\xi^N-1}{\xi-1}
\end{equation} 
and the functions $f_{pq}$ are completely determined by $b_{p'q'},\,p'+q'<p+q$.

Relations (\ref{root10})-(\ref{rootpq}) form a triangular system of equations for functions $b_{pq}$, which can be found successively:
\begin{eqnarray}
&&b_{10}(\xi_1,\eta)=\frac{a_{10}(\xi_1,\eta)}{\Theta_N(\xi_1)}\eta^{1-N};\quad b_{01}(\eta,\zeta_1)=\frac{a_{01}(\eta,\zeta_1)}{\Theta_N(\zeta_1)}\eta^{1-N};\label{bcoef1}\\
&&b_{pq}(\xi_1,\ldots,\xi_p,\eta,\zeta_1,\ldots,\zeta_q)=\frac{a_{pq}(\xi_1,\ldots,\xi_p,\eta,\zeta_1,\ldots,\zeta_q)-f_{pq}}{\Theta_N(\xi_1\cdots\xi_p\zeta_1\cdots\zeta_q)}\eta^{1-N},
\label{bcoefpq}
\end{eqnarray}
where $p+q>1$.
Therefore, there exists a formal series $\hat{B}\in\ring$ such that $\hat{B}^N=\hat{A}$.

Assume that the formal series $\hat{A}$ is local. Then from the relations (\ref{bcoef1})-(\ref{bcoefpq}), it follows that for a generic $\hat{A}$, the formal series $\hat{B}$ is not local as the denominators of $b_{pq}$ given by  (\ref{bcoef1})-(\ref{bcoefpq}) contain Laurent polynomials $\Theta_N(\xi_1\cdots\xi_p\zeta_1\cdots\zeta_q)$: the locality occurs if and only if $a_{pq}(\xi_1,\ldots,\xi_p,\eta,\zeta_1,\ldots,\zeta_q)-f_{pq}$ are divisible by $\Theta_N(\xi_1\cdots\xi_p\zeta_1\cdots\zeta_q)$ for every $p+q>0$, where we take $f_{10}=f_{01}=0$. We therefore consider $\Theta_N$-{\it quasi-local} extension of the algebra of formal series $\hat{\A}((\eta))$.

Let $\theta_N=\Theta^{-1}_N(\eta)$, where $\Theta_N(\eta)$ is given by (\ref{thetan}). The action of a pseudo-difference operator $\theta_N$ on an element $\hu^n_l\hu^m_r a(\xi_1,\ldots,\xi_n,\eta,\zeta_1,\ldots,\zeta_m)$ is defined as
$$
\theta_N(\hu^n_l\hu^m_r a(\xi_1,\ldots,\xi_n,\eta,\zeta_1,\ldots,\zeta_m))=\hu^n_l\hu^m_r\frac{a(\xi_1,\ldots,\xi_n,\eta,\zeta_1,\ldots,\zeta_m)}{\Theta_N(\xi_1\cdots\xi_n\zeta_1\cdots\zeta_m)}.
$$
We now define the sequence of extensions
$$
\check{\A}^{(N)}_{0}((\eta))=\hat{\A}((\eta)),\quad \check{\A}^{(N)}_{s+1}((\eta))=\overline{\check{\A}^{(N)}_{s}((\eta))\bigcup \theta_N(\check{\A}^{(N)}_{s}((\eta)))},\quad s=0,1,\ldots ,
$$ 
where the horizontal line denotes the closure under addition and multiplication operations. The index $s$ shows the maximal ``nesting" degree of the operator $\theta_N$. Clearly,
$$
\check{\A}^{(N)}_{0}((\eta))=\A((\eta))\subset \check{\A}^{(N)}_{1}((\eta))\subset\check{\A}^{(N)}_{2}((\eta))\subset\cdots,
$$
which depends on $N>1$ in the definition of the operator $\theta_N=\Theta^{-1}_N(\eta)$ defined as (\ref{thetan}).

\begin{Def} An element $\hu^n_l\hu^m_r a(\xi_1,\ldots,\xi_n,\eta,\zeta_1,\ldots,\zeta_m)\in\ring$ is called $k$-quasi-local if there exist $N>0$ and $s\ge 0$ such that the first $k$ terms of its power expansion at $\eta\to\infty$
\begin{eqnarray*}
\hu^n_l\hu^m_r a(\xi_1,\ldots,\xi_n,\eta,\zeta_1,\ldots,\zeta_m)=\sum_{i\le p_{nm}}\hu^n_l\hu^m_r a^{(i)}(\xi_1,\ldots,\xi_n,\zeta_1,\ldots,\zeta_m)\eta^i,
\quad p_{nm}\in \Z
\end{eqnarray*}
belong to $\check{\A}^{(N)}_{s}((\eta))$.

A formal series $\hat{A}=\sum_{p,q\ge 0}\hu_l^p\hu_r^qa_{pq}(\xi_1,\ldots,\xi_p,\eta,\zeta_1,\ldots,\zeta_q)$ is called $k$-quasi-local if all its terms are $k$-quasi-local. It is called quasi-local if it is $k$-quasi-local for every $k\in \N$.
\end{Def}

Note that $\Theta_N(\xi)|\Theta_{Nk}(\xi),\,k\in\N$. We have $\check{\A}^{(N)}_{s}((\eta))\subset \check{\A}^{(Nk)}_{s}((\eta))$ for $k\in\N,\ s\in \N$.
We say that an element/formal series is quasi-local with the {\it minimal extension} of order $N$ if $N$ is the minimal number such that the coefficients of the expansion belong to $\check{\A}^{(N)}_{s}((\eta))$ for some $s\in \N$.

From the computation above of the $N$-th root of a formal series (\ref{anroot}), it follows
\begin{Pro} Let 
$$
\hat{A}=\eta^N+\sum_{p+q\ge 1}\hu_l^p\hu_r^qa_{pq}(\xi_1,\ldots,\xi_p,\eta,\zeta_1,\ldots,\zeta_q),\quad N\in\N,
$$
be a formal series whose  terms $a_{pq},\,\,p+q\le s$ are local. Then there exists a unique formal series
$$
\hat{B}=\eta+\sum_{p+q\ge 1}\hu_l^p\hu_r^qb_{pq}(\xi_1,\ldots,\xi_p,\eta,\zeta_1,\ldots,\zeta_q),
$$
satisfying $\hat{B}^N=\hat{A}$, whose  terms $\hu_l^p\hu_r^qb_{pq},\,\,p+q\le s$ are quasi-local.
\end{Pro}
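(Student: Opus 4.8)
The plan is to regard the existence and uniqueness of $\hat{B}\in\ring$ as already settled by the triangular system (\ref{root10})--(\ref{rootpq}) and its explicit solution (\ref{bcoef1})--(\ref{bcoefpq}): fixing the leading term of $\hat{B}$ to be $\eta$ forces each $b_{pq}$ to be determined uniquely and successively from the lower-weight coefficients, and the resulting series lies in $\ring$. Hence the only genuinely new assertion is that the terms $b_{pq}$ with $p+q\le s$ are quasi-local, and I would prove the sharper statement that $b_{pq}\in\check{\A}^{(N)}_{p+q}((\eta))$ for every $p+q\le s$, where $N$ is the fixed order of the root, by induction on the weight $p+q$. The starting observation is that (\ref{bcoefpq}) can be rewritten as $b_{pq}=\theta_N\big((a_{pq}-f_{pq})\eta^{1-N}\big)$, since the denominator $\Theta_N(\xi_1\cdots\xi_p\zeta_1\cdots\zeta_q)$ occurring there is exactly the factor by which $\theta_N=\Theta_N^{-1}(\eta)$ divides a $(p,q)$-graded element; the same reading applies to the base cases (\ref{bcoef1}) with the convention $f_{10}=f_{01}=0$.

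For the base case $p+q=1$, the data $a_{10},a_{01}$ are local by hypothesis, so they lie in $\check{\A}^{(N)}_0((\eta))=\hat{\A}((\eta))$; multiplication by $\eta^{1-N}$ keeps them there, and a single application of $\theta_N$ places $b_{10},b_{01}$ in $\check{\A}^{(N)}_1((\eta))$. For the inductive step I would analyse the structure of $f_{pq}$: expanding $\hat{B}^N$ through the composition rule (\ref{comp}), the $(p,q)$-graded part that is linear in the unknown produces $\eta^{N-1}\Theta_N(\xi_1\cdots\xi_p\zeta_1\cdots\zeta_q)\,b_{pq}$, while $f_{pq}$ gathers all remaining contributions, each a composition of powers of $\eta$ with at least two factors $b_{p'q'}$ of weights $p'+q'<p+q$. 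By the induction hypothesis every such factor lies in $\check{\A}^{(N)}_{p'+q'}((\eta))\subset\check{\A}^{(N)}_{(p+q)-1}((\eta))$, and since each $\check{\A}^{(N)}_j((\eta))$ is closed under addition and composition by construction and the powers of $\eta$ are local, it follows that $f_{pq}\in\check{\A}^{(N)}_{(p+q)-1}((\eta))$. The term $a_{pq}$ is local (as $p+q\le s$), hence lies in the same space, so $(a_{pq}-f_{pq})\eta^{1-N}\in\check{\A}^{(N)}_{(p+q)-1}((\eta))$, and applying $\theta_N$ once more yields $b_{pq}\in\check{\A}^{(N)}_{p+q}((\eta))$. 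In particular all coefficients with $p+q\le s$ lie in the single space $\check{\A}^{(N)}_{s}((\eta))$ and are thus quasi-local.

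The step I expect to be the main obstacle is the bookkeeping in the inductive step: one must confirm that $f_{pq}$ really involves only lower-weight coefficients $b_{p'q'}$, so that the induction closes, and that the division in (\ref{bcoefpq}) is precisely one application of $\theta_N$ at the same order $N$, so the nesting degree rises by exactly one rather than uncontrollably. The closure of $\check{\A}^{(N)}_j((\eta))$ under composition, together with the inclusions $\check{\A}^{(N)}_a((\eta))\subset\check{\A}^{(N)}_b((\eta))$ for $a\le b$, is what keeps the order $N$ fixed and the nesting degree bounded by the weight; the divisibility $\Theta_N\mid\Theta_{Nk}$ recorded before the proposition would be needed only to reconcile different orders, a complication the sharper inductive statement deliberately avoids.
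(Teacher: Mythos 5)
Your proposal is correct and follows essentially the same route as the paper: the paper's own proof is a one-line observation that the explicit recursion (\ref{bcoef1})--(\ref{bcoefpq}) places each $\hu_l^p\hu_r^q b_{pq}$ with $p+q\le s$ in $\check{\A}^{(N)}_{s}((\eta))$, which is exactly the content of your induction. Your write-up merely makes the triangularity of $f_{pq}$ and the closure properties of the extensions $\check{\A}^{(N)}_{j}((\eta))$ explicit, and in doing so records the slightly sharper bound $b_{pq}\in\check{\A}^{(N)}_{p+q}((\eta))$ on the nesting degree.
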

\begin{proof}
 From relations (\ref{bcoef1})-(\ref{bcoefpq}) follows that if $a_{pq},\,\,p+q\le s$ are local, for all $p+q\le s$ we have $\hu_l^p\hu_r^q b_{pq}\in\check{\A}^{(N)}_{s}((\eta))$.
\end{proof}

This proposition can naturally be extended to formal series beginning with $\h(\eta)\in\C[\eta,\eta^{-1}]$: polynomials generated by $\eta$ and $\eta^{-1}$ over $\C$.
\begin{Pro}\label{rootP} Let
$$
\hat{A}=\h(\eta)+\sum_{p+q\ge 1}\hu_l^p\hu_r^qa_{pq}(\xi_1,\ldots,\xi_p,\eta,\zeta_1,\ldots,\zeta_q), \quad \h(\eta)\in\C[\eta,\eta^{-1}]
$$
be a formal series whose  terms $a_{pq},\,\,p+q\le s$ are local and $\h(\eta)\ne const$. Then there exists a unique formal series
$$
\hat{B}=\eta+\sum_{p+q\ge 1}\hu_l^p\hu_r^qb_{pq}(\xi_1,\ldots,\xi_p,\eta,\zeta_1,\ldots,\zeta_q),
$$
whose  terms $b_{pq},\,\,p+q\le s$ are quasi-local, satisfying $\h(\hat{B})=\hat{A}$.
\end{Pro}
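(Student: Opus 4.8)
The plan is to follow the construction of the $N$-th root in the preceding Proposition almost verbatim, with $\eta^N$ replaced by $\psi$, and to track where the denominators come from. Writing $\psi(\eta)=\sum_k c_k\eta^k$, I would substitute $\hat B=\eta+\sum_{p+q\ge1}\hu_l^p\hu_r^q b_{pq}$ into $\psi(\hat B)=\sum_k c_k\hat B^{\,k}$ and expand grading by grading via the composition rule (\ref{comp}). The grading-$(0,0)$ component is $\psi(\eta)$, matching $\hat A$ and forcing the leading term $\eta$ of $\hat B$. For $p+q\ge1$ I isolate the part of $\psi(\hat B)$ linear in the top coefficient $b_{pq}$: inserting one factor $\hu_l^p\hu_r^q b_{pq}$ with $\eta$'s in the remaining slots and using (\ref{comp})---each left $\eta$ multiplies the internal $\eta$ by $X:=\xi_1\cdots\xi_p\zeta_1\cdots\zeta_q$, each right $\eta$ does not---the linear-in-$b_{pq}$ contribution of $\hat B^{\,k}$ equals $\hu_l^p\hu_r^q\,\eta^{k-1}\Theta_k(X)\,b_{pq}$. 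Summing against $c_k$ yields the coefficient
\[
D_{pq}(\eta,X)=\sum_k c_k\,\eta^{k-1}\Theta_k(X)=\frac{\psi(\eta X)-\psi(\eta)}{\eta\,(X-1)},
\]
so the grading-$(p,q)$ equation is $D_{pq}\,b_{pq}+f_{pq}-a_{pq}=0$, where $f_{pq}$ is a finite sum of products of the $b_{p'q'}$ with $p'+q'<p+q$ (and $f_{10}=f_{01}=0$). For $\psi=\eta^N$ one has $D_{pq}=\eta^{N-1}\Theta_N(X)$, recovering (\ref{bcoef1})--(\ref{bcoefpq}). Since $\psi$ is non-constant, $\psi(\eta X)-\psi(\eta)$ is a nonzero Laurent polynomial divisible by $X-1$, so $D_{pq}\neq0$ as a rational function; hence $b_{pq}=D_{pq}^{-1}(a_{pq}-f_{pq})$ is uniquely determined in $\ring$, giving existence and uniqueness of $\hat B=\eta+\cdots$.

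The substantive part is quasi-locality, which I would prove by induction on $p+q$. Assuming every $a_{pq}$ with $p+q\le s$ local and every $b_{p'q'}$ with $p'+q'<p+q$ quasi-local, closure of the extensions $\check{\A}^{(M)}_{s}((\eta))$ under products---together with $\check{\A}^{(M)}_{s}((\eta))\subset\check{\A}^{(Mk)}_{s}((\eta))$ to reach a common order---shows $f_{pq}$, and hence $a_{pq}-f_{pq}$, is quasi-local. It then suffices that multiplication by $D_{pq}^{-1}$ preserve quasi-locality. Let $N$ be the top $\eta$-degree of $\psi$; the constant term of $\psi$ drops out of $D_{pq}$, so $N\neq0$ may be assumed. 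Expanding at $\eta\to\infty$,
\[
D_{pq}(\eta,X)=c_N\,\eta^{N-1}\Theta_N(X)\Bigl(1+\sum_{j\ge1}r_j(X)\,\eta^{-j}\Bigr),\qquad r_j=\frac{c_{N-j}}{c_N}\,\frac{\Theta_{N-j}(X)}{\Theta_N(X)}.
\]
Inverting the bracket as a geometric series in $\eta^{-1}$ produces, coefficient by coefficient, only Laurent polynomials and powers of the single division $\theta_N$ by $\Theta_N(X)$; since for each fixed $k$ only finitely many such terms affect the first $k$ coefficients of the $\eta$-expansion, each of these coefficients lies in $\check{\A}^{(N)}_{s'}((\eta))$ for some finite $s'$. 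Therefore $D_{pq}^{-1}$, and hence $b_{pq}=D_{pq}^{-1}(a_{pq}-f_{pq})$, is $k$-quasi-local for every $k$, i.e. quasi-local, closing the induction.

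The main obstacle is precisely this inversion step: for a genuine, non-monomial $\psi$ the denominator $D_{pq}$ is no longer the clean factor $\Theta_N(X)$ of the root case, and I must verify that inverting it introduces no irreducible denominator beyond $\Theta_N(X)$. The point that makes this succeed is that the leading $\eta\to\infty$ behavior of $D_{pq}$ is still governed by the single polynomial $\Theta_N(X)$ (with $N$ read off from the extreme non-constant monomial of $\psi$), while every lower-order-in-$\eta$ correction contributes, after geometric expansion, only higher powers of $\theta_N$---raising the nesting index $s$ but never the extension order $N$. In particular the minimal extension order of $\hat B$ divides $\deg_\eta\psi$.
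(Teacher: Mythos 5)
Your proposal is correct and follows essentially the same route the paper takes: the paper states this proposition as the natural extension of the monomial case $\h(\eta)=\eta^N$, whose proof is the triangular system (\ref{root10})--(\ref{rootpq}), and you reproduce exactly that construction with the linearization factor $\frac{\h(\eta X)-\h(\eta)}{\eta(X-1)}$ replacing $\eta^{N-1}\Theta_N(X)$. Your additional observation --- that factoring out the leading term $c_N\eta^{N-1}\Theta_N(X)$ and inverting the remaining bracket as a geometric series introduces only powers of $\Theta_N(X)$ in denominators, hence only nested applications of $\theta_N$ --- is precisely the point the paper leaves implicit, and it is argued correctly.
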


\subsection{Canonical formal recursion operator}\label{canfrop}
We proceed with deriving the necessary integrability conditions in their universal form.

Consider nonabelian D$\Delta$Es \eqref{ddn} of order $(-n, n)$ in the form
\begin{equation}
\label{eqcond}
u_t=f(u_{-n},\ldots,u_n)=\sum_{p\ge 1} f_p,\quad f_1\ne 0,\quad f_p\in\A_p,
\end{equation}
where $f$ is either a polynomial in $\A'$ or a formal series in $\overline{\A}'$.
The symbolic representation of equation (\ref{eqcond}) is presented as
\begin{equation}
\label{eqcondsymb}
\hu_t=\hat{f}=\hu\omega(\xi_1)+\sum_{p\ge 2}\hu^pa_p(\xi_1,\ldots,\xi_p),\quad a_p(\xi_1,\ldots,\xi_p)\in\C[\xi_1,\ldots,\xi_p].
\end{equation}
We assume that $\omega(\xi_1)\ne const$ (see Example \ref{Lin} for $\omega(\xi_1)=const$).

\begin{Def} A quasi-local formal series
\begin{equation}
\label{fropgen}
\Lambda= \h(\eta)+\sum_{p+q\ge 1}\hu_l^p\hu_r^q\h_{pq}(\xi_1,\ldots,\xi_p,\eta,\zeta_1,\ldots,\zeta_q),\quad \h(\eta)\in\C[\eta,\eta^{-1}]
\end{equation}
is called a formal recursion operator for equation (\ref{eqcondsymb}) if $\Lambda$ satisfies
\begin{equation}
\label{fropeq}
\Lambda_t-\hat{f}_*\circ\Lambda+\Lambda\circ\hat{f}_*=0,
\end{equation}
where $\Lambda_t=\D_{\hat f}(\Lambda)$ given by \eqref{symderser}.
\end{Def}
Equation (\ref{fropeq}) is linear in $\Lambda$, and for every $\h(\eta)\in\C[\eta,\eta^{-1}]$ and every $\hat{f}$, it possesses a unique formal solution $\Lambda\in\ring$. Namely, the following theorem holds:

\begin{The}\label{fropthe}
Let $\Lambda$ be a formal series of the form (\ref{fropgen}) satisfying (\ref{fropeq}). For every $\h(\eta)$ in $\C[\eta,\eta^{-1}]$,  the coefficients of the series can be recursively determined from the following system:
\begin{equation}
\label{frop1001}\h_{10}(\xi_1,\eta)=\frac{\h(\xi_1\eta)-\h(\eta)}{G^{\omega}(\xi_1,\eta)}a_2(\xi_1,\eta),\quad \h_{01}(\eta,\zeta_1)=\frac{\h(\zeta_1\eta)-\h(\eta)}{G^{\omega}(\zeta_1,\eta)}a_2(\eta,\zeta_1),
\end{equation}
\begin{eqnarray}
\label{froppq}&&\h_{pq}(\xi_1,\ldots,\xi_p,\eta,\zeta_1,\zeta_q) G^{\omega}(\xi_1,\ldots,\xi_p,\zeta_1,\ldots,\zeta_q,\eta)=\nonumber\\
&&=\left(\h(\xi_1\cdots\xi_p\zeta_1\cdots\zeta_q\eta)-\h(\eta)\right) a_{p+q+1}(\xi_1,\ldots,\xi_p,\eta,\zeta_1,\ldots,\zeta_q)\\ \nonumber
&&+\sum_{\substack{i+j\ge1,\\i'+j'\ge1}}^{\substack{i+i'=p,\\j+j'=q}}
\bigg(\h_{ij}(\xi_1,\ldots,\xi_i,\eta\prod_{s=i+1}^{i+i'}\xi_s\prod_{s'=1}^{j'}\zeta_{s'},\zeta_{j'+1},\ldots,\zeta_{j+j'})a_{i'+j'+1}(\xi_{i+1},\ldots,\xi_{i+i'},\eta,\zeta_1,\dots,\zeta_{j'})\\ \nonumber
&&\qquad-a_{i+j+1}(\xi_1,\ldots,\xi_i,\eta\prod_{s=i+1}^{i+i'}\xi_s\prod_{s'=1}^{j'}\zeta_{s'},\zeta_{j'+1},\ldots,\zeta_{j+j'})\h_{i'j'}(\xi_{i+1},\ldots,_{i+i'},\eta,\zeta_1,\ldots,\zeta_{j'})\bigg)\\ \nonumber
&&+\sum_{i=0}^{p-1}\sum_{k=1}^i\h_{iq}(\xi_1,\xi_{k-1},\prod_{s=k}^{k+p-i}\xi_s,\xi_{k+p-i+1},\ldots,\xi_p,\eta,\zeta_1,\ldots,\zeta_q)a_{p-i+1}(\xi_k,\ldots,\xi_{k+p-i})\\ \nonumber
&&+\sum_{j=0}^q\sum_{k=1}^j\h_{pj}(\xi_1,\ldots,\xi_p,\eta,\zeta_1,\ldots,\zeta_{k-1},\prod_{s=k}^{k+q-j}\zeta_s,\zeta_{k+q-j+1},\ldots,\zeta_q)a_{q-j+1}(\zeta_k,\ldots,\zeta_{k+q-j}),
\end{eqnarray}
where both $p$ and $q$ are nonnegative integers and $p+q>1$.
\end{The}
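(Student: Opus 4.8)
The plan is to substitute the ansatz \eqref{fropgen} into the defining relation \eqref{fropeq} and to extract the coefficients $\h_{pq}$ one bigrading component at a time. First I would rewrite the Fr\'echet derivative of the right-hand side of \eqref{eqcondsymb}, using \eqref{symfre}, in the graded form
\[
\hat f_*=\omega(\eta)+\sum_{a+b\ge 1}\hu_l^a\hu_r^b\,a_{a+b+1}(\xi_1,\ldots,\xi_a,\eta,\zeta_1,\ldots,\zeta_b),
\]
so that the $(0,0)$-component is exactly the linear symbol $\omega(\eta)$, while each genuinely nonlinear $a_k$, $k\ge 2$, supplies the components of positive bigrading. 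The three pieces of \eqref{fropeq} are then computed with the rules already available: $\Lambda_t=\D_{\hat f}(\Lambda)$ from \eqref{symderser}, and the compositions $\hat f_*\circ\Lambda$ and $\Lambda\circ\hat f_*$ from \eqref{comp}. Since all these operations respect the bigrading \eqref{asgr}, the relation \eqref{fropeq} is equivalent to the family of identities $\hat\pi_{p,q}\bigl(\Lambda_t-\hat f_*\circ\Lambda+\Lambda\circ\hat f_*\bigr)=0$, one for each $(p,q)$ with $p+q>0$, which I would analyse in turn.

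The decisive point is that $\h_{pq}$ enters the $(p,q)$-projection only through the highest-grading contributions, where it is multiplied exclusively by symbols of the linear part $\omega$. Pairing the $(0,0)$-part $\omega(\eta)$ of $\hat f_*$ with the $(p,q)$-part $\hu_l^p\hu_r^q\h_{pq}$ of $\Lambda$ produces, by \eqref{comp}, the factor $\omega(\xi_1\cdots\xi_p\zeta_1\cdots\zeta_q\eta)$ from $-\hat f_*\circ\Lambda$ and the factor $\omega(\eta)$ from $+\Lambda\circ\hat f_*$; the linear characteristic in $\Lambda_t$ contributes, via the trivial ($k=1$) merging in \eqref{symderser}, the factor $\sum_{i=1}^{p}\omega(\xi_i)+\sum_{i=1}^{q}\omega(\zeta_i)$. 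Combining these with the signs of \eqref{fropeq} and invoking the definition \eqref{Gw} gives the coefficient of the unknown as $-G^{\omega}(\xi_1,\ldots,\xi_p,\zeta_1,\ldots,\zeta_q,\eta)\,\h_{pq}$. This already settles the base case $p+q=1$, where nothing else survives and the two identities reduce to \eqref{frop1001}.

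It then remains to match every other term of the $(p,q)$-projection with one of the remaining groups in \eqref{froppq}. In the compositions, each pair of a $\Lambda$-component $\h_{cd}$ with an $\hat f_*$-component of bigrading $(a,b)$ subject to $c+a=p$, $d+b=q$ is accounted for as follows: the pair with $(c,d)=(0,0)$, i.e. the leading term $\h(\eta)$ against the full component $a_{p+q+1}$, yields $\bigl(\h(\xi_1\cdots\xi_p\zeta_1\cdots\zeta_q\eta)-\h(\eta)\bigr)a_{p+q+1}$; and the pairs with $c+d\ge1$ and $a+b\ge1$ reproduce, with the argument substitutions dictated by \eqref{comp}, the bilinear sum $\sum\bigl(\h_{ij}a_{i'+j'+1}-a_{i+j+1}\h_{i'j'}\bigr)$ under $i+i'=p$, $j+j'=q$. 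The two remaining sums arise solely from $\Lambda_t$: the nonlinear characteristic $a_k$ with $k=p-i+1$ (respectively $k=q-j+1$), acting through the $\xi$-merging (respectively $\zeta$-merging) part of \eqref{symderser} on $\h_{iq}$ (respectively $\h_{pj}$), generates precisely $\sum_{i=0}^{p-1}\sum_{k=1}^{i}\h_{iq}(\cdots)a_{p-i+1}(\cdots)$ and its $\zeta$-analogue. A bookkeeping check confirms that the composition pairs and the two evolutionary-derivative sums are disjoint and exhaust the projection, which yields exactly \eqref{froppq}.

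Finally I would record triangularity and conclude. Every factor $\h_{ij}$ on the right-hand side of \eqref{froppq} satisfies $i+j<p+q$ or reduces to $\h(\eta)$, so the identity expresses $\h_{pq}$ through data of strictly smaller total grading together with the coefficients $a_k$ of the equation. As $\omega\ne const$, the Laurent polynomial $G^{\omega}(\xi_1,\ldots,\xi_p,\zeta_1,\ldots,\zeta_q,\eta)$ is not identically zero, so dividing by it is legitimate in the field of rational functions and determines $\h_{pq}\in\ring$ uniquely; induction on $p+q$ then produces a unique formal series $\Lambda\in\ring$ for each prescribed $\h(\eta)\in\C[\eta,\eta^{-1}]$. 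I expect the main obstacle to be entirely combinatorial: matching, with the correct index ranges and argument substitutions, the outputs of \eqref{symderser} and \eqref{comp} to the precise shape of the four groups in \eqref{froppq}, and checking that there is neither omission nor double counting — in particular keeping the composition contributions (which feed the bilinear sum) cleanly separated from the evolutionary-derivative contributions (which feed the two merging sums).
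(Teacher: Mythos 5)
Your proposal is correct and follows essentially the same route as the paper's proof: the paper simply observes that the projections $\hat{\pi}_{p,q}(\Lambda_t-\hat{f}_*\circ\Lambda+\Lambda\circ\hat{f}_*)=0$ are linear in $\h_{pq}$ and can be resolved recursively, which is exactly the plan you carry out in detail. Your identification of the coefficient of $\h_{pq}$ as $-G^{\omega}(\xi_1,\ldots,\xi_p,\zeta_1,\ldots,\zeta_q,\eta)$ (coming from pairing $\omega$ with $\h_{pq}$ in the three terms), together with the triangularity in the total grading $p+q$, is precisely the content the paper leaves implicit.
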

\begin{proof} To prove the statement it is sufficient to observe that relations
$$
\hat{\pi}_{pq}(\Lambda_t-\hat{f}_*\circ\Lambda+\Lambda\circ\hat{f}_*)=0
$$
are linear with respect to $\h_{pq}(\xi_1,\ldots,\xi_p,\eta,\zeta_1,\ldots,\zeta_q)$, and thus to
 resolve these recursively.
\end{proof}
 
Similar to Theorem \ref{thesym}, this theorem does not imply that every equation (\ref{eqcondsymb}) possesses a formal recursion operator since for a generic equation (\ref{eqcondsymb}), the coefficients of the formal series $\Lambda$ given by (\ref{frop1001})-(\ref{froppq}) are not necessarily quasi-local. In the next theorem we show that if the equation is integrable, i.e., it possesses an infinite hierarchy of symmetries, the obtained formal series must be quasi-local. Therefore, the conditions of quasi-locality of coefficients of formal series $\Lambda$ given by Theorem \ref{fropthe} are necessary integrability conditions for equation (\ref{eqcondsymb}).
 
\begin{The}\label{themain}
If equation (\ref{eqcondsymb}) is integrable, then a formal series
\begin{equation}
\label{fropcan}
\Lambda=\eta+\sum_{p+q\ge 1}\hu_l^p\hu_r^q\psi_{pq}(\xi_1,\ldots,\xi_p,\eta,\zeta_1,\ldots,\zeta_q)
\end{equation}
satisfying (\ref{fropeq}) is quasi-local.
\end{The}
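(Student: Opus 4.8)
The plan is to establish quasi-locality of the canonical series $\Lambda$ one homogeneous coefficient at a time, by induction on the grading $m=p+q$, feeding in integrability precisely at the inductive step through the existence of symmetries of arbitrarily high total order. Before the induction I would record two structural facts that organise the argument. First, the solutions of the linear equation (\ref{fropeq}) are closed under composition and $\C$-linear combinations: if $\Lambda$ and $M$ both satisfy $\Lambda_t=\hat f_*\circ\Lambda-\Lambda\circ\hat f_*$ then so does $\Lambda\circ M$, since the two inner terms telescope. Combined with the uniqueness furnished by Theorem \ref{fropthe} — a solution of the form (\ref{fropgen}) is determined by its leading symbol $\psi(\eta)$ — this shows that $\psi\mapsto\Lambda_\psi$ is an algebra homomorphism $\C[\eta,\eta^{-1}]\to\ring$ with $\Lambda_\psi=\psi(\Lambda)$ for the canonical $\Lambda=\Lambda_\eta$. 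Second, Proposition \ref{rootP} lets me pass freely between $\Lambda$ and any $\Lambda_\psi$ with $\psi$ non-constant: in one direction $\psi(\Lambda)$ is built from $\Lambda$ by the operations under which the quasi-local elements are closed; in the other direction $\Lambda$ is recovered from $\Lambda_\psi$ by the inverse-function construction of Proposition \ref{rootP}, whose recursion (\ref{bcoef1})--(\ref{bcoefpq}) introduces only $\Theta_N$-type denominators and hence preserves quasi-locality. I may therefore work with whichever non-constant leading term is most convenient.

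For the induction proper I would show that the coefficients $\psi_{pq}$ of the canonical $\Lambda$ lie in some $\check{\A}^{(N)}_s((\eta))$ by induction on $m=p+q$. The base cases $m=1$ are (\ref{frop1001}); here the numerator $\psi(\xi_1\eta)-\psi(\eta)$ already contains the factor $(\xi_1-1)$, which cancels against the corresponding factor of $G^\omega(\xi_1,\eta)$ and leaves a denominator whose $\eta\to\infty$ expansion involves only $\Theta_N(\xi_1)$ with $N=\deg\omega$. For the inductive step, assume every $\psi_{ij}$ with $i+j<m$ is quasi-local. Reading (\ref{froppq}) as $\psi_{pq}\cdot G^\omega(\xi_1,\dots,\xi_p,\zeta_1,\dots,\zeta_q,\eta)=\Phi_{pq}$, the right-hand side $\Phi_{pq}$ is assembled from the leading term together with bilinear expressions in the lower coefficients and the polynomial $a_k$; by the induction hypothesis, and because the quasi-local extension is closed under the algebra operations and the argument substitutions appearing in (\ref{comp}), the series $\Phi_{pq}$ is quasi-local. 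Everything then reduces to controlling the division by $G^\omega(\dots,\eta)$.

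The decisive analysis is of this denominator, and it is where integrability enters. Writing $M=\xi_1\cdots\xi_p\zeta_1\cdots\zeta_q$ and $\omega(\xi)=\sum_k c_k\xi^k$, the definition (\ref{Gw}) gives $G^\omega(\dots,\eta)=\sum_k c_k(M^k-1)\eta^k-\sum_i\omega(\xi_i)-\sum_j\omega(\zeta_j)$, so in the expansion of $1/G^\omega$ at $\eta\to\infty$ every denominator is a power of the leading coefficient $c_d(M^d-1)=c_d(M-1)\Theta_d(M)$, $d=\deg\omega$. The factors $\Theta_d(M)$ are exactly those inverted by the operator $\theta_d$ and keep us inside $\check{\A}^{(d)}_s((\eta))$; the whole difficulty is to show that the residual powers of the non-$\Theta$ factor $(M-1)$ cancel against $\Phi_{pq}$. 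The leading piece of $\Phi_{pq}$, namely $(\psi(M\eta)-\psi(\eta))a_{m+1}=\eta(M-1)a_{m+1}$ for $\psi=\eta$, supplies one such factor automatically, but the bilinear cross-terms do not, and their cancellation is an integrability condition. Here I would invoke Definition \ref{defsym}: integrability provides, for infinitely many admissible linear terms $\Omega$ of arbitrarily high order, a genuine symmetry whose coefficient $A_{m+1}$ is, by Theorem \ref{thesym}, an honest Laurent polynomial, i.e. $G^\omega(\xi_1,\dots,\xi_{m+1})$ divides the symmetry numerator in (\ref{symcoefk}). Since (\ref{symcoefk}) and (\ref{froppq}) are built from the same $*$-multiplication, one then aims to identify $\Phi_{pq}$, after setting the last symmetry variable $\xi_{m+1}=\eta$ and distributing the remaining variables over the $\hu_l/\hu_r$ splitting, with the ``universal'' part of the symmetry numerator, the $\Omega$-dependence entering only linearly through $G^\Omega$. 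The polynomial divisibility valid for enough $\Omega$ then forces divisibility of $\Phi_{pq}$ by $(M-1)$, so that $\psi_{pq}\in\check{\A}^{(N)}_s((\eta))$.

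I expect this last comparison to be the main obstacle. Three things must be made precise: (a) the dictionary between the symmetric variables $\xi_1,\dots,\xi_{m+1}$ of the symmetry recursion and the asymmetric data of the operator recursion — the $\xi$'s, the $\zeta$'s, the distinguished $\eta$, and the $\hu_l/\hu_r$ grading — and in particular how the specialisation $\xi_{m+1}\to\eta$ commutes with the $\eta\to\infty$ expansion; (b) the order-by-order factorisation of $G^\omega(\dots,\eta)$ into an admissible $\Theta_N(M)$-part and an $(M-1)$-part, correctly accounting for the Laurent negative-degree terms of $\omega$ as well as its top degree $d$; and (c) the verification that symmetries of arbitrarily high total order are genuinely needed and sufficient to certify the required divisibility at \emph{every} level $m$, so that the full infinitude of the symmetry algebra, and not any finite portion of the hierarchy, is consumed. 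The book-keeping in (a) is the most error-prone, but the conceptual heart is (c): it is exactly the hypothesis that $\cC_f$ contains symmetries of unbounded total order that upgrades the finitely many automatic cancellations into quasi-locality of all coefficients of $\Lambda$.
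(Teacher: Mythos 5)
Your structural preliminaries are sound, and they are in fact exactly the ingredients of the correct argument: solutions of (\ref{fropeq}) are closed under composition, the map $\psi\mapsto\Lambda_\psi=\psi(\Lambda)$ is pinned down by the uniqueness in Theorem \ref{fropthe}, and Proposition \ref{rootP} transfers quasi-locality from $\Lambda_\psi$ back to $\Lambda$. The induction you build on top of them, however, has a genuine gap, visible already at the base case. You claim that for $p+q=1$ the factor $(\xi_1-1)$ in $\psi(\xi_1\eta)-\psi(\eta)$ cancels ``the corresponding factor of $G^{\omega}(\xi_1,\eta)$'' and leaves only $\Theta_N$-type denominators. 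This is false: $G^{\omega}(\xi_1,\eta)$ is not a product of $(\xi_1-1)$ with a $\Theta$-factor --- only its leading $\eta$-coefficient $c_d(\xi_1^d-1)=c_d(\xi_1-1)\Theta_d(\xi_1)$ is --- so the expansion of $1/G^{\omega}$ at $\eta\to\infty$ acquires one more power of $(\xi_1-1)$ in the denominator at every order of the geometric series, while your numerator supplies exactly one. The paper's own example (\ref{burgex}) is precisely this computation: there the numerator of $\h_{10}$ is divisible by $(\xi_1-1)$, yet the $i$-th expansion coefficient carries $(\xi_1-1)^{i}$ in its denominator, and locality fails unless $\alpha=-1$. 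Hence quasi-locality at level $p+q=1$ is already a nontrivial integrability condition, not an automatic fact, and your induction has no base. The same miscounting undermines the inductive step: proving that $\Phi_{pq}$ is divisible by $(M-1)$ (a single power) cannot yield quasi-locality of $\psi_{pq}=\Phi_{pq}/G^{\omega}$, since quasi-locality is an infinite family of conditions, one per order of the $\eta$-expansion, each costing a further power of $(M-1)$; no fixed polynomial divisibility of the numerator delivers them all.

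The missing idea --- which also resolves the ``dictionary'' you flag as obstacle (a) --- is that the link between symmetries and the recursion operator is not a matching of numerators in (\ref{symcoefk}) and (\ref{froppq}), but the following: for a symmetry $g$ with linear part $\Omega$, differentiating $[g,f]=0$ gives $\hat g_{*,t}+\hat g_*\circ\hat f_*-\hat f_*\circ\hat g_*=\hat f_{*,\tau}$, whose right-hand side has $\eta$-degree at most $n$, while the commutator terms on the left have degree at least $M+n$ with $M=\deg_+\Omega$. Consequently the first $M-n$ terms of the $\eta\to\infty$ expansion of $\hat g_*$ satisfy (\ref{fropeq}) \emph{exactly}, with $\h=\Omega$; by the uniqueness in Theorem \ref{fropthe} they coincide with the corresponding terms of $\Omega(\Lambda)$. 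Since $\hat g_*$ is the Fr\'echet derivative of an honest element of $\A$, it is local, so $\Omega(\Lambda)$ is $M$-local, and Proposition \ref{rootP} (your own second structural fact) then makes the canonical $\Lambda$ itself $M$-quasi-local; letting $M\to\infty$ finishes the proof. Note that this also corrects the quantifier structure you worry about in (c): a single symmetry of high order $M$ controls the first $M-n$ expansion orders of \emph{all} coefficients $\psi_{pq}$ simultaneously, and the infinitude of symmetry orders is consumed along the $\eta$-expansion, not along an induction in $p+q$.
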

The proof follows a similar structure to that of Theorem 4 in \cite{MNWDiff}. Nevertheless, we present the details here for the sake of completeness.
\begin{proof}
By Definition \ref{defsym}, the integrable equation (\ref{eqcondsymb}) possesses an infinite sequence of symmetries of increasing total orders.
Let 
$$
\hat{g}=\hu\Omega(\xi_1)+\sum_{p\ge 2}\hu^pA_p(\xi_1,\ldots,\xi_p)
$$
 be the symbolic representation of a symmetry $u_{\tau}=g$. Since $\omega(\xi_1)\ne 0$ we have by Proposition \ref{prolin} that the symmetries have non-vanishing linear part and thus  $\Omega(\xi_1)\ne 0$.
By Definition \ref{defsym1} we have $[\hat{g},\hat{f}]=0$. Therefore
\begin{equation}\label{fresym}
([\hat{g},\hat{f}])_*=\hat{g}_{*,t}+\hat{g}_*\circ\hat{f}_*-\hat{f}_{*,\tau}-\hat{f}_{*}\circ\hat{g}_*=0,
\end{equation}
where $\hat{g}_*$ is the Fr\'echet derivative of $\hat{g}$:
$$
\hat{g}_*=\Omega(\eta)+\sum_{p+q>0}\hu_l^p\hu_r^qA_{p+q+1}(\xi_1,\ldots,\xi_p,\eta,\zeta_1,\,\ldots,\zeta_q),
$$ 
$\hat{f}_*$ is the Fr\'echet derivative of $\hat{f}$:
$$
\hat{f}_*=\omega(\eta)+\sum_{p+q>0}\hu_l^p\hu_r^qa_{p+q+1}(\xi_1,\ldots,\xi_p,\eta,\zeta_1,\,\ldots,\zeta_q),
$$
and $\hat{f}_{*,\tau}$ denotes the action of the evolutionary derivation $\D_g$ on $\hat{f}_*$. We may rearrange \eqref{fresym} as
\begin{equation}
\label{mteq}
\hat{g}_{*,t}+\hat{g}_*\circ\hat{f}_*-\hat{f}_{*}\circ\hat{g}_*=\hat{f}_{*,\tau}.
\end{equation}
Denote by $\deg_+$ the highest power of $\eta$ of a Laurent formal series. We have $\deg_+(\hat{f}_*)=n$. Let $\deg_+(\Omega(\eta))=M$, so $\deg_+(\hat{g}_*)\ge M$, and $M$ can be chosen arbitrary large. We substitute $\hat{f}_*$, $\hat{g}_*$ in equation (\ref{mteq}) and notice that the degree of the right hand side of the equation is $\deg_+(\hat{f}_{*,\tau})\le n$, while $\deg_+(\hat{f}_*\circ\hat{g}_*)=\deg_+(\hat{g}_*\circ\hat{f}_*)\ge n+M$. Therefore, for every $p+q>0$ at least first $M-n$
terms of the power expansion of $\hat{g}_*$ at $\eta\to\infty$ in the form
$$\hu_l^p\hu_r^qA_{p+q+1}^{(k)}(\xi_1,\ldots,\xi_p,\zeta_1,\ldots,\zeta_q)\eta^k,\quad k=M,M-1,\ldots,n+1$$
 satisfy equation (\ref{fropeq}) with $\h(\eta)=\Omega(\eta)$. It follows from Theorem \ref{fropthe} that the solution of equation (\ref{fropeq}) exists and is unique. Thus we identify these terms with $\hu_l^p\hu_r^q\h_{pq}^{(k)}(\xi_1,\ldots,\xi_p,\zeta_1,\ldots,\zeta_q)\eta^k$ of the power expansion of $\hu_l^p\hu_r^q\h_{pq}(\xi_1,\ldots,\xi_p,\eta,\zeta_1,\ldots,\zeta_q)$ at $\eta\to\infty$. Since $\hat{g}_*$ is local, the obtained solution $\hat{\Lambda}$  of equation  (\ref{fropeq}) is $M$-local.

Consequently, as deduced from Proposition \ref{rootP}, there exists an $M$-quasi-local formal series (\ref{fropcan}).
Since $M$ can be chosen arbitrary large, this formal series is quasi-local.
\end{proof}

If $\Lambda$ in the form of (\ref{fropcan}) is a formal recursion operator for equation (\ref{eqcondsymb}), for any $\h(\eta)\in\C[\eta,\eta^{-1}]$, $\hat{\Lambda}=\h(\Lambda)$ is also a formal recursion operator of the equation starting with $\h(\eta)$. Coefficients of such a formal series can be determined using (\ref{frop1001}) and (\ref{froppq}). A constant is a trivial formal recursion operator. We call the formal recursion operator (\ref{fropcan}) starting with $\h(\eta)=\eta$ the {\it canonical} formal recursion operator.

Theorem \ref{themain} is constructive, and it provides necessary integrability conditions for equation (\ref{eqcondsymb}), independent on the structure of its algebra of symmetries. Moreover, it also provides some information of the symmetry structure of the equation. Namely, the following proposition holds:

\begin{Pro} Assume that the equation (\ref{eqcondsymb}) is integrable and the coefficients of its canonical formal recursion operator belong to the minimal extension $\check{\A}^{(N)}_{s}((\eta))$. Let $$g=\hu\Omega(\xi_1)+\sum_{p\ge 2}A_p(\xi_1,\ldots,\xi_p)$$
be a symmetry of the equation with $\deg_+(\Omega(\eta))=M>0$. Then $N\,|\,M$.
\end{Pro}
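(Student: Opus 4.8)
The statement asserts that if the minimal quasi-local extension needed to express the canonical formal recursion operator is $\check{\A}^{(N)}_s((\eta))$, then the degree $M = \deg_+(\Omega(\eta))$ of any symmetry's linear part must be divisible by $N$. The strategy is to trace where the number $N$ arises in the construction. By Theorem \ref{themain}, the canonical operator $\Lambda$ (starting with $\eta$) is quasi-local; the operator $\h(\Lambda)$ for any $\h$ is also a formal recursion operator. The key observation from the proof of Theorem \ref{themain} is that the Fr\'echet derivative $\hat{g}_*$ of a symmetry $g$ with linear part $\hu\Omega(\xi_1)$ coincides, in its top $M-n$ coefficients at $\eta\to\infty$, with the formal recursion operator starting with $\h(\eta)=\Omega(\eta)$. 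Thus $\Omega(\Lambda)$ must agree with a \emph{local} series ($\hat{g}_*$) up to order $M-n$, which forces $\Omega(\Lambda)$ to be local to that order despite $\Lambda$ itself only being quasi-local.

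The plan is to exploit this locality constraint at the level of denominators. First I would write $\Omega(\eta) = \sum_k c_k \eta^k$ and expand $\Omega(\Lambda)$ using the recursive formulae (\ref{frop1001})--(\ref{froppq}) with $\h = \Omega$. Each coefficient $\h_{pq}$ carries a denominator $G^{\omega}(\xi_1,\ldots,\xi_p,\zeta_1,\ldots,\zeta_q,\eta)$ together with the factor $\Omega(\xi_1\cdots\xi_p\zeta_1\cdots\zeta_q\eta)-\Omega(\eta)$ in the numerator. The minimal extension $\check{\A}^{(N)}_s((\eta))$ records that, for the \emph{canonical} operator, the relevant denominators reduce precisely to powers of $\Theta_N(\xi_1\cdots\xi_p\zeta_1\cdots\zeta_q)$ and no smaller cyclotomic-type factor suffices. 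Since $\Theta_N(\xi) = (\xi^N-1)/(\xi-1)$, the quantity $N$ is the order of the root-of-unity poles that the canonical operator's coefficients genuinely require. Requiring $\Omega(\Lambda)$ to be local to order $M-n$ means these $\Theta_N$ factors must be cancelled by the numerator factor $\Omega(\zeta\eta)-\Omega(\eta)$ (where $\zeta = \xi_1\cdots\zeta_q$), and this cancellation can occur for arbitrarily large $M$ only if $\Omega$ respects the same $N$-periodicity in the relevant variable.

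The core of the argument is therefore to show: if the top $M-n$ coefficients of $\Omega(\Lambda)$ are Laurent polynomials (local), and the minimal extension for $\Lambda$ genuinely needs $\theta_N$, then $\Theta_N(\eta)$ must divide the symbol-level obstruction governing the degree-$M$ leading term, which translates into $\Theta_N(\eta) \mid (\Omega(\eta) - \text{const})$ at the relevant homogeneous level, forcing $N\mid M$. Concretely, I would examine the leading $\eta$-behaviour: the top coefficient of $\hat{g}_*$ is $\Omega(\eta)$ with $\deg_+\Omega = M$, and matching this against the canonical operator raised via $\h=\Omega$ shows that the poles at $N$-th roots of unity cancel precisely when $M$ is a multiple of $N$, since $\Theta_N(\xi)\mid (\xi^M-1)$ if and only if $N\mid M$.

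The main obstacle will be making rigorous the claim that the minimality of $N$ (as the minimal extension order) forces the $\Theta_N$-denominators to be \emph{unavoidable} in a way that transfers to a divisibility constraint on $\Omega$. One must argue that if $N\nmid M$, then the numerator factor $\Omega(\zeta\eta)-\Omega(\eta)$ cannot supply the $\Theta_N(\zeta)$ factor needed to keep $\Omega(\Lambda)$ local, contradicting the established locality of $\hat{g}_*$ in its top $M-n$ terms. I expect the cleanest route is an induction on the grading $p+q$ that tracks the factor $(\zeta^M - 1)$ versus $\Theta_N(\zeta) = (\zeta^N-1)/(\zeta-1)$ in the numerators and denominators, using that $\gcd(\zeta^M-1,\zeta^N-1) = \zeta^{\gcd(M,N)}-1$, so that full cancellation of the $N$-th root poles is possible exactly when $N\mid M$.
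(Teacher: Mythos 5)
Your proposal assembles the correct ingredients --- the identification (from the proof of Theorem \ref{themain}) of the top $M-n$ coefficients of $\hat{g}_*$ with the formal recursion operator headed by $\Omega(\eta)$, the uniqueness statement of Theorem \ref{fropthe} so that this operator equals $\Omega(\Lambda)$, and the arithmetic fact that the nontrivial $N$-th roots of unity are $M$-th roots of unity iff $N\mid M$ --- but the central step has a genuine gap. You argue in the forward direction: the coefficients $\psi_{pq}$ of the canonical operator carry unavoidable $\Theta_N$-poles (minimality), $\Omega(\Lambda)$ must be local to order $M-n$, hence these poles ``must be cancelled by the numerator factor $\Omega(\zeta\eta)-\Omega(\eta)$'', forcing $N\mid M$. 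However, in the composition $\Omega(\Lambda)=\sum_k c_k\Lambda^k$ the coefficient at grading $(p,q)$ is, writing $\Pi=\xi_1\cdots\xi_p\zeta_1\cdots\zeta_q$,
\begin{equation*}
\frac{\Omega(\Pi\eta)-\Omega(\eta)}{\eta(\Pi-1)}\,\psi_{pq}(\xi_1,\ldots,\xi_p,\eta,\zeta_1,\ldots,\zeta_q)
+\big(\text{terms nonlinear in } \psi_{p'q'},\ p'+q'<p+q\big),
\end{equation*}
so a pole of $\psi_{pq}$ at a root of $\Theta_N(\Pi)$ need not be killed by that particular numerator: it can a priori cancel against the nonlinear terms. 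Moreover, minimality of $N$ is a statement about the whole family of coefficients with nested $\theta_N$-denominators; it does not hand you a specific coefficient with a specific pole at a primitive $N$-th root of unity that must survive the composition. Establishing this non-cancellation when $N\nmid M$ is precisely the content of the proposition, and your sketch (including the gcd identity) does not do it.

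The paper's proof avoids proving any negative statement by reversing the direction. From the symmetry it obtains, as you do, that the solution of (\ref{fropeq}) with $\h(\eta)=\Omega(\eta)$ is $M$-local, its top coefficients being those of $\hat{g}_*$. It then applies the root-extraction result (Proposition \ref{rootP} with $\h=\Omega$) to this \emph{local} series: the triangular recursion determining the series $\eta+\cdots$ whose image under $\Omega$ is the given one introduces only denominators $\Theta_M(\Pi)$, because the top $\eta$-coefficient of $\Omega(\Pi\eta)-\Omega(\eta)$ is $c_M\eta^M(\Pi^M-1)=c_M\eta^M(\Pi-1)\Theta_M(\Pi)$, where $c_M$ is the leading coefficient of $\Omega$. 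By uniqueness this series is the canonical formal recursion operator, so its coefficients lie in $\check{\A}^{(M)}_{s}((\eta))$ --- a positive membership statement requiring no cancellation analysis. Minimality of $N$ then forces $\Theta_N\mid\Theta_M$, i.e.\ $N\mid M$. Your plan becomes correct if you reorganize your induction as this inversion: solve the triangular system for the $\psi_{pq}$ in terms of the local data (which is exactly Proposition \ref{rootP}), rather than composing forward and trying to show that poles survive.
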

\begin{proof}
 The coefficients of a formal recursion operator obtained from $\hat{g}_*$ as described in the proof of Theorem \ref{themain} belong to extensions $\check{\A}^{(M)}_{s}((\eta))$. Therefore, the polynomial $\Theta_N(\eta)=\frac{\eta^{N}-1}{\eta-1}$ divides $\Theta_M(\eta)=\frac{\eta^{M}-1}{\eta-1}$ which occurs if and only if $N$ divides $M$.
\end{proof}
Theorems \ref{fropthe} and \ref{themain} show the advantage of symbolic representation. Explicit recurrence formulae (\ref{frop1001}) and (\ref{froppq}) are provided, which can be used to tackle the classification problem of integrable nonabelian D$\Delta$Es. For a given family of equations (\ref{eqcondsymb}), the process is as follows:
\begin{itemize}
\item find several first coefficients $\h_{pq}(\xi_1,\ldots,\xi_p,\eta,\zeta_1,\ldots,\zeta_q)$, starting from $p+q=1$;
\item Determine constraints on the equations imposed by the quasi-locality conditions of $\hu_l^p\hu_r^q\h_{pq}(\xi_1,\ldots,\xi_p,\eta,\zeta_1,\ldots,\zeta_q)$.
\end{itemize}
We illustrate the procedure by the following examples.

\begin{Ex} Let us describe all integrable equations of the form
\begin{equation}
\label{burgex}
u_t=u_1+\alpha u+c_1u_1u+c_2uu_1+c_3u^2,\quad \alpha,c_1,c_2,c_3\in\C.
\end{equation}
Its symbolic representation is of the form (\ref{eqcondsymb}) with
$$
\omega(\xi_1)=\xi_1+\alpha,\quad a_2(\xi_1,\xi_2)=c_1\xi_1+c_2\xi_2+c_3,
$$
and $a_p(\xi_1,\ldots,\xi_p)=0,\,\,p>2$.
The equation is of order $(0,1)$ with a total order of $1$.

By Theorems \ref{fropthe} and \ref{themain}, it follows that if this equation is integrable, its canonical formal recursion operator ($\h(\eta)=\eta$) must be local, as the terms of the canonical formal recursion operator must belong to subspaces $\check{\A}^{(1)}_s((\eta))$. However, $\theta_1=1$, so $\check{\A}^{(1)}_s((\eta))=\hat{\A}((\eta))$, where $s=1,2,\ldots$. Using formulae (\ref{frop1001}), we calculate the first two coefficients of the canonical formal recursion operator ($\h(\eta)=\eta)$:
$$
\h_{10}(\xi_1,\eta)=\frac{(\xi_1-1)(c_1\xi_1+c_2\eta+c_3)\eta}{\left((\xi_1-1)\eta-\alpha-\xi_1\right)},\quad \h_{01}(\eta,\zeta_1)=\frac{(\zeta_1-1)(c_1\eta+c_2\zeta_1+c_3)\eta}{\left((\zeta_1-1)\eta-\alpha-\zeta_1\right)}.
$$
The expansion of $\h_{10}(\xi_1,\eta)$ at $\eta\to\infty$ is of the form:
$$
\h_{10}(\xi_1,\eta)=c_2\eta+\sum_{i=1}^{\infty}\frac{(\xi_1+\alpha)^{i-1}(c_1\xi_1^2+(c_2+c_3-c_1)\xi_1+c_2\alpha-c_3)}{(\xi_1-1)^i}\eta^{i-1},
$$
and a similar expansion holds for $\h_{01}(\eta,\zeta_1)$.
The presence of powers of $\xi_1-1$ in denominators is obstruction to the locality of $\hu_l \h_{10}(\xi_1,\eta)$.
The cancellation of these denominators occurs if and only if  $c_1=c_2=c_3=0$ or $\alpha=-1$. In the former case the equation is linear.

Set $\alpha=-1$. Then
$$
\h_{10}(\xi_1,\eta)=\frac{c_2\eta+c_1\xi_1+c_3}{\eta-1}\eta,\quad \h_{01}(\eta,\zeta_1)=\frac{c_1\eta+c_2\zeta_1+c_3}{\eta-1}\eta,
$$
and thus $\hu_l\h_{10}(\xi_1,\eta)$, $\hu_r\h_{01}(\eta,\zeta_1)$ are local.

We now consider the locality conditions  of $\hu_l^2\h_{20}(\xi_1,\xi_2,\eta),\hu_l\hu_r\h_{11}(\xi_1,\eta,\zeta_1),\hu_r^2\h_{02}(\eta,\zeta_1,\zeta_2)$. The first term of the expansion of $\h_{11}(\xi_1,\xi_2,\eta)$ at $\eta\to\infty$ is of the form
$$
\h_{11}(\xi_1,\eta,\zeta_1)=-\frac{c_1c_2(\xi_1+\zeta_1-2)}{\xi_1\zeta_1-1}+O(\eta^{-1}).
$$
The presence of $\xi_1\zeta_1-1$ is the obstruction to locality of $\hu_l\hu_r\h_{11}(\xi_1,\eta,\zeta_1)$. Consequently, as $(\xi_1+\zeta_1-2)$ is not divisible by $\xi_1\zeta_1-1$, it follows that $c_1c_2=0$. Thus, we examine two cases:
$c_1\ne 0,\,c_2=0$ or $c_1=0,\,c_2\ne 0$.

If $c_1\ne 0$, then the subsequent terms in the expansion of $\h_{11}(\xi_1,\eta,\zeta_1)$ are
$$
\h_{11}(\xi_1,\eta,\zeta_1)=-\frac{c_1(c_3+c_1\xi_1)}{\zeta_1}\eta^{-1}-\frac{(c_1\xi_1+c_3)P(\xi_1,\zeta_1,c_1,c_3)}{\xi_1\zeta_1^2(\xi_1\zeta_1-1)}\eta^{-2}+O(\eta^{-3}),
$$
where
$$
P(\xi_1,\zeta_1,c_1,c_3)=c_1\xi_1^2 \zeta_1^2 + 2(c_1+c_3) \xi_1 \zeta_1^2 + c_1 \xi_1^2 \zeta_1-(c_1+c_3) \zeta_1^2 -(2 c_1+c_3) \xi_1 \zeta_1-c_1 \xi_1.
$$
We observe that the polynomial $\xi_1\zeta_1-1$ must divide $P(\xi_1,\zeta_1,c_1,c_3)$ for the locality of $\h_{11}(\xi_1,\eta,\zeta_1)$. As $P(\xi_1,\xi_1^{-1},c_1, c_3)=-(c_1+c_3)(\xi_1-1)^2$, it follows that $c_3=-c_1$. Consequently, we have
$$
\h_{11}(\xi_1,\eta,\zeta_1)=-\frac{c_1^2(\xi_1-1)\eta}{(\eta-1)(\zeta_1\eta-1)},\quad \h_{20}(\xi_1,\xi_2,\eta)=\frac{c_1^2(\xi_1-1)\xi_2\eta}{(\eta-1)(\xi_2\eta-1)},\quad \h_{02}(\eta,\zeta_1,\zeta_2)=0.
$$
It can be easily verified that $\hu_l^2\h_{20}(\xi_1,\xi_2,\eta),\hu_l\hu_r\h_{11}(\xi_1,\eta,\zeta_1),\hu_r^2\h_{02}(\eta,\zeta_1,\zeta_2)$ are all local. This leads to the equation
$$
u_t=u_1-u+c_1(u_1u-u^2).
$$
Upon rescaling $u_i\to \frac{1}{c_1}u_i,,i\in\Z$, we obtain $u_t=(u_1-u)(u+1)$. By applying the shift transformation $u_i\to u_i-1$, we arrive at one of the two nonabelian differential-difference Burgers equations
\begin{equation}
\label{burg1}
u_t=(u_1-u)u.
\end{equation}
When $c_2 \neq 0$, upon conducting analogous computations, we arrive at the other version of the nonabelian differential-difference Burgers equation:
\begin{equation}
\label{burg2}
u_t=u(u_1-u).
\end{equation}
We have thus demonstrated that the linear equation $u_t=u_1+\alpha u$, alongside equations (\ref{burg1}) and (\ref{burg2}), are the only integrable equations of the form (\ref{burgex}). Equation (\ref{burg1}) is linearisable by the transformation $u=v_1v^{-1}$, and $v$ satisfies $v_t=v_1-v$; likewise, equation (\ref{burg2}) is linearisable by the transformation $u=v^{-1}v_1$, yielding the same linear equation on $v$ (here we assume that the generators $v_i, , i \in \mathbb{Z}$ are invertible in the corresponding difference algebra generated by $v_i$, satisfying $v_iv_i^{-1}=v_i^{-1}v_i=1$).
\end{Ex}

\begin{Ex} Let us describe now all integrable equations of the form
\begin{equation}
\label{ex2}
u_t=f=u_2+\alpha u_1-\alpha u_{-1}-u_{-2}+c_1u_2u+c_2u_1u-c_2uu_{-1}-c_1uu_{-2},\quad \alpha,c_1,c_2\in\C.
\end{equation}
Note that it is skew-symmetric and that its symbolic representation is of the form (\ref{eqcondsymb}) with
$$
\omega(\xi_1)=\xi_1^2+\alpha\xi_1-\alpha\xi_1^{-1}-\xi_1^{-2},\quad a_2(\xi_1,\xi_2)=c_1(\xi_1^2-\xi_2^{-2})+c_2(\xi_1-\xi_2^{-1})
$$
and $a_p(\xi_1,\ldots,\xi_p)=0,\,\,p>2$. If equation (\ref{ex2}) is integrable, then the coefficients of the canonical formal recursion operator must belong to $\check{\A}^{(2)}_s((\eta))$ since $\deg_+(\omega(\eta))=2$. Using (\ref{frop1001}) and (\ref{froppq}), it can be easily verified that for $p+q\le2$, the coefficients $\hu_l^p\hu_r^q\h_{pq}$ are quasi-local for any choice of constants $\alpha, c_1$ and $c_2$. For instance, the expression for $\h_{10}(\xi_1,\eta)$ is given by
$$
\h_{10}(\xi_1,\eta)=\frac{c_1+(c_1\xi_1+c_2)\eta}{(\eta-1)\left(\xi_1(1+\xi_1)\eta^2+(\xi_1^2+\alpha\xi_1+2\xi_1+1)\eta+1+\xi_1\right)}\xi_1^2\eta
$$
and it is easy to see that all the coefficients of expansion of $\hu_l\h_{10}(\xi_1,\eta)$ at $\eta\to\infty$ belong to $\check{\A}^{(2)}_s((\eta))$ for some $s\ge 0$.

The expression for $\h_{03}(\eta,\zeta_1,\zeta_2,\zeta_3)$ is of the form
$$
\h_{03}(\eta,\zeta_1,\zeta_2,\zeta_3)=\frac{\Phi_{03}(\eta,\zeta_1,\zeta_2,\zeta_3,\alpha,c_1,c_2)}{\Psi_{03}(\eta,\zeta_1,\zeta_2,\zeta_3,\alpha,c_1,c_2)},
$$
where functions $\Phi_{03}, \Psi_{03}$ are polynomials of their arguments, and the polynomial $\Psi_{03}$ contains the irreducible factor
\begin{eqnarray*}
&&(\zeta_1\zeta_2\zeta_3\eta)^2G^{\omega}(\zeta_1,\zeta_2,\zeta_3,\eta)=\left(\zeta _1^2 \zeta _2^2 \zeta _3^2-1\right) \left(\eta ^4 \zeta _1^2 \zeta _2^2 \zeta _3^2+1\right)\\
&&\quad+\alpha  \eta  \zeta _1 \zeta _2 \zeta _3 \left(\zeta _1 \zeta _2 \zeta _3-1\right) \left(\eta ^2 \zeta _1 \zeta _2 \zeta _3+1\right)
+\eta ^2 (-\alpha  \zeta _2^2 \zeta _3^2 \zeta _1^3-\alpha  \zeta _2^2 \zeta _3^3 \zeta _1^2-\alpha  \zeta _2^3 \zeta _3^2 \zeta _1^2\\&&\quad+\alpha  \zeta _2 \zeta _3^2 \zeta _1^2+\alpha  \zeta _2^2 \zeta _3 \zeta _1^2+\alpha  \zeta _2^2 \zeta _3^2 \zeta _1
-\zeta _2^2 \zeta _3^2 \zeta _1^4-\zeta _2^2 \zeta _3^4 \zeta _1^2+\zeta _2^2 \zeta _1^2-\zeta _2^4 \zeta _3^2 \zeta _1^2+\zeta _3^2 \zeta _1^2+\zeta _2^2 \zeta _3^2).
\end{eqnarray*}
The presence of this factor results in violation of quasi-locality of $\h_{03}(\eta,\zeta_1,\zeta_2,\zeta_3)$, unless it is canceled out by the numerator $\Phi_{03}$. The cancellation occurs if and only if $\alpha=c_2=0$ or $\alpha=1$ and $c_2=c_1$.

In the first case the resulting nonlinear equation is
\begin{equation}
\label{exsv}
u_t=u_2-u_{-2}+c_1(u_2u-uu_{-2}), \quad c_1\neq 0.
\end{equation}
After rescaling $u_i\to\frac{1}{c_1}u_i$, for $i \in \mathbb{Z}$, and applying the shift transformation $u_i \to u_i - 1$, this equation transforms into the stretched nonabelian Volterra equation
$$
u_t=u_2u-uu_{-2}.
$$

In the second equation the resulting nonlinear equation is
\begin{equation}
\label{exb}
u_t=u_2+u_1-u_{-1}-u_{-2}+c_1 (u_1+u_2)u- c_1 u(u_{-1}+u_{-2}),\quad c_1\neq 0.
\end{equation}
After rescaling $u_i\to\frac{1}{c_1}u_i$, for $i \in \mathbb{Z}$, and applying the shift transformation $u_i \to u_i - 1$, this equation transforms into the nonabelian Bogoyavlensky equation
$$
u_t=(u_1+u_2)u-u(u_{-1}+u_{-2}).
$$
\end{Ex}
 We make a remark on the {\it locality} conditions concerning the formal recursion operators of integrable equations. For integrable abelian D$\Delta$Es, in our previous work \cite{MNWDiff}, we conjectured the existence of $\h(\eta)\in C[\eta,\eta^{-1}]$, such that the corresponding formal recursion operator $\Lambda=\h(\eta)+\sum_{p>0}\hu^p\h_p(\xi_1,\ldots,\xi_p,\eta)$ is {\it local}. Extending this conjecture to the nonabelian case, we propose that for every integrable nonabelian D$\Delta$E, there exists $\h(\eta)\in C[\eta,\eta^{-1}]$, ensuring the locality of the formal recursion operator (\ref{fropgen}). Furthermore, we speculate that this locality property holds when $\h(\eta)=\omega(\eta)_{+}$.

For instance, for the stretched Volterra equation (\ref{exsv}), the formal recursion operator is local if we choose $\h(\eta)=\eta^2$, while for the Bogoyavlensky equation (\ref{exb}), the locality occurs if we choose $\h(\eta)=\eta^2+\eta+1$. However, the detailed investigation of this conjecture remains beyond the scope of this paper.
\section{Classification of integrable nonabelian D$\Delta$Es}\label{classres}
In this section we apply the integrability test developed in the previous chapter and present the classification results of integrable nonabelian D$\Delta$Es
\begin{equation}
\label{eqclass}
u_t=f(u_{-n}, u_{-n+1},\ldots, u_{n}),\quad f\in\A,\quad n=1,2,3.
\end{equation}
We impose the following conditions on the right hand side of equation (\ref{eqclass}):
\begin{enumerate}
\item[(i)] Non-zero linear term: $\pi_1(f)\ne 0$, and $\pi_1(f)$ depends on $u_n$;
\item[(ii)] Quasi-linearity: $\frac{d^2}{d\epsilon^2}f(\epsilon u_{-n}, u_{-n+1},\ldots,u_{n-1}, \epsilon u_{n})=0$;
\item[(iii)] Skew-symmetry: $\cT(f)=-f$.
\item[(iv)] No lower order symmetries.
\end{enumerate}
We note that condition (i) together with condition (iii) implies that the symbolic representation of the linear term $\pi_1(f)$ is of the form $\phi(\pi_1(f))=\hu\sum_{i=1}^n c_i(\xi_1^i-\xi_1^{-i})$ and $c_n\ne 0$. 

We call equations {\it equivalent} if they are related by the
transposition automorphism $T$ defined by \eqref{trans} and by invertible transformations such as translation and scaling, i.e.,
$$u_k\to\alpha u_{k}+\beta,\quad t\to\gamma t,\quad \alpha,\gamma\in\C^*,\quad \beta\in\C,\quad k\in\Z.$$
In the classification lists it is sufficient to present a single representative from each equivalence class. As integrable equations are members of infinite hierarchies of symmetries and each hierarchy has a seed - a member of a minimal possible order -- we only present the seeds removing equations possessing lower order symmetries.

The classification strategy we employ is as follows: first, we obtain the complete list of equations of the form (\ref{eqclass}) that satisfy necessary integrability conditions -- the quasi-locality conditions for the canonical formal recursion operator. Then, we establish the integrability of each equation obtained by either presenting a Lax representation or by relating the equation to a known integrable one through a difference substitution.

We formulate our classification results in three theorems corresponding to $n=1,2,3$ in \eqref{eqclass}, respectively. The proofs of integrability are provided immediately after the statements of the theorems. Subsequently, we outline the proof for the classification lists collectively, as the approach remains the same for all fixed $n$.

\begin{The} \label{class1} Every nonlinear integrable nonabelian D$\Delta$E
$$
u_t=f(u_{-1}, u, u_1),
$$
satisfying conditions {\rm (i)-(iii) ($n=1$)}, is equivalent to one of the following equations:
\begin{subequations}
\begin{align}
\tag{$\vt{1}{1}$}\label{Volterra1}
u_t=&uu_1-u_{-1}u,\\
\tag{$\vt{2}{1}$}\label{mVolterra11}
u_t=&(u^2-\alpha^2)u_1-u_{-1}(u^2-\alpha^2),\quad\alpha\in\C,\\
\tag{$\vt{3}{1}$}\label{mVolterra12}
u_t=&uu_1(u+\alpha)-(u+\alpha)u_{-1}u,\quad\alpha\in\C.
\end{align}
\end{subequations}
\end{The}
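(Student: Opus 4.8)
The plan is to follow the general classification strategy announced in the text: use the necessary integrability conditions of Theorem~\ref{themain}, namely the quasi-locality of the coefficients of the canonical formal recursion operator, to cut down the family of admissible equations to a finite list, and then verify integrability of each surviving candidate separately. First I would write the most general right-hand side $f(u_{-1},u,u_1)$ compatible with the structural conditions (i)--(iii) for $n=1$. By condition (i) together with (iii), as noted in the remark preceding the theorem, the linear part is $\phi(\pi_1(f))=\hu\,c_1(\xi_1-\xi_1^{-1})$ with $c_1\ne0$, so after rescaling $t$ we may normalise $c_1=1$ and set $\omega(\xi_1)=\xi_1-\xi_1^{-1}$. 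By quasi-linearity (ii) the quadratic part $f_2$ is at most linear in each of $u_{-1}$ and $u_1$, so $f_2$ is a linear combination of the monomials $uu_1,\,u_1u,\,u_{-1}u,\,uu_{-1},\,u^2,\,u_1u_{-1},\,u_{-1}u_1$ and the genuinely quadratic-in-$u$ terms; imposing skew-symmetry $\cT(f_2)=-f_2$ via the symbolic action $\cT(f)\raph\hu^n a(\xi_n^{-1},\dots,\xi_1^{-1})$ reduces these to a handful of free complex parameters. This produces a symbol $\hat f=\hu\omega(\xi_1)+\hu^2 a_2(\xi_1,\xi_2)$ with $a_2$ depending on finitely many constants.

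Next I would apply Theorem~\ref{fropthe} to compute the leading coefficients $\h_{10}(\xi_1,\eta)$ and $\h_{01}(\eta,\zeta_1)$ of the canonical formal recursion operator ($\h(\eta)=\eta$) via formula (\ref{frop1001}), and then $\h_{11},\h_{20},\h_{02}$ via (\ref{froppq}). Since $\omega(\xi_1)=\xi_1-\xi_1^{-1}$ gives $\deg_+\omega=1$, the relevant extension is $\check{\A}^{(1)}_s((\eta))=\hat{\A}((\eta))$, so quasi-locality here means genuine \emph{locality}: the $\eta\to\infty$ expansions of $\hu_l^p\hu_r^q\h_{pq}$ must have Laurent-polynomial coefficients. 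The denominators produced by (\ref{frop1001})--(\ref{froppq}) are $G^{\omega}(\xi_1,\dots,\xi_p,\zeta_1,\dots,\zeta_q,\eta)$ together with factors like $\xi_1\zeta_1-1$ arising from the structure of $G^\omega$; requiring that these divide the corresponding numerators yields polynomial constraints on the free constants. As in the worked examples (\ref{burgex}) and (\ref{ex2}), I expect the first obstructions to already force most parameters to vanish, and the deepest coefficient (probably $\h_{11}$ or $\h_{02}$) to pin down the remaining one-parameter freedom, leaving exactly the three families \eqref{Volterra1}, \eqref{mVolterra11}, \eqref{mVolterra12} (the parameter $\alpha$ surviving as an essential constant) up to the equivalence generated by $T$ and affine transformations $u_k\to\alpha u_k+\beta$.

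Finally, integrability of each surviving candidate must be established, since the quasi-locality conditions are only \emph{necessary}. For each of \eqref{Volterra1}--\eqref{mVolterra12} I would exhibit either a Lax representation $L_t=[A,L]$ or a Miura-type difference substitution to a known integrable chain; the nonabelian Volterra case \eqref{Volterra1} already carries the hierarchy described in Examples~\ref{EG2}, and the modified versions \eqref{mVolterra11}, \eqref{mVolterra12} should be handled by analogous Lax pairs or by reduction, exactly as the paper promises to do ``by providing their Lax representations or by giving an explicit Miura type transformation.''

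I expect the main obstacle to be the bookkeeping in the necessity direction: organising the finitely many skew-symmetric quasi-linear parameters, computing $\h_{pq}$ for $p+q\le 2$ (or $3$) with the multivariable substitutions in (\ref{froppq}), and correctly extracting the divisibility conditions on the numerators $\Phi_{pq}$ by the offending factors of $G^\omega$. The conceptual steps are routine given Theorems~\ref{fropthe} and~\ref{themain}; the difficulty is purely in controlling the growth of the rational expressions and ensuring that the derived constraints are exhaustive, so that no integrable equation outside the three listed families (and none with a lower-order symmetry, per condition (iv)) is overlooked.
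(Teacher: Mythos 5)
Your overall architecture --- necessity via locality of the canonical formal recursion operator, sufficiency via Lax pairs and Miura substitutions --- is exactly the paper's strategy, and your observation that for $n=1$ quasi-locality collapses to genuine locality (the equation is its own symmetry with $\deg_+\omega=1$, so the minimal extension has $N=1$ and $\check{\A}^{(1)}_s((\eta))=\hat{\A}((\eta))$) is correct. The genuine gap is in your parametrisation of the class: you truncate the ansatz at $\hat f=\hu\,\omega(\xi_1)+\hu^2a_2(\xi_1,\xi_2)$. Condition (ii) only restricts the \emph{joint} degree in $u_{-1},u_1$ to be at most one; it places no bound whatsoever on the degree in $u$, so the admissible right-hand sides form the infinite-parameter family $f=\sum_{p\ge 1}f_p$ with each $f_p$ spanned by the skew-symmetric combinations $u^au_1u^b-u^bu_{-1}u^a$, $a+b=p-1$. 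In particular, two of the three equations you are supposed to recover, \eqref{mVolterra11} and \eqref{mVolterra12}, are \emph{cubic} and simply cannot arise from a quadratic ansatz, so the necessity direction as you set it up cannot produce the stated list. (Two smaller slips: the monomials $u_1u_{-1}$ and $u_{-1}u_1$ are \emph{excluded} by (ii), not allowed by it, since they are bilinear in the scaled variables; and condition (iv) is not a hypothesis of this theorem --- for $n=1$ there are no lower orders.)

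Because the family is infinite-dimensional, imposing locality on finitely many coefficients $\h_{pq}$ can never by itself be exhaustive; this is precisely the difficulty you flag at the end but leave without a mechanism. The paper supplies that mechanism as Proposition \ref{prounique}: two integrable equations of this class whose linear, quadratic and cubic parts coincide are identical. The classification then runs in two steps. First, quasi-locality of $\h_{pq}$ for $p+q\le 3$ gives a finite system of polynomial conditions on the coefficients of $f_2,f_3,f_4$ only, whose solution set modulo the scaling group is finite; this yields the list of $3$-approximately integrable candidates. Second, quasi-locality for $p+q>3$ combined with Proposition \ref{prounique} shows that the higher-degree terms $f_p$ are \emph{uniquely} reconstructed from the low-degree data (and that the reconstruction truncates), so no integrable equation outside the list can exist. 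Your proposal needs both the full all-degree ansatz and this uniqueness/truncation argument to be a proof.
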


{\it Proof of Integrability}.
\underline{Equation (\ref{Volterra1})} is the well-known nonabelian Volterra equation. We present here for completeness its Lax representation (see e.g. \cite{BG})
$L_t=[A, L]$ with
$$
L=uS+\lambda S^{-1},\quad A=-u-u_{-1}-\lambda S^{-2}.
$$

\underline{Equations (\ref{mVolterra11}) and (\ref{mVolterra12})} are the  modified nonabelian Volterra equations labeled as ${\rm mVL^1}$ and ${\rm mVL^2}$ in \cite{Adler1},
where the transformations among these three equations are also presented.
 (\ref{mVolterra11}) is related to the Volterra equation
$$
w=ww_1-w_{-1}w,
$$
by Miura transformations:
$$
w=(u+\alpha)(u_1-\alpha).
$$
A Lax representation for \underline{Equation (\ref{mVolterra12})} is
$$
L=(u+\alpha)S^{-1}+\lambda u S,\quad A=u u_1 (1+\lambda S^2).
$$
$\hfill\square$

\begin{The} \label{class2} Every integrable nonabelian D$\Delta$E
$$
u_t=f(u_{-2}, u_{-1}, u, u_1, u_2),
$$
satisfying conditions {\rm (i)-(iv) ($n=2$)}, is equivalent to one of the following equations:
\begin{subequations}
\begin{align}
\tag{$\vt{1}{2}$}\label{Volterra2}
u_t=&uu_2-u_{-2}u,\\
\tag{$\vt{2}{2}$}\label{mVolterra21}
u_t=&(u^2-\alpha^2)u_2-u_{-2}(u^2-\alpha^2),\quad\alpha\in\C,\\
\tag{$\vt{3}{2}$}\label{mVolterra22}
u_t=&u u_2(u+\alpha)-(u+\alpha)u_{-2}u,\quad\alpha\in\C,\\
\tag{$\vt{4}{2}$}\label{eq45A}
u_t=&u  u_1  u_{2}-u_{-2}  u_{-1}  u + u  (u_{-1}-u_1)  u,\\
\tag{$\vt{5}{2}$}\label{eq80A}
u_t=&(u_{-1} + u)  u_{2} - u_{-2}  (u_{1} + u) + u  u_{1} -
 u_{-1}  u,\\
\tag{$\vt{6}{2}$}\label{eq81}
u_t=&u  u_{1}  u_{2}  u_{-1}  u-
 u  u_{1}  u_{-2}  u_{-1}  u+\alpha (u  u_{1}  u_{2}-u_{-2}  u_{-1}  u + u  (u_{-1}-u_{1})  u),\\
\tag{$\vt{7}{2}$}\label{eq80B}
u_t=&u  u_{-1}  u  u_{1}  u_{2}- u_{-2}  u_{-1}  u  u_{1}  u
  +\alpha (u  u_{1}  u_{2}-u_{-2}  u_{-1}  u + u  (u_{-1}-u_1)  u),\\
\tag{$\vpt{8}{2}$} \label{eq47}
u_t=&u  u_{1}  u_{2} - u_{-2}  u_{-1}  u +
 u  (u - u_{1})  u_{1} - u_{-1}  (u - u_{-1})  u,
\end{align}
\end{subequations}
\begin{subequations}
\begin{align}
\tag{$\vpt{9}{2}$}\label{eq79A}
u_t=&(u + u_{-1})  (u_{1} + u)  u_{2} -
  u_{-2}  (u_{-1} + u)  (u_{1} + u)+
  u  (u + u_{-1})  u_{1} - u_{-1}  (u + u_{1})  u\\ \nonumber & -
  u  (u_{1} - u_{-1})  u,\\
\tag{$\vpt{10}{2}$}\label{eq74}
u_t=&u(u_1  u_2  u_1 -
 u_{-1}  u_{-2}  u_{-1} -
 u_1  u  u_1 +
 u_{-1}  u  u_{-1})u,\\
\tag{$\vpt{11}{2}$}\label{eq45B}
u_t=&(u^2-\alpha^2) (u_1^2 -\alpha^2) u_2 -
  u_{-2}  (u_{-1}^2 - \alpha^2) (u^2 - \alpha^2)  - (u^2 - \alpha^2)   u_1 u  u_1\\ \nonumber
   & +
  u_{-1}  u  u_{-1} (u^2 - \alpha^2) +
  u  u_{-1}  (u^2 - \alpha^2) u_1 - u_{-1} (u^2 - \alpha^2)  u_1  u,\\
\tag{$\tot{1}{2}$}\label{eq30}
u_t=&(u  u_{-1} + 1)  (u  u_{1} + 1)  u_{2} -
 u_{-2}  (u_{-1}  u + 1)  (u_{1}  u + 1),\\
\tag{$\bt{1}{2}$}\label{eq100}
u_t=&u(u_1+u_2)-(u_{-1}+u_{-2})u,\\
\tag{$\bt{2}{2}$}\label{eq136}
u_t=&uu_1u_2-u_{-2}u_{-1}u,\\
\tag{$\bt{3}{2}$}\label{eq132A}
u_t=&u  u_{1}  u_{2}-u_{-2}  u_{-1}  u+ u ^2  u_{1} - u_{-1}  u ^2 +
 u  (u_{-1}-u_{1})  u ,\\
\tag{$\bt{4}{2}$}\label{eq132B}
u_t=&u  (u + \alpha)
  u_{1}  (u_{2} + \alpha) - (u_{-2} + \alpha)
  u_{-1}  (u + \alpha)  u\\ \nonumber
   & + (u + \alpha)  u_{-1}
  u  (u_{1} + \alpha) - (u_{-1} + \alpha)  u
  u_{1}  (u + \alpha),\\
\tag{$\bt{5}{2}$}\label{eq144}
u_t=&u  u_{1}  u_{2}  (u+\alpha ) - ( u+\alpha)
  u_{-2}  u_{-1}  u,\\
\tag{$\bt{6}{2}$}\label{eq147A}
u_t=&(u+u_{-1})(u+u_1)(u_1+u_2)-(u_{-2}+u_{-1})(u+u_{-1})(u+u_1),\\
\tag{$\bt{7}{2}$}\label{eq147}
u_t=&(uu_{-1}+\alpha)(uu_1+\alpha)uu_1u_2-u_{-2}u_{-1}u(u_{-1}u+\alpha)(u_1u+\alpha),\\
\tag{$\bt{8}{2}$}\label{eq148}
u_t=&(uu_{-1}+\alpha)(u_1u+\alpha)(u_2u_1+\alpha)u-u(u_{-1}u_{-2}+\alpha)(uu_{-1}+\alpha)(u_1u+\alpha).
\end{align}
\end{subequations}
\end{The}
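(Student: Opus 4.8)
The plan is to split the argument, following the overall strategy announced above, into two logically independent halves: first showing that every equation on the list is genuinely integrable, and then showing that the list is complete among right-hand sides obeying (i)--(iv).

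For the integrability half I would treat the equations one at a time and, for each, exhibit either a Lax representation $L_t=[A,L]$ or an explicit difference substitution reducing it to an equation already known to be integrable. The stretched and modified Volterra members \eqref{Volterra2}, \eqref{mVolterra21}, \eqref{mVolterra22} are $2$-stretchings of the order-$(-1,1)$ equations classified in Theorem \ref{class1}, so their Lax pairs follow by replacing $S$ with $S^2$ in the pairs given there; similarly I would recover the $\alpha$-dependent member \eqref{eq144} as the $n=2$ case of \eqref{BGnewint}. The remaining Bogoyavlensky-type equations should carry Lax operators assembled from products of the $u_k$ and powers of $S$, with $\alpha$-dependent $L$ and $A$ in the cases where the free constant appears, all verified by direct substitution; the equations not falling into these families I would connect to the already-treated ones via Miura-type transformations.

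For the completeness half I would run the recursive computation of the canonical formal recursion operator. First, parametrize the admissible right-hand side: skew-symmetry (iii) together with (i) pins the linear symbol to $\omega(\xi_1)=c_1(\xi_1-\xi_1^{-1})+c_2(\xi_1^2-\xi_1^{-2})$ with $c_2\neq 0$, while quasi-linearity (ii) forces every nonlinear term of $f$ to be at most linear, jointly, in the extreme generators $u_{-2}$ and $u_2$, reducing $f$ to a finite linear combination of admissible monomials with undetermined constants. Since $\deg_+(\omega(\eta))=2$, Theorem \ref{themain} requires each coefficient $\psi_{pq}$ of the canonical operator to lie in $\check{\A}^{(2)}_s((\eta))$. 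Using the recursion \eqref{frop1001}--\eqref{froppq} I would compute the $\psi_{pq}$ successively from $p+q=1$ upward and, at each stage, extract the quasi-locality constraints: those irreducible factors of the denominator $G^{\omega}$ that are \emph{not} of the form $\Theta_2(\xi_1\cdots\xi_p\zeta_1\cdots\zeta_q)$ must divide the numerator, and each such divisibility requirement translates into polynomial equations in the undetermined constants. Solving this system, branching over the vanishing or non-vanishing of the parameters (these branches are what produce the free constant $\alpha$ in several families), gives the full list of candidates; finally the equivalence group---transposition $T$ and the affine changes $u_k\mapsto\alpha u_k+\beta$, $t\mapsto\gamma t$---reduces candidates to normal forms, and condition (iv) discards those lying in the hierarchy of a lower-order seed.

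The hardest part will be the completeness half, and within it the rapid growth of the intermediate symbolic expressions. As the preceding example already illustrates, the decisive constraints usually do not surface at the lowest orders, where $\psi_{10}$ and $\psi_{01}$ may be quasi-local for all parameter values, but only in higher coefficients such as $\psi_{03}$, whose numerator and denominator are bulky polynomials in $\eta$, the $\xi_i$ and the $\zeta_j$; correctly isolating the relevant non-$\Theta_2$ factor and evaluating the resulting cancellation condition is where the genuine work resides. A secondary difficulty is the bookkeeping of the case split: one must follow every parameter branch to completion, ensure no candidate is lost, and avoid double-counting under the equivalence relation. In practice these computations are carried out and cross-checked with computer algebra, computing just enough coefficients $\psi_{pq}$ to fix all constants uniquely.
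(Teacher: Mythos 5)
Your integrability half matches the paper's: the paper likewise proves integrability equation-by-equation via Lax pairs (e.g.\ for \eqref{eq30} and \eqref{eq144}) and Miura-type substitutions to stretched Volterra and Bogoyavlensky chains, plus non-autonomous transformations $u_k\to(-1)^ku_k$, $u_k\to \i^k u_k$ for the ${\rm V'}$-labelled equations. Your completeness half also starts the same way (parametrize $f$, impose quasi-locality of the canonical formal recursion operator via \eqref{frop1001}--\eqref{froppq}, use the scaling group, discard non-seeds). However, it contains a genuine gap.

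The gap is your claim that conditions (i)--(iii) reduce $f$ ``to a \emph{finite} linear combination of admissible monomials with undetermined constants.'' This is false: quasi-linearity only restricts the joint degree in the extreme generators $u_{-2},u_2$, while the polynomial degree in $u_{-1},u,u_1$ is unbounded, so the ansatz space is infinite-dimensional --- indeed the list itself contains equations of degree $7$ such as \eqref{eq147} and \eqref{eq148}. Consequently, solving a finite system of polynomial equations in finitely many constants, as you propose, cannot by itself establish completeness: you would never know that computing ``just enough coefficients $\psi_{pq}$'' has fixed \emph{all} of the (infinitely many) undetermined coefficients, nor that no candidate of higher degree has been missed. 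The paper closes exactly this hole with Proposition \ref{prounique}: two integrable equations of this class whose linear, quadratic and cubic parts coincide must coincide entirely. This lets the argument split into Step 1 (quasi-locality of $\psi_{pq}$, $p+q\le 3$, constrains only the finitely many constants in $a_2,a_3,a_4$ and yields a finite list of $3$-approximately integrable candidates modulo scaling) and Step 2 (for each candidate, quasi-locality of $\psi_{pq}$, $p+q>3$, \emph{uniquely reconstructs} the higher-degree terms $a_{p+q}$ one by one, and one verifies that this recursively generated sequence truncates at degree $13$). Without Proposition \ref{prounique}, or some equivalent uniqueness/termination statement, your procedure is an open-ended search rather than a proof of completeness; this is the key missing idea in your proposal.
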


{\it Proof of Integrability}.
The list of integrable equations of order $(-2,2)$ naturally splits into four sublists: equations of  the stretched Volterra type (List 1), labelled with letter ``V'';  equations related by nonautonomous transformations to symmetries of Volterra type chains of order $(-1,1)$ in Theorem \ref{class1}  (List 2), labelled with ``${\rm V'}$''; equations of the relativistic Toda type (List 3), labelled with ``T''; and equations of the Bogoyavlensky type (List 4), labelled with ``B''.

\begin{itemize}
\item[List 1:] \underline{Equations (\ref{Volterra2}), (\ref{mVolterra21}) and (\ref{mVolterra22})} are the stretched versions ($u_k\to u_{2k},\,k\in\Z$) of (\ref{Volterra1}), (\ref{mVolterra11}) and (\ref{mVolterra12}).

\underline{Equations (\ref{eq45A}), (\ref{eq80A})} are related to the stretched Volterra equation
$$
w_t=ww_2-w_{-2}w
$$
via a Miura transformations
$$
\mbox{(\ref{eq45A})}:\,w=uu_1,\quad \mbox{(\ref{eq80A})}:\, w=u+u_1.
$$

\underline{Equations (\ref{eq81}) and (\ref{eq80B})} are related to stretched Volterra equations
$$
w_t=\left(w-\frac{\alpha}{2}\right)w_2\left(w+\frac{\alpha}{2}\right)-
\left(w+\frac{\alpha}{2}\right)w_{-2}\left(w-\frac{\alpha}{2}\right)
$$
and
$$
w_t=\left(w^2-\frac{\alpha^2}{4}\right)w_2-\left(w^2-\frac{\alpha^2}{4}\right)w_{-2}
$$
correspondingly via a Miura transformation
$
w=\frac{\alpha}{2}+uu_1.
$

\item[List 2:] \underline{Equation (\ref{eq47})} is related to a symmetry of the Volterra equation (\ref{Volterra1}) via a  non-autonomous transformation $u_k\to (-1)^ku_k,\,k\in\Z$.

\underline{Equation (\ref{eq79A})} is related to equation (\ref{eq47})
$$
w_t=ww_1w_2-w_{-2}w_{-1}w+w(w-w_1)w_1-w_{-1}(w-w_{-1})w
$$
via a Miura transformation $w=u+u_1$.

\underline{Equation (\ref{eq74})} is related to a symmetry of the equation (\ref{mVolterra12}) with $\alpha=0$ via a non-autonomous transformation $u_k\to \i^ku_k,\,k\in\Z$.

\underline{Equation (\ref{eq45B})}: A symmetry of the non-autonomous equation $$u_{nt}=(u_n^2-\alpha^2 \i^{2n})u_{n+1}-u_{n-1}(u_n^2-\alpha^2 \i^{2n})$$ is
\begin{eqnarray*}
u_{nt}&=&(u_n ^2 - \alpha^2 \i^{2n})   (u_{n+1}^2 + \alpha^2 \i^{2n})   u_{n+2} -
 u_{n-2}   (u_{n-1}^2 + \alpha^2 \i^{2n})   (u_{
      n}^2 - \alpha^2 \i^{2n})\\
 &+& (u_n ^2 -  \alpha^2 \i^{2n})   u_{n+1}   u_n
    u_{n+1} - u_{n-1}   u_n    u_{n-1}   (u_n^2 -  \alpha^2 \i^{2n}) \\
    &+& u_n
    u_{n-1}   (u_n^2 - \alpha^2 \i^{2n})   u_{n+1} -
   u_{n-1}   (u_n^2 -  \alpha^2 \i^{2n})   u_{n+1}   u_n .
\end{eqnarray*}
Under the transformation $u_n\to u_n \i^n$, it becomes the equation (\ref{eq45B}).

\item[List 3:] \underline{Equation (\ref{eq30})} possesses the following Lax representation:
$$
L=P^{-1}Q,\quad P=(1-S^{-2})u_1^{-1},\quad Q=\left((uu_1+1)(u_2u_1+1)S^2-1\right)u_{-1}^{-1}
$$
$$
A=(u_1u+1)(u_1u_2+1)S^2-u_{-1}\left((uu_1+1)(u_2u_1+1)-1\right)u_1^{-1}-S^{-2}.
$$
(Here we assume that every element $u_k,\,k\in\Z$, is invertible, and we consider the equation and the corresponding structures on a factor algebra $\A/\langle u_ku_k^{-1}-1,\,u_{k}^{-1}u_k-1\rangle,\,k\in\Z$). The abelian version of this equation was found and studied in \cite{GMY14}.

\item[List 4:]  \underline{Equations (\ref{eq100}) and (\ref{eq136})} are additive and multiplicative versions of order $(-2,2)$ of the nonabelian Bogoyavlensky chains \cite{BG}:
\begin{eqnarray}
\label{BGaddn}
u_t&=&u(\sum_{k=1}^nu_k)- (\sum_{k=1}^nu_{-k})u,\\
\label{BGmultn}
u_t&=&uu_1\cdots u_n-u_{-n}\cdots u_{-1}u
\end{eqnarray}
For completeness we present here their Lax representations:
\begin{eqnarray*}
&\mbox{Equation (\ref{BGaddn})}&:\, L=uS^n+\lambda S^{-1},\quad A=-\sum_{k=1}^nu_{-k}-\lambda S^{-n-1},\\
&\mbox{Equation (\ref{BGmultn})}&:\, L=u S+\lambda S^{-n},\quad A=\lambda^{-1}uu_1\cdots u_nS^{n+1}.
\end{eqnarray*}

\underline{Equations (\ref{eq132A}) and (\ref{eq132B})} are related to the additive nonabelian Bogoyavlensky chain
$$
w_t=w(w_1+w_2)-(w_{-1}+w_{-2})w
$$
via Miura type transformations
$$
\mbox{(\ref{eq132A})}:\,w=uu_1,\quad \mbox{(\ref{eq132B})}:\, w=(u+\alpha)u_1u_2.
$$

\underline{Equation (\ref{eq144})} is a 2-relative of the family of the nonabelian Bogoyavlensky chain, whose 1-relative is \ref{mVolterra12}:
\begin{equation}
\label{BGmult2n}
u_t=uu_1\cdots u_n(u+\alpha)-(u+\alpha)u_{-n}\cdots u_1u.
\end{equation}
The corresponding Lax representation is
$$
L=(u+\alpha)S^{-n}+\lambda uS,\quad A=uu_1\cdots u_n(1+\lambda S^{n+1}).
$$
The reduction $\alpha=0$ of this equation appears in \cite{BG}.

\underline{Equations (\ref{eq147A}) and (\ref{eq147}) } are related to the multiplicative nonabelian Bogoyavlensky chain
$$
w_t=ww_1w_2-w_{-2}w_{-1}w
$$
via Miura transformations
$$
\mbox{(\ref{eq147A})}:\, w=u+u_1,\quad  \mbox{(\ref{eq147})}:\, w=(uu_1+\alpha)u_2.
$$

\underline{Equation (\ref{eq148})}: Miura type transformation $w=u_1u+\alpha$  relates this equation to
$$
w_t=ww_1w_2(w-\alpha)-(w-\alpha)w_{-2}w_{-1}w,
$$
which is equivalent upon $\alpha\to-\alpha$ to equation (\ref{eq144}). $\hfill \square$
\end{itemize}

\begin{The} \label{class3} Every integrable nonabelian D$\Delta$E
$$
u_t=f(u_{-3}, u_{-2}, u_{-1}, u, u_1, u_2, u_3),
$$
satisfying conditions {\rm (i)-(iv) ($n=3$)}, is equivalent to one of the following equations:
\begin{subequations}
\begin{align}
\tag{$\vt{1}{3}$}\label{strV31}
u_t=&uu_3-u_{-3}u,\\
\tag{$\vt{2}{3}$}\label{strV32}
u_t=&(u^2-\alpha^2)u_3-u_{-3}(u^2-\alpha^2),\\
\tag{$\vt{3}{3}$}\label{strV33}
u_t=&u u_3(u+\alpha)-(u+\alpha)u_{-3}u,\\
\tag{$\vt{4}{3}$}\label{strV3}
u_t=&uu_{-1}u_1u_3-u_{-3}u_{-1}u_1u,\\
\tag{$\vt{5}{3}$}\label{eq702V3}
u_t=&uu_1u_2u_3-u_{-3}u_{-2}u_{-1}u-u(u_1u_2-u_{-2}u_{-1})u,\\
\tag{$\vt{6}{3}$}\label{eq32AV3}
u_t=&(u + u_{-1} + u_{-2})u_3 - u_{-3} (u+ u_{1} + u_{2}) +
  u (u_1 + u_2) - (u_{-1} + u_{-2}) u,\\
\tag{$\vt{7}{3}$}\label{eq32V3}
u_t=&(u   u_{-2}   u_{-1} + \alpha)   u   u_{1}   u_{2}   u_{3} -
 u_{-3}   u_{-2}   u_{-1}
  u   (u_{1}   u_{2}   u + \alpha) - \alpha u   (u_{1}   u_{2} - u_{-2}   u_{-1})   u,\\
\tag{$\vt{8}{3}$}\label{eq597V3}
u_t=&u   u_{1}   u_{2}
   u_{3}   (u_{-2}   u_{-1}   u + \alpha) - (u   u_{1}   u_{2} + \alpha)
    u_{-3}   u_{-2}   u_{-1}   u -
  \alpha u(u_{1}   u_{2} - u_{-2}   u_{-1})u,\\
\tag{$\bt{1}{3}$}\label{BGV3}
u_t=&u   (u_{1} + u_{2} + u_{3}) - (u_{-1} + u_{-2} + u_{-3})   u,\\
\tag{$\bt{2}{3}$}\label{BGVm3}
u_t=&uu_1u_2u_3-u_{-3}u_{-2}u_{-1}u,\\
\tag{$\bt{3}{3}$}\label{BGV23}
u_t=&u   u_{1}   u_{2}   u_{3} - u_{-3}   u_{-2}   u_{-1}   u +
 C_u\left( u u_{1}   u_{2}+u_{-1}   u   u_{1}+u_{-2}   u_{-1} u \right) ,\\
\tag{$\bt{4}{3}$}\label{BGV13}
u_t=&u   u_{1}   u_{3} - u_{-3}   u_{-1}   u +
 u^2 u_{2} - u_{-2} u^2 +C_u \left(u_{-1}u_1\right),\\
\tag{$\bt{5}{3}$}\label{BGV33}
u_t=&(u + \alpha)   u   u _{1}   u _{2}   u _{3} -
 u _{-3}   u _{-2}   u _{-1}   u   (u + \alpha) +
 u   (u _{-1} + \alpha)   u   u _{1}   u _{2} - (u _{-1} + \alpha)   u
  u _{1}   u _{2}   u \\ \nonumber &-
 u _{-2}   u _{-1}   u   (u _{1} + \alpha)   u +
 u   u _{-2}   u _{-1}   u   (u _{1} + \alpha) +
 \alpha (u   u _{-1}   u   u _{1} - u _{-1}   u   u_{1}   u),\\
\tag{$\bt{6}{3}$}\label{BGV43}
u_t=&uu_1u_2u_3(u+\alpha)-(u+\alpha)u_{-3}u_{-2}u_{-1}u,\\
\tag{$\bt{7}{3}$}\label{BGV53}
u_t=&uu_{-1}u_1uu_2u_1u_3-u_{-3}u_{-1}u_{-2}uu_{-1}u_1u,\\
\tag{$\bt{8}{3}$}\label{BGV63}
u_t=&(u_{-2}+u_{-1}+u)(u_{-1}+u+u_1)(u+u_1+u_2)(u_1+u_2+u_3)\\ \nonumber &-(u_{-3}+u_{-2}+u_{-1})(u_{-2}+u_{-1}+u)(u_{-1}+u+u_1)(u+u_1+u_2),\\
\tag{$\bt{9}{3}$}\label{BGV73}
u_t=&(uu_{-2}u_{-1}+\alpha)(uu_1u_{-1}+\alpha)(uu_1 u_2+\alpha)uu_1u_2u_3\\ \nonumber&-u_{-3}u_{-2}u_{-1}u(u_{-2}u_{-1}u+\alpha)(u_1u_{-1} u+\alpha)(u_1 u_2 u+\alpha),\\
\tag{$\bt{10}{3}$}\label{BGV83}
u_t=&(uu_{-1}u_{-2}+\alpha)(u_1uu_{-1}+\alpha)(u_2u_{1}u+\alpha)(u_3u_{2}u_{1}+\alpha)u\\ \nonumber &-u(u_{-1}u_{-2}u_{-3}+\alpha)(uu_{-1}u_{-2}+\alpha)(u_1uu_{-1}+\alpha)(u_2u_{1}u+\alpha) ,
\end{align}
\end{subequations}
where $C_u=\cL_u-\cR_u$.
\end{The}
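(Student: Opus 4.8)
The plan is to follow the same two-stage strategy already carried out for Theorems \ref{class1} and \ref{class2}: first extract the complete list of candidates from the necessary integrability conditions of Theorem \ref{themain}, and then establish the integrability of each survivor by exhibiting a Lax representation or a Miura substitution into an already-integrable chain. I would begin by writing down the most general admissible right-hand side. Condition (ii) restricts every monomial of $f$ to total degree at most one in the pair $\{u_{-3},u_3\}$, so $f$ splits into a part linear in $u_3$, a part linear in $u_{-3}$, and a part free of $u_{\pm3}$, each with coefficients polynomial in $u_{-2},\ldots,u_2$; the skew-symmetry condition (iii), $\cT(f)=-f$, then determines the $u_{-3}$-part from the $u_3$-part and pairs up the remaining monomials, roughly halving the free constants. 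By the remark following the classification conditions the linear part has symbol $\omega(\xi_1)=\sum_{i=1}^3 c_i(\xi_1^i-\xi_1^{-i})$ with $c_3\ne 0$. This yields a finite family of candidate equations depending on finitely many constants, whose symbolic representation $\hat f=\hu\omega(\xi_1)+\sum_{p\ge 2}\hu^p a_p(\xi_1,\ldots,\xi_p)$ I would record explicitly.

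Next I would apply Theorem \ref{fropthe} with $\h(\eta)=\eta$ to generate the coefficients $\psi_{pq}$ of the canonical formal recursion operator, starting from $\psi_{10},\psi_{01}$ via \eqref{frop1001} and continuing through \eqref{froppq} in increasing order of $p+q$. By Theorem \ref{themain}, integrability forces each $\hu_l^p\hu_r^q\psi_{pq}$ to be quasi-local; concretely, the denominators $G^{\omega}$ in \eqref{frop1001}--\eqref{froppq} carry factors $\Theta_N$ of products of the variables $\xi_i,\zeta_j$, and quasi-locality demands that these cancel against the numerators in the expansion at $\eta\to\infty$. Imposing these cancellations term by term produces a cascade of polynomial equations on the undetermined constants, exactly as in the worked $(0,1)$ and $(-2,2)$ Examples. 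Solving this cascade should collapse the family to precisely the equations listed, which (as the $V$ and $B$ labels indicate) fall naturally into a Volterra-type family and a Bogoyavlensky-type family. Condition (iv) is then used to discard the survivors that already occur as higher symmetries of the $(-1,1)$ and $(-2,2)$ seeds, leaving only the genuine order-$(-3,3)$ seeds.

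The principal obstacle is the sheer scale of the symbolic computation. The order-$(-3,3)$ ansatz carries many more parameters than the $(-2,2)$ case, the coefficients $\psi_{pq}$ become large rational functions, and the decisive obstructions surface only at fairly high $p+q$ — recall that in the $(-2,2)$ classification the first genuine constraint appeared only at $\psi_{03}$ — so one must push the recursion far enough to separate every branch and exclude spurious candidates. In practice this is handled by computer algebra, and once the list is fixed the completeness argument is uniform in $n=1,2,3$ and may be stated once for all three theorems.

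For the integrability (sufficiency) direction I would treat each listed equation in turn, as after Theorem \ref{class2}. The $V$-labelled equations, e.g. (\ref{strV31}), reduce either by the stretching $u_k\to u_{3k}$ or by Miura substitutions such as $w=uu_1u_2$ and $w=u+u_1+u_2$ to lower-order Volterra-type chains whose integrability is already established. The $B$-labelled equations, e.g. (\ref{BGV3}) and (\ref{BGV83}), are either the order-$3$ members of the additive, multiplicative and shifted Bogoyavlensky chains (\ref{BGaddn}), (\ref{BGmultn}), (\ref{BGmult2n}) with the Lax pairs already displayed, or are tied to them through Miura transformations of the form $w=u+u_1+u_2$, $w=uu_1u_2$ and products of the type $w=(uu_{-1}u_{-2}+\alpha)$. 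Exhibiting one such reduction or Lax pair per equation then completes the proof.
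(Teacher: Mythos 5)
Your overall two-stage strategy (necessary conditions via quasi-locality of the canonical formal recursion operator, then sufficiency via Lax pairs or Miura substitutions) is indeed the paper's strategy, and your sufficiency discussion matches what the paper does equation by equation. But there is a genuine gap in your completeness argument, and it starts with the sentence ``This yields a finite family of candidate equations depending on finitely many constants.'' It does not. Condition (ii) only bounds the joint degree in $u_{\pm 3}$; the coefficients multiplying $u_3$ and $u_{-3}$, and the part free of $u_{\pm 3}$, are arbitrary polynomials in $u_{-2},\ldots,u_2$ of \emph{unbounded} degree. The ansatz is therefore infinite-dimensional --- and necessarily so, since the listed equations themselves (e.g.\ (\ref{BGV73}), (\ref{BGV83})) have degree $13$. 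Consequently your ``cascade of polynomial equations on the undetermined constants'' is an infinite system in infinitely many unknowns, and ``solving this cascade'' is not a finite procedure: no matter how far you push the recursion in $p+q$, you cannot by brute force rule out that some admissible tail of arbitrarily high degree survives. This is exactly the point where the argument, as you have set it up, fails to close.

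The missing ingredient is Proposition \ref{prounique}: two integrable equations of this class with the same linear part $u_3-u_{-3}+\alpha_1(u_1-u_{-1})+\alpha_2(u_2-u_{-2})$ and the same quadratic and cubic terms coincide identically. This uniqueness statement is what converts the classification into a finite computation. In the paper's Step 1, quasi-locality of the coefficients $\h_{pq}$ with $p+q\le 3$ involves only the constants entering $a_2,a_3,a_4$, and modulo the rescaling group its solution set is finite, yielding the possible low-degree truncations ($3$-approximately integrable equations). Step 2 then invokes Proposition \ref{prounique}: for each such truncation the higher-degree terms $\hu^l a_l(\xi_1,\ldots,\xi_l)$, if an integrable completion exists at all, are \emph{uniquely} determined, and the quasi-locality conditions at $p+q>3$ reconstruct them recursively (imposing further constraints on the constants), with the resulting sequence truncating at $l=13$. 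Without this uniqueness-and-reconstruction step your proposal establishes only that the listed equations pass the necessary conditions up to whatever order you compute, not that the list is complete. The rest of your outline --- the use of condition (iv) to retain only seeds, and the integrability proofs via stretching $u_k\to u_{3k}$, the Miura maps $w=uu_1u_2$, $w=u+u_1+u_2$, etc., and the Bogoyavlensky Lax pairs (\ref{BGaddn}), (\ref{BGmultn}), (\ref{BGmult2n}) --- agrees with the paper.
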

{\it Proof of Integrability}. The list of integrable equations of order $(-3,3)$ splits into two sublists: equations of the Volterra type (List 1) and equations of the Bogoyavlensky type (List 2).

\begin{itemize}
\item[List 1:] \underline{Equations (\ref{strV31}), (\ref{strV32}) and (\ref{strV33})} are the stretched versions ($u_k\to u_{3k},\,k\in\Z$) of (\ref{Volterra1}), (\ref{mVolterra11}) and (\ref{mVolterra12}).

\underline{Equations (\ref{strV3}), (\ref{eq702V3}), (\ref{eq32AV3}), (\ref{eq32V3})} are related to the stretched Volterra chain
$$
w_t=ww_3-w_{-3}w
$$
via Miura transformations
$$
\begin{array}{ll}
 \mbox{(\ref{strV3})}:\, w=u_{-1}u_1u_3,\quad & \mbox{(\ref{eq702V3})}:\, w=uu_1u_2,\\
 \mbox{(\ref{eq32AV3})}:\, w=u+u_1+u_2,\quad & \mbox{(\ref{eq32V3})}:\, w=u_{-3}u_{-2}u_{-1}(uu_1u_2+\alpha).
\end{array}
$$
\underline{Equation (\ref{eq597V3})}: transformation $w=-\frac{\alpha}{2}-u_1u_2u_3$ brings the equation (\ref{eq597V3}) to
$$
w_t=\left(w+\frac{\alpha}{2}\right)w_3 \left(w-\frac{\alpha}{2}\right)-\left(w-\frac{\alpha}{2}\right)w_{-3} \left(w+\frac{\alpha}{2}\right),
$$
which is equivalent to equation (\ref{strV33}).

\item[List 2:] \underline{Equations (\ref{BGV3}) and (\ref{BGVm3})} are additive and multiplicative versions of order $(-3,3)$ of the nonabelian Bogoyavlensky chains (\ref{BGaddn}) and (\ref{BGmultn}).

\underline{Equations (\ref{BGV23}), (\ref{BGV13}), (\ref{BGV33})} are related to the additive nonabelian Bogoyavlensky chain of order $(-3,3)$:
$$
w_t=w(w_1+w_2+w_3)-(w_{-1}+w_{-2}+w_{-3})w.
$$
via Miura transformations
$$
\mbox{(\ref{BGV23})}:\, w=uu_1u_2,\quad \mbox{(\ref{BGV13})}:\, w=uu_2,\quad \mbox{(\ref{BGV33})}:\, w=uu_1u_2(u_3+\alpha).
$$

\underline{Equation (\ref{BGV43})} is $3$-relative of a family of multiplicative nonabelian Bogoyavlensky chains (\ref{BGmult2n}).

\underline{Equations (\ref{BGV53}), (\ref{BGV63}), (\ref{BGV73})}
are related to the multiplicative nonabelian Bogoyavlensky chain of order $(-3,3)$:
$$
w_t=ww_1w_2w_3-w_{-3}w_{-2}w_{-1}w
$$
via Miura transformations
$$
\mbox{(\ref{BGV53})}:\, w=uu_2,\quad \mbox{(\ref{BGV63})}:\, w=u+u_1+u_2,\quad \mbox{(\ref{BGV73})}:\, w=(uu_1 u_2+\alpha)u_3.
$$

\underline{Equation (\ref{BGV83})}: Miura type transformation $w=u_2u_1 u+\alpha$ brings the equation (\ref{BGV83}) to
$$
w_t=w w_1 w_2 w_3 (w-\alpha)-(w-\alpha) w_{-3} w_{-2} w_{-1} w,
$$
which is equivalent upon $\alpha\to-\alpha$ to equation (\ref{BGV43}).
$\hfill\square$
\end{itemize}


The proof of the classification for  Theorems \ref{class1}-\ref{class3} relies on the following statement:

\begin{Pro}\label{prounique}
Let
$$
u_t=\sum_{p\ge 1}f_p,\quad f_i\in\A_i\quad\mbox{and}\quad u_t=\sum_{p\ge 1}\tilde{f}_p,\quad\tilde{f}_i\in\A_i
$$
be two integrable D$\Delta$Es, such that
\begin{itemize}
\item The linear terms are $f_1=\tilde{f}_1=u_n-u_{-n}+\sum_{k=1}^{n-1}\alpha_k (u_k-u_{-k}),\quad\alpha_k\in\C,\quad n=1,2,3$,
\item The quadratic and cubic terms coincide: $f_2=\tilde{f}_2,\,\,f_3=\tilde{f}_3$.
\end{itemize}
Then the equations coincide, i.e. $f_p=\tilde{f}_p,\quad \forall p\in\N$.
\end{Pro}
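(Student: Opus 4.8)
The plan is to play the canonical formal recursion operators of the two equations off against each other, exploiting the quasi-locality supplied by Theorem~\ref{themain}. Write $\hat f=\hu\omega(\xi_1)+\sum_{p\ge2}\hu^pa_p$ and $\hat{\tilde f}=\hu\omega(\xi_1)+\sum_{p\ge2}\hu^p\tilde a_p$; by hypothesis $\omega$ is skew-symmetric with $\deg_+\omega=n$, and $a_2=\tilde a_2$, $a_3=\tilde a_3$. Both equations are integrable, so by Theorem~\ref{themain} their canonical recursion operators $\Lambda=\eta+\sum\hu_l^p\hu_r^q\psi_{pq}$ and $\tilde\Lambda=\eta+\sum\hu_l^p\hu_r^q\tilde\psi_{pq}$ are quasi-local. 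Assume $f\ne\tilde f$ and let $m\ge4$ be minimal with $a_m\ne\tilde a_m$. A first induction on $p+q$, reading off \eqref{frop1001}--\eqref{froppq}, shows that $\psi_{pq}=\tilde\psi_{pq}$ for all $p+q\le m-2$: the coefficient $\psi_{pq}$ is built from $a_{p+q+1}$ (index $\le m-1$) together with the strictly lower data $\{a_k:k\le p+q\}$ and $\{\psi_{ij}:i+j<p+q\}$, all of which coincide once $p+q\le m-2$.

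Next I would subtract the two copies of \eqref{froppq} at level $p+q=m-1$. Every term except the leading one depends only on $a_k$ with $k\le m-1$ and on $\psi_{ij}$ with $i+j\le m-2$, hence cancels; only the leading term $(\h(W\eta)-\h(\eta))\,a_{p+q+1}$, with $\h(\eta)=\eta$, survives. Writing $W=\xi_1\cdots\xi_p\zeta_1\cdots\zeta_q$ and $b=a_m-\tilde a_m$, this gives
\begin{equation*}
\psi_{pq}-\tilde\psi_{pq}=\frac{\eta\,(W-1)\,b}{G^{\omega}(\xi_1,\ldots,\xi_p,\zeta_1,\ldots,\zeta_q,\eta)},
\end{equation*}
and the left-hand side is quasi-local. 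The entire statement then reduces to the claim that, for $p+q\ge3$, quasi-locality of the right-hand side forces $b=0$.

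To establish the claim I would expand in $\eta\to\infty$. Since $\omega(\xi)=\sum_ic_i(\xi^i-\xi^{-i})$, each coefficient of a nonzero power of $\eta$ in $G^\omega$ equals $c_j(W^j-1)$ and is divisible by $W-1$; hence $G^\omega=(W-1)E+D_0$, where $E$ has top $\eta$-coefficient $c_n\Theta_n(W)$ and $D_0=-\sum_i\omega(\xi_i)-\sum_j\omega(\zeta_j)$ is $\eta$-free. Expanding $1/G^\omega$ geometrically in $D_0/((W-1)E)$ yields
\begin{equation*}
\frac{\eta(W-1)b}{G^{\omega}}=\eta\,b\sum_{k\ge0}\frac{(-D_0)^k}{(W-1)^kE^{k+1}}.
\end{equation*}
If $b\ne0$, let $r\ge0$ be its order of vanishing along the irreducible locus $\{W=1\}$ and set $b=(W-1)^r\tilde b$. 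At the highest power of $\eta$ reached by the summand $k=r+1$, no other summand contributes a pole in $W-1$, while this one produces a simple pole whose residue, proportional to $\tilde b|_{W=1}\,(D_0|_{W=1})^{r+1}$, is nonzero on the irreducible locus $\{W=1\}$. Here I would use that, eliminating one variable through $W=1$ and invoking skew-symmetry, $D_0|_{W=1}$ equals $G^\omega$ of the $p+q-1$ remaining variables in the sense of \eqref{Gw}, which is not identically zero precisely when $p+q-1\ge2$. Because quasi-local coefficients carry only denominators $\Theta_N(\sigma)$ with $\sigma$ monomials, none of which vanishes identically on $\{W=1\}$ (as $\Theta_N(1)=N\ne0$), every such coefficient is regular along the generic point of $\{W=1\}$. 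The residue is thus an uncancellable obstruction, forcing $b=0$ and contradicting $a_m\ne\tilde a_m$; therefore $f=\tilde f$.

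The delicate point, and the step I expect to be the main obstacle, is the last claim: that admissible quasi-local denominators are regular along the generic point of $\{W=1\}$, combined with the bookkeeping that localizes the leading pole to the single index $k=r+1$. This is also where the hypotheses turn out to be sharp. At level $p+q=2$, which governs the cubic term, the same computation gives $D_0|_{W=1}=\omega(\xi_1)+\omega(\xi_1^{-1})=0$ by skew-symmetry, the obstruction evaporates, and $a_3$ is genuinely free; this is exactly why the proposition must prescribe $f_3$ in addition to $f_2$.
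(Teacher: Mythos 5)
Your opening moves are sound and match the route the paper itself intends (the paper defers to the abelian Proposition 8 of \cite{MNWDiff}): the induction giving $\psi_{pq}=\tilde\psi_{pq}$ for $p+q\le m-2$, the difference formula $\psi_{pq}-\tilde\psi_{pq}=\eta(W-1)b/G^{\omega}$ at level $p+q=m-1$ with $b=(a_m-\tilde a_m)(\xi_1,\ldots,\xi_p,\eta,\zeta_1,\ldots,\zeta_q)$, the fact that admissible quasi-local denominators $\Theta_N(\sigma)$ never vanish identically on $\{W=1\}$, and the observation that $D_0|_{W=1}$ reproduces $G^{\omega}$ of the remaining variables (which is indeed why $f_3$ must be prescribed) are all correct. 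The fatal step is the central claim to which you reduce everything: that quasi-locality of $\eta(W-1)b/G^{\omega}$ forces $b=0$ once $p+q\ge3$. This claim is false. Take $b=c\,G^{\omega}(\xi_1,\ldots,\xi_p,\eta,\zeta_1,\ldots,\zeta_q)$, $c\in\C^{*}$: since $G^{\varkappa}$ in \eqref{Gw} is symmetric in all of its $m=p+q+1$ arguments, this $b$ equals the denominator, so $\psi_{pq}-\tilde\psi_{pq}=c\,\eta(W-1)$ for every splitting $p+q=m-1$ --- local, hence quasi-local, with $b\ne 0$. Nor is this choice excluded by the hypotheses: by \eqref{brom} with $\hat g=\hu^{m}$, the element $\hu^{m}G^{\omega}(\xi_1,\ldots,\xi_m)$ is precisely the symbol of $[u^{m},f_1]$, so $b=c\,G^{\omega}$ is exactly the level-$m$ discrepancy produced by conjugating $u_t=f$ with the near-identity substitution $\exp(c\,\mathrm{ad}_{u^{m}})$. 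This conjugation is an automorphism of the Lie algebra of formal series, so it carries the symmetry hierarchy of $f$ to one for the conjugated equation, leaves $f_1,\ldots,f_{m-1}$ untouched, and shifts $f_m$ by $c[u^{m},f_1]\neq 0$; in particular, were ``integrable'' read as formal integrability of formal series, this pair would contradict the Proposition outright. The upshot is that no argument using only quasi-locality of the canonical formal recursion operator can ever yield $b=0$: the most it can yield is $G^{\omega}\mid b$, i.e.\ $f_m-\tilde f_m\in\{[h,f_1]\,:\,h\in\A_m\}$, and eliminating this residual conjugation freedom --- which is the actual content of Proposition~\ref{prounique}, and must rest on the polynomiality built into Definition~\ref{defsym} --- is entirely missing from your proof.

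It is worth pinpointing where your residue computation breaks, because the counterexample above violates it exactly. Since $a_m$ enters with $\eta$ in its $(p+1)$-st slot, $b=\sum_j b_j(\xi,\zeta)\eta^{j}$ genuinely depends on $\eta$, and at a fixed power of $\eta$ the summand $k$ of your geometric series pairs each $b_j$ with a coefficient of the expansion of $E^{-(k+1)}$. If you take the highest power of $\eta$ reached by summand $k=r+1$, the only coefficient of $b$ it isolates is the top one, $b_{j_{\max}}$, so your residue is proportional to $\bigl(b_{j_{\max}}/(W-1)^{r}\bigr)\big|_{W=1}\,(D_0|_{W=1})^{r+1}$ rather than to $\tilde b|_{W=1}(D_0|_{W=1})^{r+1}$; it vanishes whenever $b_{j_{\max}}$ vanishes to order $>r$ along $W=1$, even though $b$ itself does not. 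If instead you pick the power of $\eta$ attached to the largest $j$ with $\mathrm{ord}_{W-1}(b_j)=r$, then summands with $k\ge r+2$ do reach that power (through coefficients $b_j$ with larger $j$ and order $\ge r+1$) and contribute simple poles that can cancel yours. Both failures happen simultaneously for $b=G^{\omega}=(W-1)E+D_0$: there $r=0$ and $\tilde b|_{W=1}=D_0|_{W=1}\neq 0$, so your argument predicts an uncancellable simple pole, yet $\eta(W-1)b/G^{\omega}=\eta(W-1)$ has no pole at all --- the pole coming from the $\eta$-free part $D_0$ of $b$ at $k=1$ is exactly cancelled by the $(W-1)E$ part of $b$ entering at $k=2$, and so on. So the step cannot be repaired by a cleverer choice of $\eta$-power: the obstruction you construct is identically zero on the whole subspace $G^{\omega}\cdot\C[\xi_1^{\pm1},\ldots,\xi_p^{\pm1},\zeta_1^{\pm1},\ldots,\zeta_q^{\pm1},\eta^{\pm1}]$, and a correct proof must handle that subspace by different means.
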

The proof of this proposition is fully analogous to the proof of Proposition 8 in \cite{MNWDiff}, and therefore we omit it.

{\it Sketch of the proof of classification for Theorems \ref{class1}-\ref{class3}}. For each $n=1,2,3$ the symbolic representation of a generic equation (\ref{eqclass}) satisfying conditions {\rm (i)-(iii)} is of the form (\ref{eqcondsymb}) with
\begin{eqnarray*}
&&\omega(\xi)=\sum_{k=1}^n\alpha_k(\xi^k-\xi^{-k}),\quad \alpha_n\ne 0,\\
&&a_p(\xi_1,\ldots,\xi_p)=\sum_{s=1}^p\sum_{\substack{i_1,\ldots,i_{s-1}=-n+1\\i_{s+1},\ldots,i_p=-n+1}}^{n-1}c_{i_1\cdots i_{s-1}ni_{s+1}\cdots i_p}\big(\xi_1^{i_1}\cdots\xi_{s-1}^{i_{s-1}}\xi_{s}^n\xi_{s+1}^{i_{s+1}}\cdots\xi_p^{i_p}\\&&\qquad 
-\xi_p^{-i_1}\cdots\xi_{p-s+2}^{-i_{s-1}}\xi_{p-s+1}^{-n}\xi_{p-s}^{-i_{s+1}}\cdots\xi_1^{-i_p}\big)+\sum_{i_1,\ldots,i_p=-n+1}^{n-1}c_{i_1\cdots i_p}(\xi_1^{i_1}\cdots\xi_p^{i_p}-\xi_1^{-i_p}\cdots\xi_p^{-i_1}),
\end{eqnarray*}
where $p\ge 2$. In particular, for $p=2$,
\begin{eqnarray*}
&&a_2(\xi_1,\xi_2)=\sum_{i_{2}=-n+1}^{n-1}c_{ni_{2}}(\xi_1^{n}\xi_2^{i_2}-\xi_1^{-i_2}\xi_2^{-n})+\sum_{i_{1}=-n+1}^{n-1}c_{i_{1}n}(\xi_1^{i_1}\xi_2^{n}-\xi_1^{-n}\xi_2^{-i_1})\\
&&\qquad\qquad+\sum_{i_1,i_2=-n+1}^{n+1}c_{i_1i_2}(\xi_1^{i_1}\xi_2^{i_2}-\xi_1^{-i_2}\xi_2^{-i_1}).
\end{eqnarray*}
The classification process splits into two steps:

{\it Step 1}. The coefficients $\phi_{pq},\,p+q\le 3$, of the canonical formal recursion operator ($\phi(\eta)=\eta$) can be found explicitly using (\ref{frop1001})-(\ref{froppq}). One can show that the terms $\hu_l^p\hu_r^q\phi_{pq},\,p+q<3$ are quasi-local for any choice of constants $\alpha_k$ and $c_{i_1i_2},\,c_{i_1i_2i_3}$. The requirement of quasi-locality of terms $\hu_l^p\hu_r^q\phi_{pq},\,p+q=3$, results in a system of polynomial equations on $\alpha_k$ and $c_{i_1i_2},\,c_{i_1i_2i_3},\,c_{i_1i_2i_3i_4}$, which are necessary integrability conditions. The action of the re-scaling group
$$u_k\to\mu u_k,\quad t\to\nu t,\quad \mu,\nu\in\C^*$$
on constants $\alpha_k$ and $c_{i_1i_2},\,c_{i_1i_2i_3},\,c_{i_1i_2i_3i_4}$ is given by
$$
\alpha_k\to \nu^{-1}\alpha_k,\quad c_{i_1\cdots i_p}\to\mu^{p-1}\nu^{-1}c_{i_1\cdots i_p}.
$$
Modulo the action of this group, the set of solutions to the system of necessary integrability conditions is finite. Thus we obtain the list of $3$-approximately integrable equations of the form (\ref{eqclass}).

{\it Step 2}. From Proposition \ref{prounique} it follows the the requirement of quasi-locality of terms $\phi_{pq},\,p+q>3$ uniquely determines the terms $\hu^{p+q}a_{p+q}(\xi_1,\ldots,\xi_{p+q})$ and imposes further restrictions on constants $\alpha_k$ and $c_{i_1i_2},\,c_{i_1i_2i_3}, \,c_{i_1i_2i_3i_4}$. The obtained sequence of terms $\hu^la_l(\xi_1,\ldots,\xi_l)$ truncates at $l=13$. The process results in the lists of equations presented in Theorems \ref{class1}-\ref{class3}. $\hfill\square$

Before concluding this section, we compare our lists to the available ones for the abelian case. Similar to nonabelian PDEs, certain integrable equations have no integrable nonabelian lift, while others have two distinct lifts. When $n=1$, we get all three known nonabelian integrable D$\Delta$Es in our considered class as listed in Theorem \ref{class1}.
For $n=2$, the classification of quasi-linear equations of order $(-2,2)$ in the form
$$
u_t=A(u_{-1},u,u_1)u_2+B(u_{-1},u,u_1)u_{-2}+C(u_{-1},u,u_1),
$$
where $A$, $B$, and $C$ are functions of their variables, has recently been carried out in \cite{YamGL1,YamGL2}. This is a wider class of equations than the one we considered. Additionally, there is no requirement on skew-symmetricity.
The authors produced complete lists of integrable equations admitting symmetries of either orders $(-n,n)$ for all $n\in\mathbb{N}$ or only even-order symmetries $(-2n,2n)$, where $n\in\mathbb{N}$.
We present the correspondences between equations in Theorem \ref{class2} to their lists using the equation labels ${\rm (E.x)}$ and ${\rm (E.x')}$ used in \cite{Yam19}:
\begin{eqnarray*}
&&\mbox{\eqref{Volterra2}}\rightarrow \E.1; \quad
 \left. \begin{array}{c}\mbox{\eqref{mVolterra21}} \\ \mbox{\eqref{mVolterra22}}\end{array}\right\} (\alpha= 0) \rightarrow \E.2;\quad
\left. \begin{array}{c}\mbox{\eqref{mVolterra21}} \\ \mbox{\eqref{mVolterra22}}\end{array}\right\} (\alpha\neq 0) \rightarrow \E.3;\quad
\mbox{\eqref{eq45A}}\rightarrow \E.1';\\
 &&\mbox{\eqref{eq80A}} \rightarrow \E.4;\quad
 \left. \begin{array}{c}\mbox{\eqref{eq81}} \\ \mbox{\eqref{eq80B}}\end{array}\right\} (\alpha\neq 0) \rightarrow \E.6;\quad
 \left. \begin{array}{c}\mbox{\eqref{eq81}} \\ \mbox{\eqref{eq80B}}\end{array}\right\} (\alpha= 0) \rightarrow \E.2';
 \quad \mbox{\eqref{eq47}}\rightarrow \E.6';\\
 && \mbox{\eqref{eq79A}} \rightarrow \E.8';\quad
 \mbox{\eqref{eq74}}\rightarrow \E.7'(\alpha=0);\quad   \mbox{\eqref{eq45B}} \rightarrow \E.7'; \quad
 \mbox{\eqref{eq30}}\rightarrow \E.13'; \quad  \mbox{\eqref{eq100}} \rightarrow \E.11;\\
  &&\!\!\! \left. \begin{array}{l}\mbox{\eqref{eq136}} \\ \mbox{\eqref{eq132A}}\end{array}\right\} \rightarrow \E.9';\quad\ \
 \left. \begin{array}{c}\mbox{\eqref{eq132B}} \\ \mbox{\eqref{eq144}}\end{array}\right\} \rightarrow \E.13; \quad\ \
   \mbox{\eqref{eq147A}}\rightarrow \E.12';\quad\ \
    \left. \begin{array}{c}\mbox{\eqref{eq147}} \\ \mbox{\eqref{eq148}}\end{array}\right\} \rightarrow \E.11'.
\end{eqnarray*}
Equations $\E.15$ and $\E.14'$ have no integrable lifts although they belong to the class considered in Theorem \ref{class2}.
When $n=3$, the class of equations considered in Theorem \ref{class3} differs from the one classified in \cite{MNWDiff}. For instance, Equations \eqref{eq32AV3}-\eqref{eq597V3} and \eqref{BGV43}-\eqref{BGV83} fall outside the classification presented in \cite{MNWDiff}. Further details are omitted.

\section{Concluding remarks}

In this paper we have developed a new approach that enables the derivation of explicit and easily verifiable necessary integrability conditions for scalar evolutionary D$\Delta$Es on associative algebras. We prove that if an equation
$$
u_t=f(u_{-n},\ldots,u_n),\quad f=\sum_{p\ge 1}f_p,\quad f_p\in\A_p
$$
possesses an infinite dimensional algebra of symmetries, then there exists a canonical formal recursion operator
$$
\Lambda=\eta+\sum_{p+q\ge 1}\hu_l^p\hu_r^q\phi_{pq}(\xi_1,\ldots,\xi_p,\eta,\zeta_1,\ldots,\zeta_q),
$$
where the coefficients are explicitly expressed in terms of the symbolic representation of the equation. The requirement of quasi-locality of all terms $\hu_l^p\hu_r^q\phi_{pq}(\xi_1,\ldots,\xi_p,\eta,\zeta_1,\ldots,\zeta_q)$ provides explicit necessary integrability conditions for the equation. Moreover, these conditions are independent on the symmetry structure of the equation.

We applied our methodology to classify integrable quasi-linear skew-symmetric equations of the form (\ref{eqclass}) of orders $(-1,1)$, $(-2,2)$, and $(-3,3)$. For each equation on the list, we presented either its Lax representation or a Miura transformation to a known integrable equation. We conjecture that each equation on the list is a member of an infinite family of integrable equations of arbitrarily high order. For example, equations (\ref{eq144}) and (\ref{BGV43}) are members of a family of multiplicative nonabelian Bogoyavlensky equations of the form (\ref{BGmult2n}). The description of such families is outside the scope of this paper.

The requirement of quasi-linearity and skew-symmetry in the classification problem may be removed.
While the classification of non-quasi-linear equations remains unresolved, the approach developed in this paper is suitable to tackle this problem.
Indeed, the derived integrability conditions are applicable in the case of a generic equation. It is worth noting that all presently known integrable nonabelian non-quasi-linear equations are related to the quasi-linear ones through either invertible or non-invertible Miura-type transformations.
This observation prompts speculation that such a relationship may universally apply to integrable non-quasi-linear equations.
Additionally, all non-skew-symmetric equations known to the authors are either connected to linear ones or to non-linearisable skew-symmetric equations. The verification (or refutation) of these hypotheses remains a challenge for further investigations.

 Integrable D$\Delta$Es serve as generalized symmetries for integrable discrete equations and as B\"acklund transformations for integrable PDEs. Although extensively studied in the abelian case since the inception of integrable systems theory, investigations in the nonabelian setting remain notably limited. The construction of nonabelian integrable discrete systems corresponding to the D$\Delta$Es presented in this paper, in particular, for the new integrable equations \eqref{eq30} and \eqref{BGmult2n}, is of significant interest.

The nonabelian equations can also be interpreted as quantised versions of classical systems. For instance, the nonabelian KdV equation was derived by transitioning classical fields to a quantum framework and substituting Poisson brackets with commutators \cite{fc95}.
Recently, a new approach to the problem of quantisation based on the notion of quantisation ideals was proposed by Mikhailov in \cite{avm20}. This
method switches the focus from deformations of Poisson manifolds to the dynamical systems defined on quantised algebras, i.e. algebras satisfying commutation relations that are compatible with the dynamics. For a detailed exploration of this direction concerning the Volterra chain, one may refer to \cite{CMW, cmw}.
In this new approach it is proposed to start from a dynamical system defined on a free associative algebra. The nonabelian integrable D$\Delta$Es presented in this paper will serve as the ideal starting point for further exploring this quantisation framework.

\section*{Acknowledgements}

The authors are grateful to Alexander Mikhailov and Evgeny Ferapontov for very fruitful and stimulating discussions while working on this paper. This paper is partially supported by the EPSRC Small Grant Scheme EP/V050451/1.

\end{document}